\documentclass{lmcs} 
\pdfoutput=1

\usepackage{lastpage}
\lmcsdoi{20}{3}{9}
\lmcsheading{}{\pageref{LastPage}}{}{}%
{Dec.~22,~2022}{Jul.~23,~2024}{}

\usepackage{hyperref}

\keywords{$\mu$-calculus, satisfiability checking, coalgebra, probabilistic transition systems, weighted transition systems, tableaux}

\usepackage[T1]{fontenc}

\usepackage[utf8]{inputenc}

\usepackage{microtype}
\usepackage{pgfplots}

\usepackage{amssymb,amsmath}

\newcommand{\autorefexas}[1]{\renewcommand{\exaautorefname}{Examples}%
  \autoref{#1}\renewcommand{\exaautorefname}{Example}%
}%

\newcommand{\autoreflems}[1]{\renewcommand{\lemautorefname}{Lemmas}%
  \autoref{#1}\renewcommand{\lemautorefname}{Lemma}%
}%

\usepackage{xspace,relsize,stmaryrd}
\usepackage{amssymb,amsmath}
\usepackage{mathpartir}
\usepackage{bussproofs}
\usepackage{multirow} \usepackage{enumitem}
\usepackage[author=anonymous,marginclue,footnote,draft]{fixme}
\FXRegisterAuthor{dh}{adh}{DH} \FXRegisterAuthor{ls}{als}{LS}

\allowdisplaybreaks[2]

\usepackage{array}

\usepackage{tikz}
 \usetikzlibrary{trees}
 \usetikzlibrary{shapes}
 \usetikzlibrary{fit}
 \usetikzlibrary{shadows}
 \usetikzlibrary{backgrounds}
 \usetikzlibrary{arrows,automata}

\tikzset{
   n/.style= {circle,fill,inner sep=1.5pt,node distance=2cm}
  ,acc/.style={circle,draw,inner sep=3pt,node distance=2cm}
  ,phantom/.style={circle},
  ,arr/.style={->, >=stealth, semithick, shorten <= 3pt, shorten >= 3pt}
}

\usepackage{tikz-cd}

\newcommand{\Inf}{\mathsf{Inf}}
\newcommand{\Sat}{\mathit{Sat}}
\newcommand{\PropAts}{\mathsf{P}}
\newcommand{\AtGames}{\mathcal{A}}
\newcommand{\game}{\mathsf{G}}
\newcommand{\sfgame}{\game^{\sub}}
\newcommand{\sfsymgame}{\game^{\sub,\mathsf{sym}}}
\newcommand{\initstate}{q_{\mathsf{init}}}

\newcommand{\sub}{\mathsf{sub}}
\newcommand{\osOne}{\gamma}
\newcommand{\osZero}{\Theta}

\makeatletter
\def\moverlay{\mathpalette\mov@rlay}
\def\mov@rlay#1#2{\leavevmode\vtop{%
   \baselineskip\z@skip \lineskiplimit-\maxdimen
   \ialign{\hfil$\m@th#1##$\hfil\cr#2\crcr}}}
\newcommand{\charfusion}[3][\mathord]{
    #1{\ifx#1\mathop\vphantom{#2}\fi
        \mathpalette\mov@rlay{#2\cr#3}
      }
    \ifx#1\mathop\expandafter\displaylimits\fi}
\makeatother

\newcommand\pfun{\rightharpoonup} \newcommand{\FV}{\mathsf{FV}}
\newcommand{\size}{\mathsf{size}} \newcommand{\Land}{\bigwedge}
\newcommand{\Lor}{\bigvee} \newcommand{\prios}{2n_0k'}
\newcommand{\detsize}{((n_0k')!)^2}
 
\newcommand{\ad}{\mathsf{ad}} 
\newcommand{\Rules}{\mathcal R}

\newcommand{\detcarrier}{{D_\target}} \newcommand{\detprio}{\Omega}

\newcommand{\target}{\chi} \newcommand{\FLtarget}{\mathbf F}

\newcommand{\Mon}{\mathcal{M}}
\newcommand{\Neighb}{\mathcal{N}}
\newcommand{\Dist}{\mathcal{D}}

\newcommand{\Rat}{{\mathbb{Q}}}
\newcommand{\Nat}{{\mathbb{N}}}
\newcommand{\Int}{{\mathbb{Z}}}

\newcommand{\Bag}{\mathcal{B}}

\newcommand{\mondiamond}[1]{\langle #1\rangle}
\newcommand{\monbox}[1]{[#1]}
\newcommand{\gldiamond}[1]{\Diamond_{#1}}

\newcommand\ExpTime{\upshape{\textsc{ExpTime}}\xspace}

\newcommand{\Set}{\mathsf{Set}}

\newcommand{\Sem}[1]{{[\![#1]\!]}}
\newcommand{\hearts}{\heartsuit}

\newcommand{\sem}[1]{[\![#1]\!]}

\newcommand{\Pow}{\mathcal{P}}
\newcommand{\contrapow}{\mathcal{Q}}
\newcommand{\Op}{\mathit{op}}
\newcommand{\lO}{\mathcal{O}}

\newcommand{\BV}{\mathop{\mathsf{BV}}}
\newcommand{\outord}{\prec}
\newcommand{\depord}{\outord_{\mathsf{dep}}}
\newcommand{\Var}{\mathbf{V}}
\newcommand{\autom}{\mathsf{A}}

\newenvironment{alg}[1][Algorithm]{\begin{trivlist}
\item[\hskip \labelsep {\bfseries Algorithm (#1).}]}{\end{trivlist}}

\begin{document}

\title[Coalgebraic Satisfiability Checking for Arithmetic
  $\mu$-Calculi]{Coalgebraic Satisfiability Checking\texorpdfstring{\\}{} for Arithmetic
  $\mu$-Calculi}

\thanks{Work by the first author supported by the ERC Consolidator grant D-SynMA (No.~772459) and by the EPSRC through grant EP/Z003121/1. Work by the second author is funded by the Deutsche Forschungsgemeinschaft (DFG, German Research Foundation) – project no.~531706730.}

\author[D.~Hausmann]{Daniel Hausmann\lmcsorcid{0000-0002-0935-8602}}[a]
\author[L.~Schr\"oder]{Lutz Schr\"oder\lmcsorcid{0000-0002-3146-5906}}[b]

\address{University of Liverpool, United Kingdom}
\email{d.hausmann@liverpool.co.uk}

\address{Friedrich-Alexander-Universit\"{a}t
  Erlangen-N\"urnberg, Germany}
\email{lutz.schroeder@fau.de}

\begin{abstract} The coalgebraic $\mu$-calculus provides a generic semantic framework for fixpoint logics over systems whose branching type goes beyond the standard relational setup, e.g.\ probabilistic, weighted, or game-based. Previous work on the coalgebraic $\mu$-calculus includes an exponential-time upper bound on satisfiability checking, which however relies on the availability of tableau rules for the next-step modalities that are sufficiently well-behaved in a formally defined sense; in particular, rule matches need to be representable by polynomial-sized codes, and the sequent duals of the rules need to absorb cut. While such rule sets have been identified for some important cases, they are not known to exist in all cases of interest, in particular ones involving either integer weights as in the graded $\mu$-calculus, or real-valued weights in combination with non-linear arithmetic. In the present work, we prove the same upper complexity bound under more general assumptions, specifically regarding the complexity of the (much simpler) satisfiability problem for the underlying \emph{one-step   logic}, roughly described as the nesting-free next-step fragment of the logic. The bound is realized by a generic algorithm that supports on-the-fly satisfiability checking. Notably, our approach directly accommodates unguarded formulae, and thus avoids use of the guardedness transformation. Example applications include new exponential-time upper bounds for satisfiability checking in an extension of the graded $\mu$-calculus with polynomial inequalities (including positive Presburger arithmetic), as well as an extension of the (two-valued) probabilistic $\mu$-calculus with polynomial inequalities. 
\end{abstract}
\maketitle

\section{Introduction}\label{sec:intro}

Modal fixpoint logics are a well-established tool in the temporal
specification, verification, and analysis of concurrent systems. One
of the most expressive logics of this type is the modal
$\mu$-calculus~\cite{Kozen83,BradfieldStirling06,BradfieldWalukiewicz18},
which features explicit least and greatest fixpoint operators; roughly
speaking, these serve to specify liveness properties (least fixpoints)
and safety properties (greatest fixpoints), respectively. Like most
modal logics, the modal $\mu$-calculus is traditionally interpreted
over relational models such as Kripke frames or labelled transition
systems. The growing interest in more expressive models where
transitions are governed, e.g., by probabilities, weights, or games
has sparked a commensurate growth of temporal logics and fixpoint
logics interpreted over such systems; prominent examples include
probabilistic
$\mu$-calculi~\cite{CleavelandEA05,HuthKwiatkowska97,CirsteaEA11a,LiuEA15}, the
alternating-time $\mu$-calculus~\cite{AlurEA02}, and the monotone
$\mu$-calculus, which contains Parikh's game
logic~\cite{Parikh85}. The graded $\mu$-calculus~\cite{KupfermanEA02}
features next-step modalities that count successors; it is standardly
interpreted over Kripke frames but, as pointed out by D'Agostino and
Visser~\cite{DAgostinoVisser02}, graded modalities are more naturally
interpreted over so-called multigraphs, where edges carry integer
weights, and in fact this modification leads to better bounds on
minimum model size for satisfiable formulae.

Coalgebraic logic~\cite{Pattinson03} has emerged as a unifying
framework for modal logics interpreted over such more general
models. It is based on casting the transition type of the systems at
hand as a set functor, and the systems in question as coalgebras for
this type functor, following the paradigm of universal
coalgebra~\cite{Rutten00}; additionally, modalities are interpreted as
so-called \emph{predicate liftings}~\cite{Pattinson03,Schroder08}. The
\emph{coalgebraic $\mu$-calculus}~\cite{CirsteaEA11a} caters for
fixpoint logics within this framework, and essentially covers all
mentioned (two-valued) examples as instances. It has been shown that
satisfiability checking in a given instance of the coalgebraic
$\mu$-calculus is in \ExpTime, \emph{provided} that one exhibits a set
of tableau rules for the modalities, so-called \emph{one-step rules},
that is \emph{one-step tableau complete} (a condition that, in the
dual setting of sequent calculi, translates into the requirement that
the rule absorb cut~\cite{SchroderPattinson09b,PattinsonSchroder10}) and moreover
\emph{tractable} in a suitable sense (an assumption made also in our
own previous work on the flat~\cite{HausmannSchroeder15} and
alternation-free~\cite{HausmannEA16} fragments of the coalgebraic
$\mu$-calculus). Such rules are known for many important cases,
notably including alternating-time logics, the probabilistic
$\mu$-calculus even when extended with linear inequalities, and the monotone $\mu$-calculus~\cite{SchroderPattinson09b,KupkePattinson10,CirsteaEA11a}. There
are, however, important cases where such rule sets are currently
missing, and where there is in fact little perspective for finding
suitable rules. Prominent cases of this kind are the graded
$\mu$-calculus and more expressive logics over integer weights
featuring, e.g., Presburger arithmetic\footnote{One-step tableau-complete sets of rules for the graded $\mu$-calculus and more generally for the Presburger $\mu$-calculus have been claimed in earlier work~\cite{SchroderPattinson09b,KupkePattinson10} but have since turned out to be in fact incomplete~\cite{KupkeEA22,GorlitzEA23}; see also \autoref{rem:graded-rules}.}. Further cases arise when
logics over systems with non-negative real weights, such as
probabilistic systems, are taken beyond linear arithmetic to include
polynomial inequalities.

The object of the current paper is to fill this gap by proving a
generic \ExpTime upper bound for coalgebraic $\mu$-calculi even in the
absence of tractable sets of modal tableau rules. The method we use
instead is to analyse the so-called \emph{one-step satisfiability}
problem of the logic on a semantic level -- this problem is
essentially the satisfiability problem of a very small fragment of the
logic, the \emph{one-step logic}, which excludes not only fixpoints,
but also nested next-step modalities, with a correspondingly
simplified semantics that no longer involves actual transitions. E.g.\
the one-step logic of the standard relational $\mu$-calculus is
interpreted over models essentially consisting of a set with a
distinguished subset, which abstracts the successors of a single state
(which is not itself part of the model). We have applied this
principle to satisfiability checking in coalgebraic (next-step) modal
logics~\cite{SchroderPattinson08}, coalgebraic hybrid
logics~\cite{MyersEA09}, and reasoning with global assumptions in
coalgebraic modal logics~\cite{KupkeEA15,KupkeEA22}. It also appears
implicitly in work on automata for the coalgebraic
$\mu$-calculus~\cite{FontaineEA10}, which however establishes only a
doubly exponential upper bound in the case without tractable modal
tableau rules.

Our main result states roughly that if the satisfiability problem of
the one-step logic is in \ExpTime under a particular stringent measure
of input size, then satisfiability in the corresponding instance of
the coalgebraic $\mu$-calculus is in \ExpTime. Since the criteria on
rule sets featuring in previous work on the coalgebraic $\mu$-calculus
imply the \ExpTime bound on satisfiability checking in the one-step
logic, this result subsumes previous complexity estimates for the
coalgebraic $\mu$-calculus~\cite{CirsteaEA11}. Our leading example
applications are on the one hand the graded $\mu$-calculus and its
extension with (monotone) polynomial inequalities (including
Presburger modalities, i.e.\ monotone linear inequalities), and on the
other hand the extension of the (two-valued) probabilistic
$\mu$-calculus~\cite{CirsteaEA11a,LiuEA15} with (monotone) polynomial
inequalities. While the graded $\mu$-calculus as such is known to be
in \ExpTime~\cite{KupfermanEA02}, the other mentioned instances of our
result are, to our best knowledge, new. At the same time, our proofs
are fairly simple, even compared to specific ones, e.g.\ for the
graded $\mu$-calculus.

Technically, we base our results on an automata- and game-theoretic
treatment by means of standard parity automata and parity
(satisfiability) games. Our satisfiability games are quite different
from satisfiability games formulated previously for the very similar
\emph{$\Lambda$-automata}~\cite{FontaineEA10}. Both types of game are
of overall doubly exponential size; however, we show that our game can
nevertheless be solved in singly exponential time even in the absence
of a tractable set of modal tableau rules. Our algorithm witnessing
the singly exponential time bound is able to decide the satisfiability
of nodes on-the-fly, that is, possibly before the tableau is fully
expanded, in the spirit of global caching algorithms for description
logics~\cite{GoreWidmann09,GoreNguyen13}. It thus offers a perspective
for practically feasible reasoning. Our construction implies a known
singly-exponential bound on minimum model size for satisfiable
formulae in coalgebraic
$\mu$-calculi~\cite{CirsteaEA11a,FontaineEA10}, calculated only in
terms of the closure size of the formula and its alternation depth.
Moreover, we identify a criterion for a polynomial bound on branching
in models, which holds in all our examples.

This paper extends a previous conference
publication~\cite{HausmannSchroder19}. Besides providing full proofs
and additional background material, we also make do without the
problematic guardedness transformation
(cf.~\cite{BruseEA15,KupkeEA20}) and directly establish our results
for the unguarded coalgebraic $\mu$-calculus. On a technical level, we
rebase large parts of the development on newly introduced notions of
model checking game and satisfiability game for the unguarded
coalgebraic $\mu$-calculus, obtaining streamlined proofs. A
restriction of the algorithm to guarded formulae has been implemented
within the \emph{Coalgebraic Ontology Logic Solver (COOL
  2)}~\cite{GorlitzEA23}, which in particular provides the first
implemented reasoner for the graded $\mu$-calculus.

\paragraph{Overview of the material}
In \autoref{sec:prelims}, we recall the basics of coalgebra,
coalgebraic logic, and the coalgebraic $\mu$-calculus, including
central syntactic notions such as closure and alternation depth.  We
discuss the automata-theoretic approach and \emph{model checking
  games} in \autoref{sec:tracking}. Specifically, we introduce the
\emph{tracking automaton}, a nondeterministic parity automaton that
essentially detects bad sequences of fixpoint unfoldings, and its
co-determinization. The nondeterministic tracking automaton implicitly
governs the model checking game, while the co-determinized tracking
automaton forms the basis of our notion of \emph{tableaux}, introduced
in \autoref{sec:tableaux}. Tableaux are certain partial
subautomata of the co-determinized tracking automaton, characterized
on the one hand by being \emph{totally accepting} in the sense that
every infinite run is accepting, and on the other hand by one-step
satisfiability requirements on modal constraints. Tableaux allow for
the construction of a model of the target formula: We show that on
every tableau, one has a so-called coherent coalgebra structure, in
what in coalgebraic logic is usually termed the existence lemma
(\autoref{lem:existence}). The associated truth lemma
(\autoref{lem:truth}) uses the model checking game to show that the
target formula is actually satisfied in such a coherent coalgebra,
precisely exploiting the fact that the co-determinized tracking
automaton complements the tracking automaton.

The \emph{satisfiability game}, introduced in
\autoref{sec:satgames}, then essentially serves to determine
whether there exists a tableau for the target formula. Its
correctness, in the sense of actually capturing satisfiability of the
target formula, is split into two implications: On the one hand, we
show that a tableau can indeed be extracted from a winning strategy of
the existential player (\autoref{lem:games}), and on the other hand
we prove a soundness lemma (\autoref{lem:modeltogame}) stating that
a winning strategy for the existential player in the satisfiability
game can be extracted from a winning strategy in the model checking
game on a given model of the target formula.

It remains to obtain an exponential-time satisfiability checking
algorithm from the satisfiability game, which as indicated above has
doubly exponential size. We show in \autoref{section:alg} that
winning regions in the satisfiability game can be characterized as a
nested fixpoint that lives on the singly-exponential sized subset of
those game positions that correspond to states of the co-determinized
tracking automaton (\autoref{lem:sat-game-fp}). Our
satisfiability checking algorithm, also presented in
\autoref{section:alg}, essentially computes this fixpoint, or more
precisely decides containment of game positions, in particular the
root position, in the fixpoint on-the-fly. As indicated above,
efficiency of the algorithm relies on sufficiently low complexity of
the one-step satisfiability problem. We show that all our example
logics satisfy this criterion, and are hence decidable in exponential
time

\paragraph{Related Work}

Our work builds on existing approaches to the algorithmic treatment of
the relational $\mu$-calculus, most centrally on the game-theoretic
approach pioneered by \cite{NiwinskiWalukiewicz96}. Our correctness
proof for the model checking game takes orientation from
Venema~\cite{Venema06} in that it makes do without (ordinal) timeouts,
i.e.\ without (transfinite) Kleene iteration, for least fixpoints. As
indicated above, our approach is distinguished by working directly
with unguarded formulae. Friedman and Lange~\cite{FriedmannLange13a}
have previously provided a tableau method dealing with unguarded
formulae in the relational $\mu$-calculus; our approach differs
technically from theirs, with details discussed in
\autoref{rem:unguarded}.

We have already mentioned previous work on the algorithmics of the
coalgebraic $\mu$-calculus~\cite{CirsteaEA11a,FontaineEA10} (other
recent work on the coalgebraic $\mu$-calculus concerns model theory,
notably completeness~\cite{SchroderVenema18,EnqvistEA19} and
expressive completeness~\cite{EnqvistEA17}). As noted above, the main
difference with work on tableau-based algorithms~\cite{CirsteaEA11a}
is that we make do without a tractable complete set of tableau rules
and cover unguarded formulae, a point that was previously explicitly
left open~\cite[Section~1]{CirsteaEA11a}. While the overall technical
layout of our approach is similar, e.g.\ in the use of tracking
automata as well as model checking and satisfiability games, the added
generality in our work implies substantial differences in the actual
implementation of these tools. These concern, for instance, the
alphabet used by the tracking automata and the treatment of
propositional operators. Fontaine et al.~\cite{FontaineEA10} present
an approach, interestingly not assuming guardedness, where a
coalgebraic $\mu$-calculus formula is first transformed into a
so-called $\Lambda$-automaton, whose emptiness can then be checked by
means of a satisfiability game, which in fact is a regular game but
not a parity game. This game has doubly exponential size and in
general yields only a doubly exponential complexity bound. We also use
a satisfiability game in our approach, which works directly on the
formula syntax; altogether, the design of our game appears to differ
rather markedly from that used by Fontaine et al. First off, as
mentioned above, our game has a parity winning condition. It does also
have doubly exponential size but can nevertheless be solved in singly
exponential time using a fixpoint calculation on a
singly-exponential-sized subset of the game positions. It seems unlikely
that similar ideas will work for the game used by Fontaine et al., as
out of the two types of positions present in this game, one is of doubly
exponential size and the other of only polynomial size; more details
are found in \autoref{rem:flv}.

\section{The Coalgebraic \texorpdfstring{$\mu$}{mu}-Calculus}\label{sec:prelims}
We recall basic definitions in coalgebra~\cite{Rutten00}, coalgebraic
logic~\cite{Pattinson03,Schroder08}, and the coalgebraic
$\mu$-calculus~\cite{CirsteaEA11a}. While we do repeat definitions of
some categorical terms, we assume some familiarity with
basic category theory (e.g.~\cite{AdamekEA90}).

\paragraph{Coalgebra} The semantics of our logics will be based on
transition systems in a general sense, which we abstract as coalgebras
for a type functor. The most basic example are relational transition
systems, or Kripke frames, which are just pairs $(C,R)$ consisting of
a set~$C$ of \emph{states} and a binary \emph{transition relation}
$R\subseteq C\times C$. We may equivalently view such a structure as a
map of type $\xi\colon C\to\Pow C$ (where~$\Pow$ denotes the powerset
functor and we omit brackets around functor arguments following
standard convention, while we continue to use brackets in uses of
powerset that are not related to functoriality), via a bijective
correspondence that maps~$R\subseteq C\times C$ to the map
$\xi_R\colon C\to\Pow C$ given by $\xi_R(c)=\{d\mid (c,d)\in R\}$ for
$c\in C$; that is, $\xi_R$ assigns to each state~$c$ a collection of
some sort, in this case a set, of successor states. The essence of
\emph{universal coalgebra}~\cite{Rutten00} is to base a general theory
of state-based systems on encapsulating this notion of
\emph{collections of successors} in a functor, for our purposes on the
category $\Set$ of sets and maps. We recall that such a functor
$F\colon\Set\to\Set$ assigns to each set~$U$ a set~$FU$, and to each
map $f\colon U\to V$ a map $Ff\colon FU\to FV$, preserving identities
and composition.  As per the intuition given above, $FU$ should be
understood as consisting of some form of collections of elements
of~$U$. An \emph{$F$-coalgebra} $(C,\xi)$ then consists of a set~$C$
of \emph{states} and a \emph{transition map} $\xi\colon C\to FC$,
understood as assigning to each state~$c$ a collection $\xi(c)\in FC$
of successors. As indicated above, a basic example is $F=\Pow$ (with
$\Pow f(Y)=f[Y]$ for $f\colon U\to Y$), in which case $F$-coalgebras
are relational transition systems. To see just one further example
now, the \emph{discrete distribution functor}~$\Dist$ is given on
sets~$U$ by $\Dist U$ being the set of all discrete probability
distributions on~$U$. We represent such distributions by their
\emph{probability mass functions}, i.e.~functions
$d\colon U\to(\Rat\cap[0,1])$ such that $\sum_{u\in U}d(u)=1$,
understood as assigning to each element of~$U$ its probability. (Note
that the \emph{support} $\{u\in U\mid d(u)>0\}$ of~$d$ is then
necessarily at most countable.)  By abuse of notation, we also
write~$d$ for the induced discrete probability measure on~$U$,
i.e.~$d(Y)=\sum_{y\in Y}d(y)$ for $Y\in\Pow(U)$. The action of~$\Dist$
on maps $f\colon U\to V$ is then given by $\Dist f(d)(Z)=d(f^{-1}[Z])$
for $d\in\Dist U$ and $Z\in\Pow(V)$, i.e.~$\Dist f$ takes image
measures. A $\Dist$-coalgebra $\xi\colon C\to\Dist C$ assigns to each
state $x\in C$ a distribution $\xi(x)$ over successor states, so
$\Dist$-coalgebras are Markov chains.

\paragraph{Coalgebraic logic} The fundamental abstraction that
underlies coalgebraic logic is to encapsulate the semantics of
next-step modalities in terms of \emph{predicate liftings}. As a basic
example, consider the standard diamond modality~$\Diamond$, whose
semantics over a transition system $(C,R)$ is given by
\begin{equation*}
  c\models \Diamond\phi\quad\text{iff}\quad \exists d.\,(c,d)\in R\land d\models\phi
\end{equation*}
(where~$\phi$ is a formula in some ambient syntax whose semantics is
assumed to be already given by induction). We can rewrite this
definition along the correspondence between transition systems and
$\Pow$-coalgebras $\xi\colon C\to\Pow C$ described above, obtaining
\begin{align*}
  c\models\Diamond\phi&\quad\text{iff}\quad\xi(c)\cap\Sem{\phi}\neq\emptyset\\
  &\quad\text{iff}\quad\xi(c)\in\{Y\in\Pow C\mid Y\cap\Sem{\phi}\neq\emptyset\}
\end{align*}
where $\Sem{\phi}=\{d\in C\mid d\models\phi\}$ denotes the
\emph{extension} of~$\phi$ in~$(C,\xi)$. We can see the set
$\{Y\in\Pow C\mid Y\cap\Sem{\phi}\neq\emptyset\}$ as arising from the
application of a set operation $\Sem{\Diamond}$ to the
set~$\Sem{\phi}$, thought of as a predicate on~$C$: For a
predicate~$P$ on~$C$, we put
\begin{align*}
  \Sem{\Diamond}(P)=\{Y\in\Pow C\mid Y\cap P\neq\emptyset\}
\end{align*}
and then have
\begin{equation*}
  c\models\Diamond\phi\quad\text{iff}\quad\xi(c)\in\Sem{\Diamond}(\Sem{\phi}).
\end{equation*}
Notice that the operation $\Sem{\Diamond}$ turns predicates on~$C$
into predicates on $\Pow C$, so we speak of a \emph{predicate
  lifting}.

Generally, the notion of predicate lifting is formally defined as
follows. The \emph{contravariant powerset} functor
$\contrapow\colon\Set^\Op\to\Set$ is given by $\contrapow U=\Pow U$
for sets~$U$, and by
$\contrapow f\colon\contrapow V\to\contrapow U$,
$\contrapow f(Z)=f^{-1}[Z]$, for $f\colon U\to V$. We think of
$\contrapow U$ as consisting of predicates on~$U$. Then, an
\emph{$n$-ary predicate lifting} for~$F$ is a natural
transformation
\begin{equation*}
  \lambda\colon \contrapow ^n\to \contrapow\circ F^\Op 
\end{equation*}
between functors $\Set^\Op\to\Set$. Here, we write $\contrapow^n$ for
the functor given by $\contrapow^n U=(\contrapow U)^n$. Also, recall
that $F^\Op$ is the functor $\Set^\Op\to\Set^\Op$ that acts like~$F$
on both sets and maps. Unravelling this definition, we see that a
predicate lifting~$\lambda$ is a family of maps
\begin{equation*}
  \lambda_U\colon(\contrapow U)^n\to\contrapow(FU),
\end{equation*}
indexed over all sets~$U$, subject to the \emph{naturality} condition
requiring that the diagram
\begin{equation*}
  \begin{tikzcd}
    (\contrapow V)^n \arrow[r,"\lambda_V"]\arrow[d,"\contrapow(f)^n"] &
    \contrapow(F V)\arrow[d,"\contrapow(Ff)"]\\
    (\contrapow U)^n\arrow[r,"\lambda_U"] & \contrapow(F U)
  \end{tikzcd}
\end{equation*}
commutes for all $f\colon U\to V$. Explicitly, this amounts to the
equality
\begin{equation*}
  \lambda_U(f^{-1}[Z_1],\dots,f^{-1}[Z_n]) = (Ff)^{-1}\lambda_V(Z_1,\dots,Z_n)
\end{equation*}
for predicates $Z_1,\dots,Z_n\in\contrapow(V)$. It is easily checked
that this condition does hold for the unary predicate lifting
$\lambda=\Sem{\Diamond}$ as defined above. As examples of predicate
liftings for the above-mentioned discrete distribution
functor~$\Dist$, consider the unary predicate liftings~$\lambda^b$,
indexed over $b\in\Rat\cap[0,1]$, given by
\begin{equation*}
  \lambda^b_U(Y)=\{d\in\Dist U\mid d(Y)> b\},
\end{equation*}
which induce next-step modalities `with probability more than~$b$'
(see also \autoref{ex:logics}.\ref{item:prob-mu}). Again, the
naturality condition is readily checked. Further instances are
discussed in \autoref{ex:logics}.

\paragraph{Fixpoints} We recall basic results on fixpoints, and fix
some notation. Recall that by the Knaster-Tarski fixpoint theorem,
every monotone function $f\colon X\to X$ on a complete lattice~$X$
(such as a powerset lattice $X=\Pow(Y)$) has a least fixpoint~$\mu f$
and a greatest fixpoint $\nu f$, and indeed~$\mu f$ is the least
prefixpoint and, dually, $\nu f$ is the greatest postfixpoint
of~$f$. Here, $x\in X$ is a prefixpoint (postfixpoint) of~$f$ if
$f(x)\le x$ ($x\le f(x)$). We use~$\mu$ and~$\nu$ also as binders in
expressions $\mu X.\,E$ or $\nu X.\,E$ where~$E$ is an expression, in
an informal sense, possibly depending on the parameter~$X$, thus
denoting $\mu f$ and $\nu f$, respectively, for the function~$f$ that
maps~$X$ to~$E$. By the Kleene fixpoint theorem, if~$X$ is finite,
then we can compute $\mu f$ as the point $f^n(\bot)$ at which the
ascending chain $\bot\le f(\bot)\le f^2(\bot)\dots$ becomes
stationary; dually, we compute $\nu f$ as the point $f^n(\top)$ at
which the descending chain $\top\ge f(\top)\ge f^2(\top)\dots$ becomes
stationary. More generally, this method can be applied to
unrestricted~$X$ by extending the mentioned chains to ordinal indices,
taking suprema or infima, respectively, in the limit
steps~\cite{CousotCousot79}.
  
\paragraph{The coalgebraic $\mu$-calculus} Next-step modalities
interpreted as predicate liftings can be embedded into ambient logical
frameworks of various levels of expressiveness, such as plain modal
next-step logics~\cite{SchroderPattinson08,SchroderPattinson09b} or
hybrid next-step logics~\cite{MyersEA09}. The \emph{coalgebraic
  $\mu$-calculus}~\cite{CirsteaEA11a} combines next-step modalities
with a Boolean propositional base and fixpoint operators, and thus
serves as a generic framework for temporal logics.

The \textbf{syntax} of the coalgebraic $\mu$-calculus is parametric in
a \emph{modal similarity type}~$\Lambda$, that is, a set of modal
operators with assigned finite arities, possibly including
propositional atoms as nullary modalities. \emph{We fix a modal
  similarity type~$\Lambda$ for the rest of the paper.}  We assume
that $\Lambda$ is closed under duals, i.e., that for each modal
operator $\hearts\in\Lambda$, there is a \emph{dual}
$\overline{\hearts}\in\Lambda$ (of the same arity) such that
$\overline{\overline{\hearts}}=\hearts$ for all $\hearts\in\Lambda$.
Let $\Var$ be a countably infinite set of \emph{fixpoint
  variables}~$X,Y,\dots$.  Formulae $\phi,\psi,\dots$ of the
\emph{coalgebraic $\mu$-calculus} (over $\Lambda$) are given by the
grammar
\begin{align*}
\psi,\phi ::= \bot \mid \top \mid \psi\wedge\phi \mid 
\psi\vee\phi \mid \hearts(\phi_1,\dots,\phi_n) \mid X\mid\neg X \mid \mu X.\,\phi \mid \nu X.\,\phi\qquad
\end{align*}
where $\hearts\in \Lambda$, $X\in\Var$, and~$n$ is the arity
of~$\hearts$. We require that negated variables $\neg X$ do not occur
in the scope of $\mu X$ or $\nu X$. As usual,~$\mu$ and~$\nu$ take
least and greatest fixpoints, respectively. We write $\phi[\psi/X]$
for the formula obtained by substituting~$\psi$ for~$X$
in~$\phi$. Full negation is not included but can be defined as usual
(in particular, $\neg\hearts\phi=\overline\hearts\neg\phi$, and
moreover $\neg\mu X.\,\phi=\nu X.\,\neg\phi[\neg X/X]$ and dually; one
easily checks that the restriction on occurrences of negated variables
is satisfied in the resulting formula).  Throughout, we use
$\eta\in\{\mu,\nu\}$ as a placeholder for fixpoint operators; we
briefly refer to formulae of the form $\eta X.\,\phi$ as
\emph{fixpoints} or \emph{fixpoint literals}. We follow the usual
convention that the scope of a fixpoint extends as far to the right as
possible. Fixpoint operators \emph{bind} their fixpoint variables, so
that we have standard notions of bound and free fixpoint variables; we
write $\FV(\phi)$ and $\BV(\phi)$ for the sets of free and bound
variables, respectively, that occur in a formula~$\phi$. A
formula~$\phi$ is closed if it contains no free fixpoint variables,
i.e.~$\FV(\phi)=\emptyset$. (Note in particular that closed formulae
do not contain negated fixpoint variables, and hence no negation at
all.)  We assume w.l.o.g.~ that fixpoints are \emph{irredundant},
i.e.\ use their fixpoint variable at least once.  In \emph{guarded}
formulae, all occurrences of fixpoint variables are separated by at
least one modal operator from their binding fixpoint
operator. Guardedness is a wide-spread assumption although the actual
blowup incurred by the transformation of unguarded into guarded
formulae depends rather subtly on the notion of formula
size~\cite{BruseEA15,KupkeEA20}. We do \emph{not} assume guardedness
in this work, see also \autoref{rem:unguarded}. For
$\hearts\in\Lambda$, we denote by $\size(\hearts)$ the length of a
suitable representation of~$\hearts$; for natural or rational numbers
indexing~$\hearts$ (cf.~\autoref{ex:logics}), we assume binary
representation.  The \emph{length} $|\psi|$ of a formula is its length
over the alphabet
$\{\bot,\top,\wedge,\vee\}\cup \Lambda\cup \Var \cup \{\eta X.\mid
\eta\in\{\mu,\nu\},X\in\Var \}$, while the \emph{size} $\size(\psi)$
of~$\psi$ is defined by counting~$\size(\hearts)$ for each
$\hearts\in\Lambda$ (and~$1$ for all other operators).

The \textbf{semantics} of the coalgebraic $\mu$-calculus, on the other
hand, is parametrized by the choice of a functor~$F\colon\Set\to\Set$
determining the branching type of systems, as well as predicate
liftings interpreting the modalities, as indicated in the above
summary of coalgebraic logic. That is, we interpret each modal
operator $\hearts\in\Lambda$ as an $n$-ary predicate lifting
\begin{equation*}
  \Sem{\hearts}\colon\contrapow^n\to\contrapow\circ F^\Op
\end{equation*}
for~$F$, where~$n$ is the arity of~$\hearts$, extending notation
already used in our lead-in example on the semantics of the diamond
modality~$\Diamond$. To ensure existence of fixpoints, we require that
all~$\sem{\hearts}$ are \emph{monotone}, i.e.\ whenever
$A_i\subseteq B_i\subseteq U$ for all $i=1,\dots,n$, where~$n$ is the
arity of~$\hearts$, then
$\sem{\hearts}_U(A_1,\dots,A_n)\subseteq
\sem{\hearts}_U(B_1.\dots,B_n)$.

For sets $U\subseteq V$, we write $\overline{U}=V\setminus U$ for the
\emph{complement} of~$U$ in~$V$ when~$V$ is understood from the
context. We require that the assignment of predicate liftings to
modalities respects duality, i.e.\
\begin{equation*}
  \sem{\overline\hearts}_U(A_1,\dots,A_n)=\overline{\sem{\hearts}_U(\overline{A_1},\dots,\overline{A_n})}
\end{equation*}
for $n$-ary $\hearts\in\Lambda$ and $A_1,\dots,A_n\subseteq U$.  

We interpret formulae over $F$-coalgebras $\xi\colon C\to FC$. A
\emph{valuation} is a partial function $i\colon\Var\pfun \Pow(C)$ that
assigns sets $i(X)$ of states to fixpoint variables $X$ (we generally
write $\pfun$ to indicate partial functions). Given $A\subseteq C$ and
$X\in\Var$, we write $i[X\mapsto A]$ for the valuation given by
$(i[X\mapsto A])(X)=A$ and $(i[X\mapsto A])(Y)=i(Y)$ for $Y\neq X$. We
write~$\epsilon$ for the empty valuation (i.e.~$\epsilon$ is undefined
on all variables). For a list $X_1,\dots,X_n$ of distinct variables,
we write $i[X_1\mapsto A_1,\dots,X_n\mapsto A_n]$ for
$i[X_1\mapsto A_1]\dots[X_n\mapsto A_n]$ (i.e.\
$i[X_1\mapsto A_1,\dots,X_n\mapsto A_n]$ maps~$X_j$ to~$A_j$ for
$j=1,\dots,n$ and otherwise acts like~$i$), and
$[X_1\mapsto A_1,\dots,X_n\mapsto A_n]$ for
$\epsilon[X_1\mapsto A_1,\dots,X_n\mapsto A_n]$. The \emph{extension}
$\sem{\phi}_i\subseteq C$ of a formula~$\phi$ in $(C,\xi)$, under a
valuation~$i$ such that $i(X)$ is defined for all $X\in\FV(\phi)$, is
given by the recursive clauses
\begin{align*}
  \sem{\bot}_i & = \emptyset\\
  \sem{\top}_i & = C\\
  \sem{X}_i &= i(X)\\
  \sem{\neg X}_i &= C\setminus i(X)\\
  \sem{\phi\land\psi}_i& = \sem{\phi}_i\cap\sem{\psi}_i\\
  \sem{\phi\lor\psi}_i& = \sem{\phi}_i\cup\sem{\phi}_i\\
  \sem{\hearts(\phi_1,\dots,\phi_n)}_i &= \xi^{-1}[\sem{\hearts}_C(\sem{\phi_1}_i.\dots,\sem{\phi_n}_i)]\\
  \sem{\mu X.\,\phi}_i &= \mu A.\,\sem{\phi}_{i[X\mapsto A]}\\
  \sem{\nu X.\,\phi}_i &= \nu A.\,\sem{\phi}_{i[X\mapsto A]}
\end{align*}
(using notation introduced in the fixpoint paragraph above).  Thus we
have $x\in\sem{\hearts(\psi_1,\dots,$ $\psi_n)}_i$ if and only if
$\xi(x)\in\sem{\hearts}_C(\sem{\psi_1}_i,\dots,\sem{\psi_n}_i)$.  By
monotonicity of predicate liftings and the restrictions on occurrences
of negated variables, one shows inductively that the functions
occurring in the clauses for $\mu X.\,\phi$ and $\nu X.\,\phi$ are
monotone, so the corresponding extremal fixpoints indeed exist
according to the Knaster-Tarski fixpoint theorem.  By an evident
substitution lemma, we obtain that the extension is invariant under
\emph{unfolding} of fixpoints, i.e.\
\begin{equation*}
  \sem{\eta X.\,\psi}_i=\sem{\psi[\eta X.\,\psi/X]}_i.
\end{equation*}
For closed formulae~$\psi$, the valuation~$i$ is irrelevant, so we
write $\sem{\psi}$ instead of $\sem{\psi}_i$. A state $x\in C$
\emph{satisfies} a closed formula $\psi$ (denoted $x\models \psi$) if
$x\in\sem{\psi}$.  A closed formula $\target$ is \emph{satisfiable} if
there is a coalgebra $(C,\xi)$ and a state $x\in C$ such that
$x\models\target$.

For readability, we restrict the further technical development to
unary modalities, noting that all proofs generalize to higher arities
by just writing more indices; in fact, we will liberally use higher
arities in examples. \emph{For the remainder of the paper, we fix a
  functor~$F$ and predicate liftings $\Sem{\hearts}$ interpreting the
  modalities $\hearts\in\Lambda$.}

\begin{exa}[Coalgebraic $\mu$-calculi]\label{ex:logics}
  We proceed to discuss instances of the coalgebraic $\mu$-calculus,
  focusing mainly on cases where no tractable set of modal tableau
  rules is known (details on this point are in
  \autoref{rem:rules}). Examples where such rule sets are
  available, including the alternating-time $\mu$-calculus, have been
  provided by C\^i{}rstea et al.~\cite{CirsteaEA11a}. We do discuss
  two examples where tractable sets of modal tableau rules are known,
  viz, the relational $\mu$-calculus and the monotone
  $\mu$-calculus. We include the former to show how our present
  coalgebraic notions match the most familiar example; and the latter
  to illustrate the importance of covering unguarded formulae by
  showing that these arise in the embedding of game
  logic~\cite{PaulyParikh03}.
  \begin{enumerate}[wide]
  \item\label{item:K} The \emph{relational modal
      $\mu$-calculus}~\cite{Kozen83} (which contains CTL as a
    fragment), has
    $\Lambda=\{\Diamond, \Box\}\cup \PropAts\cup\{\neg a\mid a\in \PropAts\}$
    where~$\PropAts$ is a set of propositional atoms, seen as nullary
    modalities, and $\Diamond,\Box$ are unary modalities. The
    modalities $\Diamond$ and $\Box$ are mutually dual
    ($\overline\Diamond=\Box$, $\overline\Box=\Diamond$), and the dual
    of~$a\in \PropAts$ is $\neg a$. The semantics is defined over
    \emph{Kripke models}, which are coalgebras for the functor~$F$
    given on sets~$U$ by $FU=(\Pow U)\times\Pow(\PropAts)$ -- an $F$-coalgebra
    assigns to each state a set of successors and a set of atoms
    satisfied in the state.  The relevant predicate liftings are
    \begin{align*}
      \sem{\Diamond}_U(A)&=\{(B,Q)\in FU\mid A\cap B\neq \emptyset\}\\
      \sem{\Box}_U(A)&=\{(B,Q)\in FU\mid B\subseteq A\}\\
      \sem{a}_U & =\{(B,Q)\in FU\mid a\in Q\} & (a\in \PropAts)\\
      \sem{\neg a}_U &=\{(B,Q)\in FU\mid a\notin Q\}
    \end{align*}
    (of course, the predicate liftings for~$a\in \PropAts$ and $\neg a$ are
    nullary, i.e.~take no argument predicates).  Standard example
    formulae include the CTL-formula
    $\mathsf{AF}\,a=\mu X.\,(a\vee \Box X)$, which states that on all
    paths,~$a$ eventually holds, and the fairness formula
    $\nu X.\,\mu Y.\, ((a\wedge \Diamond X)\vee \Diamond Y)$, which
    asserts the existence of a path on which~$a$ holds infinitely
    often.
  \item\label{item:monotone-mu} The \emph{monotone
      $\mu$-calculus}~\cite{PaulyThesis} has a set~$\PropAts$ of
    propositional atoms, seen as nullary modalities in the same way as
    in our treatment of the relational $\mu$-calculus, and modalities
    $\mondiamond{g}$ with duals $\monbox{g}$, indexed over
    \emph{atomic games}~$g$ from a fixed set~$\AtGames$. It is
    interpreted over \emph{monotone neighbourhood models}, which are
    coalgebras for~$\Mon^\AtGames\times \Pow(\PropAts)$. Here,
    $(-)^\AtGames$ denotes exponentiation with~$\AtGames$ (generally,
    for sets~$U$ and~$V$, $V^U$ is the set of maps $U\to V$),
    and~$\Mon$ is the \emph{monotone neighbourhood functor}, defined
    as follows. We first define the \emph{neighbourhood functor}
    $\Neighb$ as the composite $\contrapow\circ\contrapow^\Op$ of the
    contravariant powerset functor~$\contrapow$ with itself;
    explicitly, $\Neighb U=\contrapow(\contrapow U)$ is the double
    powerset of~$U$, and for $f\colon U\to V$ and $N\in\Neighb U$,
    $\Neighb f(N)=\{Z\subseteq V\mid f^{-1}[Z]\in N\}$. Call
    $N\in\Neighb U$ \emph{upclosed} if $W\in N$ whenever $V\in N$ and
    $V\subseteq W$. Then,~$\Mon$ is the subfunctor of~$\Neighb$ given
    by $\Mon U =\{N\in \Neighb U\mid N\text{ upclosed}\}$. Thus, a
    monotone neighbourhood model $(C,\xi)$ consists of a set~$C$ of
    states and a transition map~$\xi$ assigning to each state~$x\in C$
    a set of propositional atoms that hold in~$x$, and for each
    $g\in\AtGames$ an upwards closed set of
    \emph{$g$-neighbourhoods}. The semantics of propositional atoms is
    given like in the relational $\mu$-calculus. The interpretation of
    the modalities as predicate liftings is given by
    \begin{align*}
      \Sem{\mondiamond{g}}_U(A)&=\{(N,Q)\in(\Mon U)^\AtGames\times
                                 \Pow(\PropAts)\mid A\in N(g)\}\\
      \Sem{\monbox{g}}_U(A)&=\{(N,Q)\in(\Mon U)^\AtGames\times
                             \Pow(\PropAts)\mid \forall B\in N(g).\,B\cap A\neq\emptyset\}.
    \end{align*}
    As indicated by the nomenclature, one way to understand the
    monotone $\mu$-calculus is as a logic of two-player games, where
    $\mondiamond{g}\phi$ says that the first player (`Angel') can
    enforce~$\phi$ after playing game~$g$. \emph{Game
      logic}~\cite{Parikh83} features modalities
    $\mondiamond{\gamma}$, $\monbox{\gamma}$ for (non-atomic)
    games~$\gamma$; here, \emph{games} $\gamma,\delta$ are defined by
    the grammar
    \begin{equation*}
      \gamma,\delta::=g\mid \gamma\cup\delta \mid \gamma;\delta\mid\gamma^d\mid\gamma^* \qquad(g\in\AtGames)
    \end{equation*}
    (where for simplification we omit the test construct $\phi?$, for
    game logic formulae~$\phi$). Here, $\gamma\cup\delta$ is a game in
    which Angel decides whether to play~$\gamma$ or~$\delta$;
    $\gamma;\delta$ is the game where~$\gamma$ and~$\delta$ are played
    in sequence;~$\gamma^*$ is a game in which~$\gamma$ is played
    repeatedly, and Angel decides before each round (including the
    first) whether~$\gamma$ is played again; and~$\gamma^d$ is
    like~$\gamma$ but with the roles of the players reversed. Thus,
    $\gamma\cap\delta:=(\gamma^d\cup\delta^d)^d$ is a game where the
    second player (`Demon') decides whether~$\gamma$ or~$\delta$ is
    played, and $\gamma^\times:=((\gamma^d)^*)^d$ is a game
    where~$\gamma$ is played repeatedly, with Demon deciding before
    each round whether~$\gamma$ is played another time. This logic can
    be embedded into the monotone $\mu$-calculus, with polynomial
    blowup when formulae are measured in terms of subformula or
    closure size~\cite{PaulyThesis}, as we will do here. We slightly
    simplify the original translation, which aims at using only two
    fixpoint variables; the translation~$t$ is then defined by mutual
    recursion with operators $\tau_\gamma$ that translate the effect
    of applying $\mondiamond{\gamma}$ to an argument formula. The
    translation~$t$ is given by commutation with all propositional
    operators and by
    \begin{equation*}
      t(\mondiamond{\gamma}\phi)=\tau_\gamma(t(\phi)).
    \end{equation*}
    We refrain from listing the clauses for~$\tau_\phi$ in full; the
    most salient clauses are
    \begin{equation*}
      \tau_g(\phi)=\mondiamond{g}\phi\qquad \tau_{\gamma^d}(\phi)=\neg\tau_\gamma(\neg\phi)\qquad
      \tau_{\gamma^*}(\phi)=\mu X.\,(\phi\lor\tau_\gamma(X))
    \end{equation*}
    where~$X$ is a fresh fixpoint variable. For instance, we have
    \begin{equation*}
      t(\mondiamond{(g^*)^\times}a)=\nu X.\,(a\land\mu Y.\,(X \lor\mondiamond{g}Y));
    \end{equation*}
    that is, the translation of game logic formulae may produce
    unguarded formulae in the monotone $\mu$-calculus.
  \item\label{item:prob-mu} The two-valued \emph{probabilistic
      $\mu$-calculus}~\cite{CirsteaEA11a,LiuEA15,ChakrabortyKatoen16}
    (not to be confused with the real-valued probabilistic
    $\mu$-calculus~\cite{HuthKwiatkowska97,McIverMorgan07}) is
    modelled using the distribution functor~$\Dist$ as discussed
    above; recall in particular that $\Dist$-coalgebras are Markov
    chains. To avoid triviality (see
    \autoref{rem:markov-triviality}), we include propositional
    atoms, i.e.~we work with coalgebras for the functor
    $F=\Dist\times\Pow(\PropAts)$ for a set~$\PropAts$ of
    propositional atoms like in the previous items.
    We use the modal similarity type
    $\Lambda=\{\langle b\rangle,[b]\mid b\in\mathbb{Q}\cap[0,1]\}\cup
    \PropAts\cup\{\neg a\mid a\in \PropAts\}$, with $\langle b\rangle$ and $[b]$
    mutually dual, and again with $a\in \PropAts$ and $\neg a$ mutually dual;
    we interpret these modalities by the predicate liftings
    \begin{align*}
      \sem{\langle b\rangle}_U(A)&=\{(d,Q)\in FU\mid d(A)> b\}\\
      \sem{[b]}_U(A)&=\{(d,Q)\in FU\mid d(U\setminus A)\le b\}\\
      \sem{a}_U & =\{(d,Q)\in FU\mid a\in Q\}\\
      \sem{\neg a}_U & =\{(d,Q)\in FU\mid a\notin Q\}
    \end{align*}
    for sets $U$ and $A\subseteq U$ (so $\sem{\langle b\rangle}$
    applies the above-mentioned predicate liftings $\lambda^b$ to the
    $\Dist$-component of~$F$); that is, a state satisfies
    $\langle b\rangle\phi$ if the probability of reaching a state
    satisfying~$\phi$ in the next step is more than~$b$, and a state
    satisfies~$[b]\phi$ if the probability of reaching a state not
    satisfying~$\phi$ in the next step is at most~$b$. For example,
    the formula
    \begin{equation*}
      \phi=\nu X.\,\mathsf{safe} \land \langle 0.95\rangle X
    \end{equation*}
    expresses that current state is safe and will reach a state in
    which~$\phi$ holds again with probability more than~$0.95$ (a
    property that may be more realistically expected to hold in
    practice than formulae demanding that safety holds forever with a
    given probability). 
  \item\label{item:graded} Similarly, we interpret the \emph{graded
      $\mu$-calculus}~\cite{KupfermanEA02} over
    \emph{multigraphs}~\cite{DAgostinoVisser02}, in which states are
    connected by directed edges that are annotated with non-negative integer
    \emph{multiplicities}. Multigraphs correspond to coalgebras for
    the \emph{multiset functor} $\Bag$, defined on sets~$U$ by
    \begin{equation*}
      \Bag(U)=\{\beta:U\to \mathbb{N}\cup\{\infty\}\}.
    \end{equation*}
    We view $d\in\Bag(U)$ as an integer-valued measure on~$U$, and for
    $V\subseteq U$ employ the same notation $d(V)=\sum_{x\in V}d(x)$
    as in the case of probability distributions. In this notation, we
    can, again, define $\Bag f$, for $f\colon U\to V$, as taking image
    measures, i.e.\ $\Bag f(d)(W)=d(f^{-1}[W])$ for $W\subseteq V$.
    A $\Bag$-coalgebra $\xi\colon C\to\Bag C$ assigns a multiset
    $\xi(x)$ to each state $x\in C$, which we may read as indicating
    that given states $x,y\in C$, there is an edge from~$x$ to~$y$
    with multiplicity $\xi(x)(y)$.  We use the modal similarity type
    $\Lambda=\{\langle m\rangle,[m]\mid
    m\in\mathbb{N}\cup\{\infty\}\}$, with~$\langle m\rangle$ and~$[m]$
    mutually dual, and define the predicate liftings
    \begin{align*}
      \sem{\langle m\rangle}_U(A)&=\{\beta\in\Bag U\mid \beta(A)>m\} \\
      \sem{[m]}_U(A)&=\{\beta\in\Bag U\mid \beta(U\setminus A)\leq m\}
    \end{align*}
    for sets $U$ and $A\subseteq U$.  For instance, somewhat
    informally speaking, a state satisfies
    $\nu X.\,(\psi\wedge\langle 1 \rangle X)$ if it is the root of an
    embedded infinite binary tree in which all states satisfy~$\psi$
    (more formally, this holds in the tree unfolding of the
    coalgebra).
  \item\label{item:prob-poly} The \emph{probabilistic $\mu$-calculus
      with polynomial inequalities}~\cite{KupkeEA15} extends the
    probabilistic $\mu$-calculus (item~\ref{item:prob-mu}) by
    introducing, in addition to propositional atoms as in
    item~\ref{item:prob-mu}, mutually dual \emph{polynomial
      modalities} $\langle p\rangle,[p]$ for $n\in\Nat$,
    $p\in \mathbb{Q}[X_1,\ldots, X_n]$ (the set of polynomials in
    variables $X_1,\dots,X_n$ with rational coefficients) such
    that~$p$ is monotone on~$[0,1]$ in all variables (in particular
    this holds when all coefficients of non-constant monomials in~$p$
    are non-negative, but also, e.g.,
    $X_1-\frac{1}{2}X_1^2-\frac{1}{4}$ is monotone on $[0,1]$). These
    modalities are interpreted by predicate liftings defined by
    \begin{align*}
      \sem{\langle p\rangle}_U(A_1,\ldots,A_n)&=\{(d,Q)\in FU\mid 
                                       p(d(A_1),\ldots,d(A_n))> 0\}\\
      \sem{[p]}_U(A_1,\ldots,A_n)&=\{(d,Q)\in FU\mid 
                                       p(d(\overline{A_1}),\ldots,d(\overline{A_n}))\leq 0\}
    \end{align*}
    for sets $U$ and $A_1,\ldots,A_n\subseteq U$. Of course, the
    polynomial modalities subsume the modalities mentioned in
    item~\ref{item:prob-mu}, explicitly via
    $\langle b\rangle =\langle X_1-b\rangle$ and $[b]=[X_1-b]$. The
    monotonicity restriction on polynomials ensures that polynomial
    modalities are monotone, which in turn is needed to guarantee
    existence of least and greatest fixpoints. Polynomial inequalities
    over probabilities have received some previous interest in
    probabilistic logics (e.g.~\cite{FaginEA90,GutierrezBasultoEA17}),
    in particular as they can express constraints on independent
    events (and hence play a role analogous to independent products as
    used in the real-valued probabilistic
    $\mu$-calculus~\cite{Mio11}). E.g.\ the formula
    \begin{equation*}
      \nu Y.\,\langle X_1X_2-0.9\rangle(\mathsf{ready}\land Y,\mathsf{idle}\land Y)
    \end{equation*}
    says roughly that two independently sampled successors of the
    current state will satisfy~$\mathsf{ready}$ and~$\mathsf{idle}$,
    respectively, and then satisfy the same property again, with
    probability more than~$0.9$.

    We note that polynomial inequalities clearly increase the
    expressiveness of the logic strictly, not only in comparison to
    the logic with operators $\langle b\rangle$, $[b]$ as in
    item~\ref{item:prob-mu} but also w.r.t.~the logic with only linear
    inequalities: The restrictions on probability distributions
    imposed by linear inequalities on probabilities are rational
    polytopes, so already $\langle X_1^2+X_2^2-1\rangle(a,c)$ is
    not expressible using only linear inequalities.
  \item\label{item:graded-poly} Similarly, the \emph{graded
      $\mu$-calculus with polynomial inequalities} extends the graded
    $\mu$-calculus with more expressive modalities
    $\langle p\rangle,[p]$, again mutually dual, where in this case
    $n\in\Nat$, $p\in\Int[X_1,\ldots,X_n]$ (that is,~$p$ ranges over
    multivariate polynomials with integer coefficients). Again, we
    restrict polynomials to be monotone in all variables; we do in
    this case in fact require that all coefficients of non-constant
    monomials are non-negative. To avoid triviality, we
    correspondingly require the coefficient~$b_0$ of the constant
    monomial to be non-positive; we refer to the number $-b_0$ as the
    \emph{index} of the modality.  These modalities are interpreted by
    the predicate liftings
    \begin{align*}
      \sem{\langle p\rangle}_U(A_1,\ldots,A_n)&=\{\beta\in\Bag U\mid
                                        p(\beta(A_1),\ldots,\beta(A_n))>0)\}\\
      \sem{[p]}_U(A_1,\ldots,A_n)&=\{\beta\in\Bag U\mid
                                        p(\beta(\overline{A_1}),\ldots,\beta(\overline{A_n}))\leq 0)\}.
    \end{align*}
    This logic subsumes the \emph{Presburger $\mu$-calculus}, that is,
    the extension of the graded $\mu$-calculus with linear
    inequalities (with non-negative coefficients), which may be seen
    as the fixpoint variant of \emph{Presburger modal
      logic}~\cite{DemriLugiez06}. E.g.\ the formula
    \begin{equation*}
      \mu Y.\,(a\lor \langle 3X_1+X_2^2-10\rangle (c\land Y,a\land Y))
    \end{equation*}
    says that (in the tree unfolding of the coalgebra) the current
    state is the root of a finite tree all whose leaves satisfy~$a$,
    and each of whose inner nodes has~$n_1$ children satisfying~$c$
    and $n_2$ children satisfying~$a$ where $3n_1+n_2^2-10>0$. The
    index of the modality $\langle 3X_1+X_2^2-10\rangle$ is~$10$.

    Unlike in the probabilistic case (item~\ref{item:prob-poly}),
    polynomial inequalities do not increase the expressiveness of the
    graded $\mu$-calculus (item~\ref{item:graded}): Given a polynomial
    $p\in\Int[X_1,\dots,X_n]$, one can just replace
    $\langle p\rangle(\phi_1,\dots,\phi_n)$ with the disjunction of
    all formulae $\Land_{i=1}^n\langle m_i-1\rangle\phi_i$ over all
    solutions $(m_1,\dots,m_n)$ of $p(X_1,\dots,X_n)>0$ that are
    minimal w.r.t.~the componentwise ordering of~$\Nat^n$. As these
    minimal solutions form an antichain in $\Nat^n$, there are only
    finitely many of them~\cite{Laver76}, so the disjunction is indeed
    finite. However, the blowup of this translation is exponential in
    the binary size of the coefficient of the constant monomial and
    in~$n$: For instance, even the linear inequality
    $X_1+\dots+X_n-nb>0$ has, by a somewhat generous estimate, at
    least $(b+1)^{n-1}$ minimal solutions (one for each assignment of
    numbers between $0$ and $b$ to the variables $X_1,\dots,X_{n-1}$;
    more precisely, the number of minimal solutions is the number
    $\binom{n+nb}{n-1}=\binom{n+nb}{nb+1}$ of weak compositions of
    $nb+1$ into~$n$ parts),
    which is exponential both in~$n$ and in the binary size of~$b$ (in
    which the polynomial itself has linear size when~$n$ is fixed).
    Therefore, this translation does not allow inheriting an
    exponential-time upper complexity bound on satisfiability checking
    from the graded $\mu$-calculus.
  \item\label{item:comp} Coalgebraic logics in general combine along
    functor composition, and essentially all their properties
    including their algorithmic treatment propagate; the arising
    composite logics are essentially typed sublogics of the fusion, to
    which we refer as \emph{multi-sorted coalgebraic
      logics}~\cite{SchroderPattinson11}. For instance, Markov
    decision processes or simple Segala systems may (in the simplest
    version) be seen as coalgebras for the composite functor
    $\Pow\circ\Dist$. A logic for such systems has nondeterministic
    modalities $\Diamond,\Box$ as well as probabilistic modalities
    such as~$\langle p\rangle$,~$[p]$ (items~\ref{item:K}
    and~\ref{item:prob-poly}). The logic distinguishes two sorts of
    \emph{nondeterministic} and \emph{probabilistic} formulae,
    respectively, with $\Diamond,\Box$ taking probabilistic formulae
    as arguments and producing nondeterministic formulae, and
    with~$\langle p\rangle$,~$[p]$ working the other way around (so
    that, e.g., $\langle X^2-0.5\rangle\Diamond\top$ is a
    probabilistic formula while
    $\Diamond\top\land\langle X^2-0.5\rangle\Diamond\top$ is not a
    formula). Two further particularly pervasive and basic functors
    are $(-)^A$, which represents indexing of transitions (and
    modalities) over a fixed set~$A$ of actions or labels (like in
    item~\ref{item:monotone-mu} above for $A=\AtGames$), and
    $(-)\times\Pow(\PropAts)$, which represents state-dependent
    valuations for a set~$\PropAts$ of propositional atoms as featured
    already in several items above. Our results will therefore cover
    modular combinations of these features with the logics discussed
    above, e.g.~logics for labelled Markov chains or Markov decision
    processes. We refrain from repeating the somewhat verbose full
    description of the framework of multi-sorted coalgebraic
    logic. Instead, we will consider only the fusion of logics in the
    following -- since, as indicated above, multi-sorted logics embed
    into the fusion~\cite{SchroderPattinson11}, this suffices for
    purposes of algorithmic satisfiability checking. We discuss the
    fusion of coalgebraic logics in Remarks~\ref{rem:fusion}
    and~\ref{rem:fusion-oss}.
  \end{enumerate}
  We remark that the modalities in items~\ref{item:prob-poly}
  and~\ref{item:graded-poly} are less general than in the
  corresponding next-step
  logics~\cite{FaginEA90,GutierrezBasultoEA17,DemriLugiez06,KupkeEA15}
  in that the polynomials involved are restricted to be monotone;
  e.g.\ they do not support statements of the type `$\phi$ is more
  probable than~$\psi$'. This is owed to their use in fixpoints, which
  requires modalities to be monotone as indicated above.
\end{exa}
\noindent Coalgebraic logics relate closely to a generic notion of
behavioural equivalence, which generalizes bisimilarity of transition
systems. A \emph{morphism} $h\colon (C,\xi)\to (D,\zeta)$ of
$F$-coalgebras is a map $h\colon C\to D$ such that the square
\begin{equation*}
  \begin{tikzcd}
    C \arrow[r,"h" above]\arrow[d,"\xi" left] & D\arrow[d,"\zeta" right]\\
    FC \arrow[r,"Fh" below] & FD
  \end{tikzcd}
\end{equation*}
commutes. For instance, a morphism $h\colon (C,\xi)\to(D,\zeta)$ of
$\Pow$-coalgebras, i.e.~of transition systems or Kripke frames, is
precisely what is known as a bounded morphism or a
p-morphism~(e.g.~\cite{BlackburnEA01}), that is, (i) whenever
$c'\in\xi(c)$ for $c,c'\in C$, then $h(c')\in\zeta(h(c))$, and (ii)
whenever $d'\in\zeta(d)$ for $d,d'\in D$, then there exists
$c'\in\xi(c)$ such that $h(c')=d'$. States $x\in C$, $y\in D$ in
coalgebras $(C,\xi)$, $(D,\zeta)$ are \emph{behaviourally equivalent}
if there exist a coalgebra $(E,\theta)$ and morphisms
$g\colon (C,\xi)\to (E,\theta)$, $h\colon(D,\zeta)\to(E,\theta)$ such
that $g(x)=h(y)$. This notion instantiates to standard equivalences in
our running examples; e.g.~as indicated above, states in (labelled)
transition systems are behaviourally equivalent iff they are bisimilar
in the usual sense~\cite{AczelMendler89}, and states in labelled
Markov chains are behaviourally equivalent iff they are
probabilistically bisimilar~\cite{RuttenDeVink99}, 
\cite{BartelsEA04}. 

In view of the definition of behavioural equivalence via morphisms of
coalgebras, we can phrase invariance of the coalgebraic $\mu$-calculus
under behavioural equivalence, generalizing the well-known
bisimulation invariance of the relational $\mu$-calculus, as
invariance under coalgebra morphisms. Formally, this takes the
following shape:

\begin{lem}[Invariance under behavioural equivalence]\label{lem:invariance}
  Let~$h\colon (C,\xi)\to(D,\zeta)$ be a morphism of coalgebras,
  let~$\phi$ by a coalgebraic $\mu$-calculus formula, and let
  $i\colon\Var\pfun \Pow(D)$ be a valuation. Then
  \begin{equation*}
    \sem{\phi}_{h^{-1}i} =  h^{-1}\sem{\phi}_i
  \end{equation*}
  where $h^{-1}i$ denotes the valuation given by
  $h^{-1}i(X)=h^{-1}[i(X)]$.
\end{lem}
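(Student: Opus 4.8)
The plan is to proceed by structural induction on the formula~$\phi$, proving the statement simultaneously for all valuations~$i$ (so that the induction hypothesis is available for subformulae under modified valuations, which is essential for the fixpoint cases). The base cases $\bot$, $\top$, $X$, $\neg X$ are immediate: for instance $\sem{X}_{h^{-1}i} = (h^{-1}i)(X) = h^{-1}[i(X)] = h^{-1}\sem{X}_i$, and $\sem{\neg X}_{h^{-1}i} = C\setminus h^{-1}[i(X)] = h^{-1}[D\setminus i(X)] = h^{-1}\sem{\neg X}_i$ since preimage commutes with complement. The Boolean cases $\phi\land\psi$ and $\phi\lor\psi$ follow because $h^{-1}$ commutes with finite intersections and unions, so e.g.\ $\sem{\phi\land\psi}_{h^{-1}i} = \sem{\phi}_{h^{-1}i}\cap\sem{\psi}_{h^{-1}i} = h^{-1}\sem{\phi}_i\cap h^{-1}\sem{\psi}_i = h^{-1}(\sem{\phi}_i\cap\sem{\psi}_i)$ by the induction hypothesis.

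For the modal case $\hearts\phi$, the key fact is that $h$ being a coalgebra morphism means $Fh\circ\xi = \zeta\circ h$, together with naturality of the predicate lifting $\sem{\hearts}$ applied to the map $h\colon C\to D$, which gives $\sem{\hearts}_C(h^{-1}[Z]) = (Fh)^{-1}[\sem{\hearts}_D(Z)]$ for any $Z\subseteq D$. Then
\begin{align*}
  \sem{\hearts\phi}_{h^{-1}i}
    &= \xi^{-1}[\sem{\hearts}_C(\sem{\phi}_{h^{-1}i})]
     = \xi^{-1}[\sem{\hearts}_C(h^{-1}[\sem{\phi}_i])]\\
    &= \xi^{-1}[(Fh)^{-1}[\sem{\hearts}_D(\sem{\phi}_i)]]
     = (Fh\circ\xi)^{-1}[\sem{\hearts}_D(\sem{\phi}_i)]\\
    &= (\zeta\circ h)^{-1}[\sem{\hearts}_D(\sem{\phi}_i)]
     = h^{-1}[\zeta^{-1}[\sem{\hearts}_D(\sem{\phi}_i)]]
     = h^{-1}\sem{\hearts\phi}_i,
\end{align*}
using the induction hypothesis in the second step and the morphism property in the fifth.

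For the fixpoint cases, consider $\mu X.\,\phi$ (the case $\nu X.\,\phi$ is dual). Writing $g(A) = \sem{\phi}_{i[X\mapsto A]}$ on $\Pow(D)$ and $f(B) = \sem{\phi}_{(h^{-1}i)[X\mapsto B]}$ on $\Pow(C)$, we have $\sem{\mu X.\,\phi}_i = \mu g$ and $\sem{\mu X.\,\phi}_{h^{-1}i} = \mu f$. The crucial observation is that $(h^{-1}i)[X\mapsto h^{-1}[A]] = h^{-1}(i[X\mapsto A])$, so the induction hypothesis applied to~$\phi$ yields $f(h^{-1}[A]) = h^{-1}[g(A)]$ for every $A\subseteq D$; that is, $h^{-1}$ is a homomorphism of the two monotone maps in the sense that $f\circ h^{-1} = h^{-1}\circ g$ as maps $\Pow(D)\to\Pow(C)$. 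It then remains to show that such an intertwining implies $\mu f = h^{-1}[\mu g]$. The inclusion $\mu f \subseteq h^{-1}[\mu g]$ follows because $h^{-1}[\mu g]$ is an $f$-prefixpoint ($f(h^{-1}[\mu g]) = h^{-1}[g(\mu g)] = h^{-1}[\mu g]$) and $\mu f$ is the least one. The reverse inclusion is the part requiring slightly more care: I would use the ordinal approximants, $\mu g = \bigcup_\alpha g^\alpha(\emptyset)$ and $\mu f = \bigcup_\alpha f^\alpha(\emptyset)$ (as recalled in the fixpoint paragraph, extending the Kleene chain to ordinals and taking unions at limits), and show by transfinite induction that $f^\alpha(\emptyset) = h^{-1}[g^\alpha(\emptyset)]$ — the successor step uses $f\circ h^{-1} = h^{-1}\circ g$, and the limit step uses that $h^{-1}$ preserves directed unions. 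Taking the union over all~$\alpha$ gives $\mu f = h^{-1}[\mu g]$. The main obstacle is precisely this last lattice-theoretic lemma — establishing the exact correspondence of fixpoints from the intertwining relation $f\circ h^{-1} = h^{-1}\circ g$ — since the one-sided prefixpoint argument only gives one inclusion and the other genuinely needs the approximant analysis (or, alternatively, an analogous argument using that $h^{-1}$ preserves all meets for the $\subseteq$-direction combined with co-continuity).
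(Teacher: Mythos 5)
Your proposal is correct and follows essentially the same route as the paper: structural induction with naturality of predicate liftings plus the coalgebra-morphism square for the modal case, and a transfinite induction on the ordinal approximants $f^\alpha(\emptyset)=h^{-1}[g^\alpha(\emptyset)]$ for the fixpoint case, with preimage commuting with unions at limit stages. The only cosmetic differences are that the paper dispatches the $\nu$-case by reducing it to $\mu$ via negation rather than dualizing the approximant argument, and that your separate prefixpoint argument for the inclusion $\mu f\subseteq h^{-1}[\mu g]$ is subsumed by (and hence redundant given) the transfinite induction.
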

\noindent Since as mentioned above, fixpoints can be approximated by
ordinal-indexed iteration, this follows from the fact that
(fixpoint-free) coalgebraic modal logic with infinite conjunctions and
disjunctions is invariant under behavioural
equivalence~\cite{Pattinson04,Schroder08}. Similar statements have
been shown for a version of the coalgebraic $\mu$-calculus featuring
the coalgebraic cover modality instead of predicate-lifting-based
modalities~\cite{Venema04} and for the single-variable fragment of the
coalgebraic $\mu$-calculus in the present
sense~\cite{SchroderVenema18}, using essentially the same argument.
\begin{rem}\label{rem:markov-triviality}
  As indicated in \autoref{ex:logics}.\ref{item:prob-mu}, Markov
  chains are coalgebras for the discrete distribution
  functor~$\Dist$. The reason why we immediately extended Markov
  chains with propositional atoms is that in plain Markov chains, all
  states are behaviourally equivalent and therefore satisfy the same
  formulae of any coalgebraic $\mu$-calculus interpreted over Markov
  chains: Since for any singleton set~$1$, $\Dist 1$ is again a
  singleton, we have a unique $\Dist$-coalgebra structure on~$1$, and
  every $\Dist$-coalgebra has a (unique) morphism into this
  coalgebra. This collapse under behavioural equivalence is avoided by
  adding propositional atoms.
\end{rem}

\begin{rem}[Multigraph semantics of the graded $\mu$-calculus]
  One important consequence of the invariance of the coalgebraic
  $\mu$-calculus under coalgebra morphisms according to
  \autoref{lem:invariance} is that the multigraph semantics of the
  graded $\mu$-calculus as per
  \autoref{ex:logics}.\ref{item:graded} (and its extension with
  polynomial inequalities introduced in
  \autoref{ex:logics}.\ref{item:graded-poly}) is equivalent to the
  more standard Kripke semantics, i.e.~the semantics over
  $\Pow$-coalgebras~\cite{KupfermanEA02}. We can obtain the latter
  from the definitions in \autoref{ex:logics}.\ref{item:graded} by
  converting Kripke frames into multigraphs in the expected way,
  i.e.~by regarding transitions in the given Kripke frame as
  transitions with multiplicity~$1$ in a multigraph. The conversion
  shows trivially that every formula that is satisfiable over (finite)
  Kripke frames is also satisfiable over (finite) multigraphs. The
  converse is shown by converting a given multigraph into a Kripke
  model that has copies of states and transitions according to the
  multiplicity of transitions in the multigraph. The arising Kripke
  frame, converted back into a multigraph as described previously, has
  a coalgebra morphism into the original multigraph, implying by
  \autoref{lem:invariance} that the copies satisfy the same formulae
  as the original multigraph states. Details are given
  in~\cite[Lemma~2.4]{SchroderVenema18}.
\end{rem}

\begin{rem}[Fusion]\label{rem:fusion}
  The standard term \emph{fusion} refers to a form of combination of
  logics, in particular modal logics, where the ingredients of both
  logics are combined disjointly and essentially without semantic
  interaction, but may then be intermingled in composite formulae. In
  the framework of coalgebraic logic, this means that the fusion of
  coalgebraic logics with modal similarity types~$\Lambda_i$
  interpreted over functors~$F_i$, $i=1,2$, is formed by taking the
  disjoint union~$\Lambda$ of~$\Lambda_1$ and~$\Lambda_2$ (assuming
  w.l.o.g.\ that~$\Lambda_1$ and~$\Lambda_2$ are disjoint to begin
  with) as the modal similarity type, and by interpreting modalities
  over the product functor $F=F_1\times F_2$ (with the product of
  functors computed componentwise, e.g.~$FU=F_1U\times F_2U$ for
  sets~$U$). The predicate lifting $\Sem{\hearts}^F$ interpreting
  $\hearts\in\Lambda_i$ over~$F$, for $i=1,2$, is given by
  \begin{equation*}
    \Sem{\hearts}^F_U(A)=\{(t_1,t_2)\in F_1U\times F_2U\mid t_i\in\Sem{\hearts}^{F_i}_U(A)\}
  \end{equation*}
  where $\Sem{\hearts}^{F_i}$ is the predicate lifting for~$F_i$
  interpreting~$\hearts$ in the respective component logic.  For
  example, the fusion of the standard relational modal $\mu$-calculus
  (\autoref{ex:logics}.\ref{item:K}) and the probabilistic
  $\mu$-calculus with polynomial inequalities
  (\autoref{ex:logics}.\ref{item:prob-poly}) allows for
  unrestricted use of non-deterministic modalities $\Diamond,\Box$ and
  probabilistic modalities~$\langle p\rangle$, $[p]$ in formulae such
  as $\Diamond\top\land\langle X^2-0.5\rangle\Diamond\top$, instead of only
  the alternating discipline imposed by the two-sorted logic for
  Markov decision processes described in
  \autoref{ex:logics}.\ref{item:comp}; such formulae are
  interpreted over coalgebras for the functor $\Pow\times\Dist$, in
  which every state has both nondeterministic and probabilistic
  successors.
\end{rem}

\paragraph{Closure and Alternation Depth}

A key measure of the complexity of a formula is its
\emph{alternation depth}, which roughly speaking describes the maximal
number of alternations between~$\mu$ and~$\nu$ in chains of
dependently nested fixpoints. The treatment of this issue is simplified
if one excludes reuse of fixpoint variables:
\begin{defi}[Clean formulae]
  A closed formula~$\phi$ is \emph{clean} if each fixpoint variable
  appears in at most one fixpoint operator in~$\phi$.
\end{defi}
\noindent \emph{We fix a clean closed target formula~$\target$ for the
  remainder of the paper}, and define the following syntactic notions
relative to this target formula. Given $X\in\BV(\target)$, we write
$\theta(X)$ for the unique subformula of~$\target$ of the shape
$\eta X.\,\psi$. We say that a fixpoint variable~$X$ is an
\emph{$\eta$-variable}, for $\eta\in\{\mu,\nu\}$, if~$\theta(X)$ has
the form $\eta X.\,\psi$. As indicated, the formal definition of
alternation depth simplifies within clean formulae
(e.g.,~\cite{Wilke01,KupkeEA20,KupkeEA22b}):
\begin{defi}[Alternation depth]\label{def:ad}
  For $X,Y\in\BV(\target)$, we write 
  $Y\depord X$ if $X\in\FV(\theta(Y))$ (which implies that $\theta(Y)$
  is a subformula of $\theta(X)$). A \emph{dependency chain}
  in~$\target$ is a chain of the form
  \begin{equation*}
    X_n\depord X_{n-1}\depord\dots \depord X_0\qquad(n\ge 0);
  \end{equation*}
  the \emph{alternation number} of the chain is $k+1$ where~$k$ is the
  number of indices $i\in\{0,\dots,n-1\}$ such that~$X_i$ is a
  $\mu$-variable iff $X_{i+1}$ is a $\nu$-variable (i.e.~the chain
  toggles between $\mu$- and $\nu$-variables at~$i$). The
  \emph{alternation depth} $\ad(X)$ of a variable $X\in\BV(\target)$
  is the maximal alternation number of dependency chains as above
  ending in~$X_0=X$, and the \emph{alternation depth} of~$\target$ is
  $\ad(\target):=\max\{\ad(X)\mid X\in\BV(\target)\}$.
\end{defi}
\noindent (In particular, the alternation depth of~$\target$ is~$0$
if~$\target$ does not contain any fixpoints, and ~$1$ if~$\target$ is
\emph{alternation-free}, i.e.~no $\nu$-variable occurs freely in
$\theta(X)$ for a $\mu$-variable~$X$, and vice versa.)
\begin{exa}[Alternation depth]
  The formula $\nu X.\,(\mu Y.\ a\lor \Diamond Y)\land\Box X$ (in the
  relational modal $\mu$-calculus) has alternation depth~$1$, i.e.~is
  alternation-free. The formula
  $\nu X.\,\Box(\mu Y.\ X\lor \Diamond Y)$ has alternation depth~$2$,
  as witnessed by the alternating chain $Y\depord X$.
\end{exa}
\noindent In the automata-theoretic approach to checking for infinite
deferrals, automata states will be taken from the Fischer-Ladner
closure of~$\target$ in the usual sense~\cite{Kozen83}, in which
subformulae of~$\target$ are expanded into closed formulae by means of
fixpoint unfolding:
\begin{defi}[Fischer-Ladner closure]
The
\emph{Fischer-Ladner closure} $\FLtarget$ of $\target$ is the least
set of formulae containing~$\target$ such that
\begin{align*}
  \phi\land\psi\in\FLtarget & \implies \phi,\psi\in\FLtarget\\
  \phi\lor\psi\in\FLtarget & \implies \phi,\psi\in\FLtarget\\
  \hearts\phi\in\FLtarget & \implies \phi\in\FLtarget\\
  \eta X.\,\phi\in\FLtarget & \implies \phi[\eta X.\,\phi/X]\in\FLtarget.
\end{align*}
\end{defi}
\noindent (One should note that although~$\chi$ is clean, the elements
of $\FLtarget$ will in general fail to be clean, as fixpoint unfolding
of $\eta X.\,\phi$ as per the last clause may create multiple copies
of~$\eta X.\,\phi$.)

Furthermore, we let $\mathsf{sub}(\target)$ denote the set of
subformulae of~$\target$; unlike formulae in~$\FLtarget$, formulae in
$\sub(\target)$ may contain free fixpoint variables. The
\emph{innermost} free fixpoint variable in a subformula of~$\target$
is the one whose binder lies furthest to the right
in~$\target$. Each $\phi\in\mathsf{sub}(\target)$ induces a formula
$\theta^*(\phi)\in\FLtarget$,
which is obtained by repeatedly
transforming~$\phi$, in each step substituting the innermost fixpoint
variable~$X$ occurring in the present formula with
$\theta(X)$~\cite{Kozen83}. This map witnesses finiteness of the
closure (\autoref{lem:kozen}) and moreover will serve as a
connection between two variants of the model checking game
respectively based on subformulae and on the closure
(\autoref{lem:sfgame-bounded-mor}). For instance, for
$\chi=\mu X.\,\nu Y.\,(Y\vee\hearts X)$, we have
$Y\vee\hearts X\in \mathsf{sub}(\target)$, $\theta(X)=\chi$ and
$\theta(Y)=\nu Y.\,(Y\vee\hearts X)$, and thus
\begin{align*}
  \theta^*(Y\vee\hearts X)&=\theta^*(\theta(Y)\vee\hearts X)\\
                          &=\theta^*((\nu Y.\,(Y\vee\hearts X))\vee\hearts X)\\
                          &=(\nu Y.\,(Y\vee\hearts \theta(X)))\vee\hearts\theta(X)\\
                          &=(\nu Y.\,(Y\vee\hearts \chi))\vee\hearts\chi.
\end{align*}\smallskip

\begin{lemC}[\cite{Kozen83}]\label{lem:kozen} The function
  $\theta^*\colon\mathsf{sub}(\target)\to\FLtarget$ is surjective; in
  particular, the cardinality of~$\FLtarget$ is at most the number of
  subformulae of~$\target$.
\end{lemC}

\begin{rem}[Closure vs.\ subformulae]
  The main reason we base our constructions on the closure~$\FLtarget$
  is that this provides the most succinct size measure for
  $\mu$-calculus formulae~\cite{BruseEA15}, thus strengthening our
  complexity results. Subformulae are needed occasionally for
  technical purposes; in particular, they support proofs by structural
  induction.
\end{rem}

\section{Tracking Automata and Model Checking Games}\label{sec:tracking}

Generally, the main problem both in satisfiability checking for
temporal logics and the associated model constructions on the one
hand, and in model checking on the other hand, is to ensure that the
unfolding of least fixpoints does not lead to infinite deferrals,
i.e.~that least fixpoints are indeed eventually satisfied. We encode
this condition using \emph{parity automata} (e.g.~\cite{GraedelEA02})
that track the evolution of formulae in such procedures.

Recall that a \emph{(nondeterministic) parity automaton} is a tuple
\begin{equation*}
\autom=(V,\Sigma,\Delta,\initstate,\alpha)
\end{equation*}
where $V$ is a set of \emph{nodes}; $\Sigma$ is a finite set, the
\emph{alphabet}; $\Delta\subseteq V\times \Sigma\times V$ is the
\emph{transition relation}; $\initstate\in V$ is the \emph{initial
  node}; and $\alpha:V\to\mathbb{N}$ is the \emph{priority function},
which assigns priorities $\alpha(v)\in\mathbb{N}$ to states $v\in
V$. Given a ternary relation
$R\subseteq A\times B\times A$ and $a\in A$, $b\in B$,
$B'\subseteq B$, we generally write
$R(a,b)=\{a'\in A\mid (a,b,a')\in R\}$,
$R(a,B')=\bigcup_{b\in B'} R(a,b)$ and $R(a)=R(a,B)$.
 If $\Delta$ is a (partial) functional relation,
then~$\autom$ is said to be \emph{deterministic}, and we denote the
corresponding partial function by $\delta:V\times\Sigma\pfun V$. We
often treat infinite sequences $s=x_1,x_2,\dots$ over a base set~$X$ as
maps $s\colon\Nat\to X$. We generally write
\begin{equation*}
  \Inf(s)=\{x\in X\mid |s^{-1}[\{x\}]|=\infty\}
\end{equation*}
for the set of elements that occur infinitely often in~$s$.  The
automaton $\autom$ \emph{accepts} an \emph{infinite word}, i.e.\ an
infinite sequence $w=\sigma_0,\sigma_1,\ldots\in\Sigma^\omega$ over~$\Sigma$, if
there is a $w$-path through $\autom$ on which the highest priority
that is passed infinitely often is even; formally, the language that
is accepted by $\autom$ is defined by
\begin{equation*}
  L(\autom)=\{w\in\Sigma^\omega\mid \exists
  \rho\in\mathsf{run}(\autom,w).\,
  \max(\Inf(\alpha\circ\rho))\text{ is even}\},
\end{equation*}
where $\mathsf{run}(\autom,w)$ denotes the set of \emph{runs
  of~$\autom$ on~$w$}, i.e.\ infinite sequences
$q_0,q_1,q_2,\ldots\in V^\omega$ (starting in the initial
state~$q_0=\initstate$) such that $q_{i+1}\in \Delta(q_i,w_i)$ for all
$i\geq 0$.

Recall moreover that we are working with a fixed clean closed target
formula $\target$. We put
\begin{equation*}
  n_0=|\target|,\quad n_1=\size(\target), \quad k=\ad(\phi).
\end{equation*}
The states of the tracking automaton for~$\target$ will be the
elements of the Fischer-Ladner closure~$\FLtarget$ of~$\target$ (cf.\
\autoref{sec:prelims}); in particular, $|\FLtarget|\le n_0$.

\begin{defi}[Modal literals]
  Given a set~$Z$, we write
    $\Lambda(Z)=\{\hearts z\mid \hearts\in\Lambda, z\in Z\}$,
  and refer to elements of $\Lambda(Z)$ as \emph{modal literals}
  (over~$Z$).
\end{defi}
\noindent In particular, $\FLtarget\cap\Lambda(\FLtarget)$ is the set
of modal literals in~$\FLtarget$. We put
\begin{equation*}
  \mathsf{selections}=\Pow(\FLtarget\cap\Lambda(\FLtarget)),
\end{equation*}
and indeed refer to elements of this set as
\emph{selections}, with a view to using selections as letters for modal
steps in our automata construction. Furthermore, we let
$\FLtarget_\vee=\{\psi\vee\chi\mid \psi\vee\chi\in\FLtarget\}$ denote
the set of disjunctive formulae contained in $\FLtarget$, and put
\begin{align*}
\mathsf{choices}=\{\tau:\FLtarget_\vee\to\FLtarget\mid
\forall (\psi\vee\chi)\in \FLtarget_\vee.\, \tau(\psi\vee\chi)\in\{\psi,\chi\}\};
\end{align*}
that is, $\mathsf{choices}$ consists of \emph{choice functions} that
pick disjuncts from disjunctions. These choice functions will be used
as letters for propositional steps in the tracking automaton.
We note $|\mathsf{selections}|,|\mathsf{choices}|\leq 2^{n_0}$.
\begin{defi}[Tracking automaton]\label{def:tracking} The \emph{tracking automaton}
  for~$\target$ is the nondeterministic parity automaton
  $\autom_\target=(\FLtarget,\Sigma,\Delta,\initstate,\alpha)$ where
  $\initstate=\target$, 
$\Sigma= \mathsf{choices}\cup\mathsf{selections}$,
and for $\psi\in\FLtarget$, $\tau\in\mathsf{choices}$ and $\kappa\in\mathsf{selections}$,
\begin{align*}
  \Delta(\psi,\tau)
  &= \begin{cases}
                       \{\tau(\psi)\} & \text{if $\psi\in\FLtarget_\vee$}\\
                       \{\psi_0,\psi_1\} &  \text{if $\psi=\psi_0\wedge\psi_1$}\\
                       \{\psi_1[\psi/X]\} & \text{if $\psi=\eta X.\,\psi_1$} \\
                       \{\psi\} & \text{if $\psi=\hearts\psi_0$ for some $\hearts\in\Lambda$}\\
  					   \emptyset & \text{if $\psi\in\{\top,\bot\}$}
                     \end{cases}\\
  \Delta(\psi,\kappa)&=
                       \begin{cases}\{\psi_0\} & \text{if $\psi=\hearts\psi_0\in\kappa$ for some $\hearts\in\Lambda$}\\
                         \emptyset &\text{otherwise}
                       \end{cases}
\end{align*}
The priority function $\alpha$ is derived from the alternation depths
of variables, counting only unfoldings of fixpoints (i.e.\ all other
formulae have priority~$1$) and ensuring that least fixpoints receive
even priority and greatest fixpoints receive odd priority. That is, we
put
\begin{align*}
  \alpha(\mu X.\phi) & = 2\lfloor(\ad(X)-1)/2\rfloor+2\\
  \alpha(\nu X.\phi) & = 2\lfloor \ad(X)/2\rfloor+1\\
  \alpha(\psi) & =1 && \text{if $\psi$ is not a fixpoint literal.} 
\end{align*}
\end{defi}

(For instance, a formula of the form
$\chi_1=\nu X.\,\mu Y.\nu Z.\,\phi(X,Y,Z)$ has $\ad(X)=3$ and
$\alpha(\chi_1)=3$, while its unfolding
$\chi'_1=\mu Y.\nu Z.\,\phi(\chi_1,Y,Z)$ has $\ad(Y)=2$ and
$\alpha(\chi_1')=2$. Contrastingly, a formula of the form
$\chi_2=\mu X.\,\nu Y.\mu Z.\,\phi(X,Y,Z)$ has $\ad(X)=3$ and
$\alpha(\chi_2)=4$, while its unfolding
$\chi'_2=\nu Y.\mu Z.\,\phi(\chi_2,Y,Z)$ has $\ad(Y)=2$ and
$\alpha(\chi_2')=3$.)

Propositional tracking along a choice function
$\tau\in\mathsf{choices}$ thus follows the choice of $\tau$ for
disjunctions. Conjunctions are tracked nondeterministically to one of
their conjuncts; fixpoint literals $\psi=\eta X.\,\psi_1$ are tracked to
their unfolding $\psi_1[\psi/X]$; modal literals are left unchanged by
propositional tracking; and truth constants~$\top,\bot$ are not
further tracked at all.  In modal tracking along a selection $\kappa$,
a modal literal $\psi=\hearts\psi_0$ is tracked (to~$\psi_0$) only if
$\hearts\psi_0\in\kappa$, i.e.~if~$\kappa$ selects $\hearts\psi_0$ to
be tracked.

The priority function~$\alpha$ of~$\autom_\target$ is designed to
ensure that a run~$\rho$ -- that is, a sequence of formulae -- is
accepting iff a least fixpoint formula~$\psi$ is unfolded infinitely
often on~$\rho$ without being dominated by any outer fixpoint
formula~$\phi$, i.e.~one with $\ad(\phi)>\ad(\psi)$.  Here, we use the
term \emph{dominated} to indicate both the greater alternation depth
of~$\phi$ and the fact that~$\phi$ is also unfolded infinitely
often. As indicated above, the model checking game will relate closely
to non-deterministic tracking, and the proof of its correctness
(\autoref{thm:satisfaction-game}) will clarify that alternation
depth indeed provides an adequate mechanism to detect which of two
fixpoints is the inner one. For purposes of the nondeterministic
tracking automaton, the alphabet~$\Sigma$ is in fact overlarge, and
could be reduced to just individual choices picking a disjunct from a
single disjunction; the importance of~$\Sigma$ arises in the
(co-)determinization of~$\autom_\target$, where it ensures that enough
branching is retained.

\begin{exa}[Tracking automaton]\label{ex:ntrack}
For an example of the tracking automaton construction,
recall from \autoref{ex:logics}.\ref{item:monotone-mu} the 
monotone $\mu$-calculus formula $\chi=\nu X.\,(a\land\mu Y.\,(X \lor\mondiamond{g}Y))$
obtained from the game logic formula $\mondiamond{(g^*)^\times}a$.
As Fischer-Ladner closure of $\chi$, we have
\begin{align*}
\FLtarget=\{\chi,\,
a\land\phi,\,
a,\, \phi, \,
\chi \lor\mondiamond{g}\phi,\,
\mondiamond{g}\phi
\}
\end{align*}
with $\phi$ abbreviating the formula $\mu Y.\,(\chi \lor\mondiamond{g}Y)$.
Furthermore, we have $\mathsf{ad}(X)=2$ and $\mathsf{ad}(Y)=1$.
For this small example we have $\FLtarget_\vee=\{\chi \lor\mondiamond{g}\phi\}$, 
so that $\mathsf{choices}$ consists of just the two functions
$\tau_l$ and $\tau_r$, defined by $\tau_l(\chi \lor\mondiamond{g}\phi)=\chi$
and $\tau_r(\chi \lor\mondiamond{g}\phi)=\mondiamond{g}\phi$,
respectively. Omitting the treatment of propositional atoms as modal operators,
we have $\FLtarget\cap\Lambda(\FLtarget)=\{\mondiamond{g}\phi\}$
so that
$\mathsf{selections}=\{\emptyset,\{\mondiamond{g}\phi\}\}$.
Then we obtain the tracking automaton $\autom_\target$ depicted below,
where $\tau$ stands for any letter from $\mathsf{choices}$ and
$\kappa_{\mondiamond{g}\phi}$ denotes the letter $\{\mondiamond{g}\phi\}\in\mathsf{selections}$.
\begin{center}
\begin{tiny}
  \begin{tikzpicture}[
		auto,
    node distance=1.5cm,
    semithick
    ]
     \node[state,rectangle,rounded corners,initial above,label={left: $3$}] (e) {$\chi$};
     \node[state,rectangle,rounded corners,label={above: $1$}] (f) [right of=e] {$a\wedge\phi$};
     \node[state,rectangle,rounded corners,label={below: $2$}] (g) [below of=f] {$\phi$};
     \node[state,rectangle,rounded corners,label={left: $1$}] (h) [below of=e] {$\chi\vee\mondiamond{g}\phi$};
     \node[state,rectangle,rounded corners,label={above: $1$}] (i) [right of=g] {$\mondiamond{g}\phi$};
     \node[state,rectangle,rounded corners,label={above: $1$}] (a) [right of=f] {$a$};
     \path[->] (e) edge node [above] {$\tau$} (f);
     \path[->] (f) edge node [left] {$\tau$} (g);
     \path[->] (f) edge node [above] {$\tau$} (a);
     \path[->] (g) edge node [above] {$\tau$} (h);
     \path[->] (h) edge node [left] {$\tau_l$} (e);
     \path[->] (h) edge [bend right=50] node [below] {$\tau_r$} (i);
     \path[->] (i) edge node [above] {$\kappa_{\mondiamond{g}\phi}$} (g);
     \path[->] (i) edge [loop right] node [right] {$\tau$} (i);

  \end{tikzpicture}
\end{tiny}
\end{center}
Thus $\autom_\target$ accepts infinite words over
$\mathsf{choices}\cup\mathsf{selections}$ that branch to the left on
the disjunction $\chi \lor\mondiamond{g}\phi$ only finitely often, and
on which the automaton can infinitely often read the letter
$\{\mondiamond{g}\phi\}$ in the node $\mondiamond{g}\phi$ (and avoid
reading letters from $\mathsf{selections}$ at any other node).  Such
words have a run in which the maximal priority that is visited
infinitely often is $2$; as intended, such words encode situations
where the least fixpoint $\phi$ is unfolded infinitely often while the
outer greatest fixpoint $\chi$ is unfolded only finitely often.  We
point out that the fixpoint variable $X$ is unguarded in $\chi$; this
induces the left cycle in $\autom_\target$, on which no letter from
$\mathsf{selections}$ is ever read.
\end{exa}

\begin{rem}
  The above definition of tracking automata deviates from the one we
  used in earlier work~\cite{HausmannSchroder19} in two respects: We
  now attach priorities to the states of the automaton rather than to
  its transitions; and we have changed the propositional part of the
  alphabet in such a way that every top-level propositional or
  fixpoint operator is processed when a propositional letter is read
  (while we previously had a separate letter for every conjunction,
  disjunction, and fixpoint literal in~$\FLtarget$). Both choices
  serve mainly to ease and clarify the presentation.
\end{rem}

\noindent The non-deterministic parity automaton $\autom_\target$
introduced above has size at most~$n_0$ and priorities at most~$1$ to~$k+1$.
In order to use ${L(\autom_\target)}$ as an objective in our upcoming
satisfiability games, we require a deterministic automaton accepting
this language.  To this end, we use a standard construction
(e.g.~\cite{KKV01}) to transform $\mathsf{\autom_\target}$ into an
equivalent B\"uchi automaton of size $n_0k'$ (which has additional
states $(\phi,i)$ where $\phi\in\FLtarget$ and $1\le i\le k+1$ is
even) where
\begin{equation}
  k'=\lfloor (k+1)/2\rfloor+1\in\mathcal{O}(k).\label{eq:k-prime}
\end{equation}
Then we
determinize the B\"uchi automaton using, e.g., the
Safra/Piterman-construction~\cite{Safra88,Piterman07} and obtain an
equivalent deterministic parity automaton with $\prios$ priorities and
size $\mathcal{O}(\detsize)$.  Alternatively, direct determinization
from parity automata to parity automata~\cite{ScheweVarghese14} can be
used.  Finally we complement the obtained automaton by decreasing
every priority by~$1$, obtaining a deterministic parity automaton
\begin{equation}\label{eq:det-automaton}
  \mathsf{B}_\target=(\detcarrier, \Sigma,\delta,\initstate,\detprio)
\end{equation}
with priorities $0$ to $\prios-1$ and of size $\mathcal{O}(\detsize)$
such that $L(\mathsf{B}_\target)=\overline{L(\autom_\target)}$, i.e.\
$\mathsf{B}_\target$ is a deterministic parity automaton that accepts
the words that encode sequences of fixpoint unfoldings without
infinite deferral of least fixpoints.

We refer to $\mathsf{B}_\target$ as the \emph{co-determinized tracking
  automaton}. The states of~$\mathsf{B}_\target$ are like macrostates
in the standard powerset construction for finite-word automata, but
instead of being mere sets of states, they organize the states of the
original automaton into a tree structure. Due to
the preceding conversion of $\autom_\target$ into a Büchi automaton,
the tree nodes are labelled with sets of pairs consisting of a formula
in~$\FLtarget$ and a priority. We define a labelling function
\begin{equation*}
  l^A\colon\detcarrier\to\Pow(\FLtarget)
\end{equation*}
that maps each state~$q$ of~$\mathsf{B}_\target$ (e.g.\ a Safra tree)
to the set of formulae that occur in~$q$. That is, the labelling
function forgets the structuring of the set of formulae in macrostates
(in fact, for compact Safra trees~\cite{Piterman07}, $l^A(q)$ can be
obtained from the label of the root of~$q$ by forgetting the
priorities). We do not need to know anything about how determinization
works, except the following fact: in all (history-)determinization
procedures that we refer to in this work, labels in the above sense
evolve under transitions like macrostates in the standard powerset
construction, i.e.~we have
\begin{equation}
  \label{eq:det-label}
  l^A(\delta(q,\sigma)) = \bigcup\{\Delta(\psi,\sigma) \mid \psi\in l^A(q)\}\qquad\text{for $q\in\detcarrier$, $\sigma\in\Sigma$, $\psi\in\FLtarget$.}
\end{equation}
In particular,
\begin{equation}\label{eq:det-label-modal}
  l^A(\delta(q,\kappa)) = \{\psi\mid\hearts\psi\in\kappa\cap l^A(q)\}\qquad\text{for $q\in\detcarrier$, $\kappa\in\mathsf{selections}$}.
\end{equation}
Note also that when~$\mathsf{B}_\target$ reads a choice function in
node~$q$, then all top-level propositional operators and fixpoints in
$l^A(q)$ are processed in parallel, i.e.\ one disjunct is picked from
each disjunction, every conjunction is decomposed into its conjuncts,
and every fixpoint is unfolded.

\begin{exa}[Co-determinized tracking automaton] \label{ex:dtrack}
  To give a flavour of the co-determini\-zed tracking automaton, we go
  back to the non-deterministic automaton $\autom_\target$ for the
  monotone $\mu$-calculus formula
  $\chi=\nu X.\,(a\land\mu Y.\,(X \lor\mondiamond{g}Y))$ as detailed
  in \autoref{ex:ntrack}. Applying the above-mentioned method of
  co-determinization via an intermediate B\"uchi automaton and
  Safra/Piterman trees yields an automaton that is too large for
  presentation here. Instead we use as~$\mathsf{B}_\target$ the
  equivalent but minimized automaton given below.  For readability,
  the diagram omits an accepting sink state to which all but two modal
  transitions lead; the only modal transitions that do not lead to
  this sink state are the two
  $\kappa_{\mondiamond{g}\phi}$-transitions that are explicitly
  shown.
  \\
\begin{center}
\begin{tiny}
  \begin{tikzpicture}[
		auto,
    node distance=1.7cm,
    semithick
    ]
     \node[state,rectangle,rounded corners,initial left,label={below: $2$}] (e) {$\chi$};
     \node[state,rectangle,rounded corners,label={above: $0$}] (f) [right of=e] {$a\wedge\phi$};
     \node[state,rectangle,rounded corners,label={right: $1$}] (g) [below of=f] {$a,\phi$};
     \node[state,rectangle,rounded corners,label={below: $0$}] (h) [below of=g] {$a,\chi\vee\mondiamond{g}\phi$};
     \node[state,rectangle,rounded corners,label={right: $0$}] (i) [right of=h] {$a,\mondiamond{g}\phi$};
     \node[state,rectangle,rounded corners,label={right: $1$}] (j) [above of=i] {$\phi$};
     \node[state,rectangle,rounded corners,label={above: $0$}] (k) [above of=j] {$\chi\vee\mondiamond{g}\phi$};
     \node[state,rectangle,rounded corners,label={right: $0$}] (l) [right of=k] {$\mondiamond{g}\phi$};

     \node[state,rectangle,rounded corners,label={left: $2$}] (a) [left of=h] {$a,\chi$};
     \node[state,rectangle,rounded corners,label={left: $0$}] (b) [above of=a] {$a,a\land\phi$};
     \path[->] (e) edge node [above] {$\tau$} (f);
     \path[->] (f) edge node [right] {$\tau$} (g);
     \path[->] (g) edge node [right] {$\tau$} (h);
     \path[->] (h) edge node [above] {$\tau_l$} (a);
     \path[->] (a) edge node [left] {$\tau$} (b);
     \path[->] (b) edge node [above] {$\tau$} (g);
     \path[->] (h) edge node [below] {$\tau_r$} (i);
     \path[->] (i) edge node [right] {$\kappa_{\mondiamond{g}\phi}$} (j);
     \path[->] (i) edge [loop below] node [right] {$\tau$} (i);
     \path[->] (l) edge [loop above] node [right] {$\tau$} (l);
     \path[->] (j) edge node [left] {$\tau$} (k);
     \path[->] (k) edge node [above] {$\tau_r$} (l);
     \path[->] (l) edge [bend left=20] node [right] {$\kappa_{\mondiamond{g}\phi}$} (j);
     \path[->] (k) edge [bend right=40] node [above] {$\tau_l$} (e);

  \end{tikzpicture}
\end{tiny}
\end{center}
Nodes in this automaton are labelled with sets of formulae as shown,
according to the labelling function
$l^A:\detcarrier\to\Pow(\FLtarget)$ (however, we emphasize that a node
need not be uniquely determined by its label, even though this happens
to be the case in the example). For instance, let $q$ denote the node
in the lower left corner of the automaton; then
$l^A(q)=\{a,\target\}$.  Acceptance is dual to the tracking
automaton~$\autom_\target$. That is, an infinite word~$w$ is accepted
by~$\mathsf{B}_\target$ if either a)~$w$ picks the left disjunct from
$\chi\vee\mondiamond{g}\phi$ infinitely often, ensuring that the left
part of the automaton, and thereby also a node with priority~$2$, is
visited infinitely often; or b) $w$ contains only finitely many
letters from $\mathsf{selections}$ (so the automaton eventually loops
forever at the bottom right node); or c)~$w$ contains a letter from
$\mathsf{selections}$ at a position such that~$\mathsf{B}_\target$ is,
after reading the word up to this position, in a node that does not
contain $\mondiamond{g}\phi$ in its label (so the run ends up in the
accepting sink state).
\end{exa}

\begin{rem}
  It has been noted that for the relational $\mu$-calculus, tracking
  automata for \emph{aconjunctive} formulae are
  \emph{limit-deterministic} parity
  automata~\cite{HausmannEA18}. These considerably simpler automata
  can be determinized to deterministic parity automata of size
  $\mathcal{O}((n_0k')!)$ and with $\prios$
  priorities~\cite{EsparzaKRS17, HausmannEA18} (with~$k'$ as
  per~\eqref{eq:k-prime}), an observation that can also be used for
  the tracking automata in this work. For aconjunctive formulae, one
  thus has a correspondingly improved bound on the runtime of our
  satisfiability checking algorithm than stated for the general case
  in \autoref{lem:complexity} below.

  It has also been shown that tracking automata for \emph{guarded} and
  \emph{alternation-free} formulae can be seen as co-B\"uchi
  automata~\cite{FriedmannLatteLange13}
  so that the simpler determinization procedure for co-B\"uchi
  automata~\cite{MiyanoHayashi1984} can be used for such formulae,
  and the resulting satisfiability games have a B\"uchi objective
  rather than the more involved parity objective required in the general case.
  However, in our setting the tracking automata have to correctly
  deal with unguarded fixpoint variables and hence assign
  priorities greater than $1$ exclusively to fixpoint formulae
  (for instance priority $2$ is assigned to formulae of the shape
  $\mu X. (\psi\vee\Diamond X)$ but priority $1$ is assigned to formulae
  of the shape $\psi\vee\Diamond (\mu X. (\psi\vee\Diamond X))$ even though the
  latter formula arises by unfolding the fixpoint in the former formula)
  so that our tracking automata are not immediately co-B\"uchi automata when
  the target formula is alternation-free. Having said that, it indeed is possible to
  use co-B\"uchi automata for \emph{unguarded} alternation-free formulae,
  namely by separating
  local and global tracking and using two different automata, one being a 
  reachability automaton used for detecting infinite local unfolding of unguarded least
  fixpoint formulae and the other being a co-B\"uchi automaton used for
  detecting infinite deferral of guarded least fixpoints; this method has been described
  in~\cite{EmersonJutla99}. We refrain from introducing this more involved setup here,
  so for the moment, our results do not immediately allow the use of co-B\"uchi methods
  for unguarded alternation-free formulae.
\end{rem}

\begin{rem}
  \emph{History-deterministic} automata~\cite{HenzingerPiterman06}
  allow a limited amount of non-determinism but still can easily be
  complemented and combined with a game arena to obtain a game (and
  hence have been introduced under the name \emph{good-for-games
    automata}).  They allow nondeterministic transitions under the
  condition that all nondeterministic choices in an accepting run can
  be resolved by looking only at the history of the run so
  far. Intuitively, the non-determinism in history-deterministic
  automata cannot make guesses about the future of runs. It follows
  from the results of Henzinger and
  Piterman~\cite{HenzingerPiterman06} that instead of full
  determinization of $\mathsf{A}_\target$, it suffices to turn
  $\mathsf{A}_\target$ into an equivalent history-deterministic
  automaton, which then can be complemented and used instead of
  $\mathsf{B}_\target$ in the subsequent development.  For general
  formulae, $\mathsf{A}_\target$ is a parity automaton and can be
  history-determinized by first transforming to a B\"uchi automaton
  and then using the method for history-determinization of B\"uchi
  automata described in~\cite{HenzingerPiterman06}. This method is
  conceptually simpler than full determinization of B\"uchi automata
  by the Safra-Piterman construction and avoids constructing
  Safra-trees, even though it does not reduce the number of states in
  the obtained automaton in comparison to full determinization.
\end{rem}

\begin{rem}[Tracking automata for unguarded
  formulae]\label{rem:unguarded}
  A tableau-based method for deciding satisfiability of unguarded
  formulae of the relational $\mu$-calculus has been introduced by
  Friedmann and Lange~\cite{FriedmannLange13a}. The method augments
  states in the tracking automaton with an additional bit indicating
  \emph{activity} of formulae (meaning that the respective formula has
  been manipulated by the last transition of the tracking automaton),
  doubling the size of the tracking automaton. The acceptance
  condition of the tracking automaton then is modified in order to
  accept only such branches that contain some trace that is both
  infinitely often active and on which some least fixpoint is unfolded
  infinitely often without being dominated. As the propositional
  tableau rules in~\cite{FriedmannLange13a} manipulate one
  propositional formula at a time, one also needs to introduce an
  additional tableau rule in order to ensure fairness of unfolding of
  unguarded fixpoint formulae.

  In the current work we define propositional transitions of the
  tracking automaton $A_\target$ in such a way that all propositional
  formulae are processed whenever a single propositional letter
  $\tau\in\mathsf{choices}$ is read. Hence fairness of fixpoint
  unfolding is inherent to our method. Moreover, the only inactive
  transitions (i.e.\ transitions that do not manipulate the tracked
  formula) in $A_\target$ are transitions of the shape
  $(\hearts\psi,\tau,\hearts\psi)$ for some modal literal
  $\hearts\psi$ and $\tau\in\mathsf{selections}$.  Since $\alpha(\hearts\psi)=1$, all accepting
  runs of $A_\target$ are active by construction. Thus our method
  readily handles unguarded formulae without requiring an activity
  bit.
  
\end{rem}

\paragraph{Model Checking Games}

It will be technically convenient to use a game characterization of
$\mu$-calculus semantics.  Recall that parity games are
infinite-duration two-player games, played by the \emph{existential}
and the \emph{universal} player (also referred to as players $\exists$ and $\forall$). 
A parity game is given by a tuple
\begin{equation*}
\game=(V_\forall,V_\exists,E,v_0,\Omega)
\end{equation*}
where $V=V_\forall\cup V_\exists$ is a set of \emph{positions},
partitioned disjointly into the set $V_\exists$ of positions owned by
the existential player and the set $V_\forall=V\setminus V_\exists$ of
positions owned by the universal player; $E\subseteq V\times V$ is the
set of \emph{moves}; $v_0$ is an \emph{initial position} and
$\Omega:V\to\mathbb{N}$ is the \emph{priority function}, which assigns
priorities $\Omega(v)\in\mathbb{N}$ to states $v\in V$. We write
$E(v)=\{v'\in V\mid (v,v')\in E\}$ for $v\in V$. A \emph{path} through
$(V,E)$ is a finite or infinite sequence $v_0,v_1,\ldots$ such that
$v_{i}\in E(v_{i-1})$ for all $i>0$. A \emph{play} is a \emph{maximal}
path $\pi=v_0,v_1,\ldots$ through $(V,E)$, i.e.~$\pi$ is either
infinite or ends in a position~$v$ such that $E(v)=\emptyset$. A
play~$\pi$ is \emph{winning} for the existential player if it is
finite and ends with the universal player being stuck, or if it is
infinite and the highest priority that is visited infinitely often
on~$\pi$ is even; otherwise,~$\pi$ is winning for the universal
player. Formally,~$\pi$ is \emph{winning} for the existential player
if either $\pi=v_0,v_1,\ldots, v_n$ is finite and $v_n\in V_\forall$,
or~$\pi$ is infinite and $\max(\Inf(\Omega\circ\pi))$ is even.  A
(history-dependent) \emph{strategy} for the existential player is a
partial function $s:V^*V_\exists\rightharpoonup V$ such that
$(v_n,s(v_0,v_1,\ldots,v_n))\in E$ for all partial plays
$v_0,v_1,\ldots,v_n\in V^*V_\exists$ such that $E(v_n)\neq \emptyset$;
for positions $v_n\in V_\exists$ such that $E(v_n)= \emptyset$, $s$ is
undefined on all inputs $v_0,v_1,\ldots,v_n\in V^*V_\exists$.  A play
$\pi=v_0,v_1,\ldots$ is \emph{compatible} with (or \emph{follows}) a
strategy~$s$ for the existential player if $v_{n+1}=s(v_0,\ldots,v_n)$
for all~$i$ such that $v_n\in V_\exists$ and~$v_n$ is not the last
position in~$\pi$. A strategy $s$ is \emph{history-free} if
$s(v_0,\ldots,v_{n-1},v_n)$ depends only on~$v_n$; then $s$ is a
partial function from~$V_\exists$ to~$V$. The existential player wins
a position $v\in V$ if there is a strategy~$s$ such that every play
that starts at $v$ and is compatible with~$s$ is won by the
existential player; similar notions of (winning) strategies for the
universal player are defined dually. The game $\game$ is won by the
player that wins $v_0$.

\begin{lem}[History-free determinacy~\cite{Martin75,EmersonJutla91}]\label{lem:pargame-det}
Parity games are history-free determined, that is,
every position is won by (exactly) one of the two players, and then
there is a history-free strategy that wins the position for the respective
player.
\end{lem}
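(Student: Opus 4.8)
The plan is to establish Lemma~\ref{lem:pargame-det} in two stages, matching the two cited sources: \emph{determinacy} of parity games via Borel determinacy~\cite{Martin75}, and then the existence of \emph{history-free} winning strategies on the winning regions by an inductive attractor decomposition in the style of~\cite{EmersonJutla91}. For the first stage, I would note that the parity winning condition is a Borel set lying low in the hierarchy: for each priority~$p$, the set of plays visiting~$p$ infinitely often is $\Pi^0_2$ (a countable intersection of countable unions of clopen sets), and ``the maximal priority visited infinitely often is even'' is a finite Boolean combination of such sets. Martin's Borel determinacy theorem then yields that~$\game$ is determined, i.e.\ $V=W_\exists\cup W_\forall$ where $W_\exists$ and $W_\forall$ are the sets of nodes won by the existential, respectively the universal, player with arbitrary (possibly history-dependent) strategies; disjointness of $W_\exists$ and $W_\forall$ is immediate since no single play is winning for both players, so this already gives the ``exactly one'' part.

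For the second stage I would proceed by induction on the number~$d$ of priorities occurring in~$\game$ (for the finite arenas that actually arise here, such as the satisfiability game built over~$\detcarrier$, one may equivalently induct on~$|V|$, which is sometimes smoother). Let~$d$ be the top priority; by the symmetry that swaps the two players together with the parity of all priorities, assume~$d$ is even, and let~$N$ be the set of nodes of priority~$d$. Form the attractor $A=\mathsf{Attr}_\exists(N)$ --- the set of nodes from which the existential player can force a visit to~$N$ in finitely many steps --- together with its canonical history-free attractor strategy. The restriction~$\game'$ of~$\game$ to $V\setminus A$ is a parity game with no new dead ends (an existential node of $V\setminus A$ has no successor in~$A$, else it would itself lie in~$A$; a universal node of $V\setminus A$ has a successor in $V\setminus A$, for the same reason) using only priorities below~$d$, so the induction hypothesis splits $V\setminus A$ into history-free winning regions $W'_\exists,W'_\forall$. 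If $W'_\forall=\emptyset$, the existential player wins all of~$\game$ history-freely by playing her attractor strategy on~$A$ and her winning strategy~$\sigma'$ on $V\setminus A$: every infinite play either visits~$N$, hence the even top priority~$d$, infinitely often, or from some point on is a $\sigma'$-conforming play of~$\game'$. Otherwise, set $B=\mathsf{Attr}_\forall(W'_\forall)$; the universal player wins~$B$ history-freely by attracting to~$W'_\forall$ and then following her history-free $\game'$-strategy there, using that~$W'_\forall$ is a trap (the existential player cannot leave it, neither into~$W'_\exists$ nor into~$A$), and one recurses on the strictly smaller complement $V\setminus B$ and takes unions of the resulting regions and strategies.

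The main obstacle is the bookkeeping in the inductive step rather than any single deep idea: one must verify that each complement of an attractor is a genuine subgame (no spurious dead ends), that attractor regions are traps so that the opponent's moves cannot leak out of the region for which a strategy was designed, and that every infinite play consistent with the combined strategy is classified correctly --- in particular that it either meets~$N$ infinitely often or is eventually governed by a subgame strategy all of whose priorities lie below~$d$. A further subtlety arises when the arena is infinite: removing an attractor need not reduce~$|V|$, so the induction must be carried out on the number of priorities (always finite, bounded by $2n_0k$ in our setting) and organised more carefully so that the recursion reduces the priority count, as in the original argument of~\cite{EmersonJutla91}, or else one supplements Borel determinacy with a progress-measure/signature construction to extract history-free strategies uniformly over a possibly infinite winning region. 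For the finite satisfiability game these complications do not occur.
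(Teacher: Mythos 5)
The paper does not prove this lemma at all: it is imported verbatim from the literature, with the bracketed citations to Martin and to Emerson--Jutla serving as the proof. So there is no in-paper argument to compare against; what you have written is a proof sketch of the cited classical result. As such, it is the standard and correct route: Borel determinacy applies because the parity condition is a finite Boolean combination of $\Pi^0_2$ sets, and history-freeness is obtained by the McNaughton--Zielonka attractor decomposition (equivalently, Emerson--Jutla's signature argument). Your verification points in the inductive step --- no new dead ends in the complement of an attractor, winning regions and attractor complements being traps for the appropriate player, and the classification of infinite plays as either hitting $N$ infinitely often or eventually conforming to a subgame strategy --- are exactly the places where the bookkeeping lives, and your treatment of them is sound.

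One remark on the caveat you raise about infinite arenas: it is not merely hypothetical in this paper. The satisfiability game $\game_\target$ over $\detcarrier$ is finite, but the model checking game $\game_{\target,(C,\xi)}$ is built over an arbitrary coalgebra $(C,\xi)$ and over $\Pow(C)\times\FLtarget$, so it is infinite whenever $C$ is, and the paper invokes history-free determinacy for it (e.g.\ in the proofs of Lemmas~\ref{lem:parity-bisim} and~\ref{lem:satisfaction-game}). Hence the recursion on $|V|$ genuinely does not suffice for the paper's purposes, and the version of the induction that descends on the (always finite, here at most $2n_0k$) number of priorities, or the signature/progress-measure construction, is actually needed --- which you correctly flag and correctly attribute to the cited sources. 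Within the stated scope of a sketch, I see no gap.
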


Given a parity game $\game$ with set~$V$ of positions and a set
$W\subseteq V$, we let $\mathsf{win}^\exists_W$ and
$\mathsf{win}^\forall_W$ denote the set of positions for which the
respective player has a winning strategy in $\game$ such that every
play that is compatible with the strategy remains within~$W$.
Positions $v\in W$ such that
$v\notin\mathsf{win}^\exists_W\cup \mathsf{win}^\forall_W$ are
undetermined (w.r.t. $W$); for such~$v$, neither player has a strategy
that wins~$v$ while staying in~$W$.  As parity games are determined,
we always have $V= \mathsf{win}^\exists_V\cup
\mathsf{win}^\forall_V$. We write $\mathsf{win}^\exists$ for
$\mathsf{win}^\exists_V$ and $\mathsf{win}^\forall$ for
$\mathsf{win}^\forall_V$.

Winning regions in parity games are clearly invariant under
bisimulation (e.g.~\cite{DawarGraedel08,CranenEA18}). We need only
invariance under functional bisimulations. Explicitly, given parity
games $\game=(V_\forall,V_\exists,E,\Omega)$,
$\game'=(V'_\forall,V'_\exists,E',\Omega)$ (we elide initial
positions), a \emph{bounded morphism} $f\colon \game\to\game'$ is a
map $f\colon V\to V'$ whose graph is a bisimulation (see also the
definition for Kripke frames in \autoref{sec:prelims}); that is:~$f$
preserves position ownership and priorities, and (i)~whenever
$(v,v')\in E$, then $(f(v),f(v'))\in E'$; and (ii)~whenever
$(f(v),u')\in E'$, then there exists $(v,v')\in E$ such that
$f(v')=u'$. The mentioned invariance then takes the following
shape:
\begin{lem}[Invariance of parity games under bounded
  morphisms]\label{lem:parity-bisim}
  Let $\game=(V_\forall,V_\exists,E,\Omega)$,
  $\game'=(V'_\forall,V'_\exists,E',\Omega)$ be parity games,
  let~$v\in V$, and let $f\colon\game\to\game'$ be a bounded
  morphism. Then~$\exists$ wins position~$v$ in $\game$ iff~$\exists$
  wins $f(v)$ in~$\game'$.
\end{lem}

We next provide a parity game characterization of formula
satisfaction.  This characterization highlights the close relationship
between satisfaction of fixpoints and \mbox{(non-)}acceptance of runs
in the tracking automaton $\mathsf{A}_\target$ (a nondeterministic
parity automaton). Since the satisfiability game that we introduce in
\autoref{section:alg} will be based on the co-determinized
tracking automaton~$\mathsf{B}_\target$, this relationship will be key
in the correctness proof of the satisfiability game, specifically in
the proof of the truth lemma (\autoref{lem:truth}) and in the
soundness proof (\autoref{lem:modeltogame}).

\begin{defi}[Model checking games]\label{def:mc-games}
  Given a coalgebra $(C,\xi)$, the \emph{model checking game}
  $\game_{\target,(C,\xi)}=(V_\exists,V_\forall,E,\Omega)$ for $\target$
  over $(C,\xi)$ is a parity game with sets of positions
  $V=V_\exists\cup V_\forall$ defined by
\begin{align*}
  V_\exists &= C\times \FLtarget_\exists & V_\forall &= (C\times \FLtarget_\forall)\cup (\Pow(C)\times\FLtarget).
\end{align*}
Here, $\FLtarget_\exists$ consists of those formulae in~$\FLtarget$
that are disjunctions, modal literals, fixpoint literals, or~$\bot$, while
$\FLtarget_\forall=\FLtarget\setminus\FLtarget_\exists$ consists of
those formulae in~$\FLtarget$ that are conjunctions
or~$\top$. 
The moves and priorities in the game are
given by the following table (the ownership of positions is already
defined, and mentioned in the table only for readability)
\begin{center}
\begin{tabular}{|c|c| m{25em} |c|}
\hline
position & owner & set of allowed moves & priority\\
\hline
$(x,\psi)$ & $\exists$/$\forall$ & $\{(x,\phi)\in\{x\}\times\Delta(\psi,\tau)\mid \tau\in\mathsf{choices}\}$ & $\alpha(\psi)-1$\\
$(x,\hearts\psi)$ & $\exists$ & $\{(D,\psi)\in\Pow(C)\times\Delta(\hearts\psi,\kappa)\mid \kappa\in\mathsf{selections},\xi(x)\in\sem{\hearts}D\}$ & $0$\\
$(D,\psi)$ & $\forall$ & $\{(x,\psi)\mid x\in D\}$ & $0$\\
\hline
\end{tabular}
\end{center}  \medskip

\noindent In the above table, the moves available to the players have
been formulated in such a way that the mentioned relationship to the
tracking automaton
$\autom_\target=(\FLtarget,\Sigma,\Delta,\initstate,\alpha)$ becomes
clear. In more detail, given an infinite play~$\pi$ in
$\game_{\target,(C,\xi)}$, the sequence of formulae~$\psi$ encountered
on positions of the form $(x,\psi)$ is a run~$\rho$
of~$\autom_\target$ on a word~$w\in\Sigma^\omega$ extracted from~$\pi$
in an obvious manner (specifically, a move from $(x,\psi)$ to
$(x,\phi)$ where $\phi\in\Delta(\psi,\tau)$ adds~$\tau$ to~$w$, and a
move from $(x,\hearts\psi)$ to $(D,\psi)$ adds some
$\kappa\in\mathsf{selections}$ such that $\hearts\psi\in\kappa$
to~$w$). Even though~$w$ is not uniquely determined by~$\pi$, we
nevertheless refer to~$w$ as the word \emph{induced}
by~$\pi$. Then,~$\pi$ is won by~$\exists$ iff~$\rho$ is
\emph{non-accepting}.

The moves from states of the form $(x,\psi)$ are more explicitly
described as follows, depending on the shape of~$\psi$. The
existential player can move from $(x,\psi_1\vee\psi_2)$ to
$(x,\psi_i)$ for any $i\in\{1,2\}$; each such move is witnessed by any
$\tau\in\mathsf{choices}$ such that $\tau(\psi_1\vee\psi_2)=\psi_i$.
The universal player can move from $(x,\psi_1\wedge\psi_2)$ to
$(x,\psi_i)$ for any $i\in\{1,2\}$.  For fixpoint literals
$\psi=\eta X.\,\phi$, the existential player moves from $(x,\psi)$ to
$(x,\phi[\psi/X])$; ownership of the position is purely formal in this
case. For $\hearts\psi\in\FLtarget$, the existential player can move
from $(x,\hearts\psi)$ to $(D,\psi)$ for any set $D\subseteq C$ such
that $\xi(x)\in\sem{\hearts} D$; each such move is witnessed by any
$\kappa\in\mathsf{selections}$ such that $\hearts\psi\in\kappa$, hence
$\Delta(\hearts\psi,\kappa)=\{\psi\}$. The universal player in turn
can challenge satisfaction of $\psi$ at any state $x\in D$ contained
in the set $D$ provided by the existential player by moving from
$(D,\psi)$ to $(x,\psi)$. The definition of~$\Omega$ implies that the
only positions with non-zero priority are those of the form
$(x,\eta X.\,\psi)$, with $\Omega(x,\eta X.\,\psi)$ being even if
$\eta=\nu$, and odd if $\eta=\mu$.

For technical purposes, we introduce a variant of the model checking
game, the \emph{subformula model checking game}
$\sfgame_{\target,(C,\xi)}$ (the technical advantage of subformulae is
that they allow for proofs by structural induction, a principle that
we will employ in the correctness proof). The positions of
$\sfgame_{\target,(C,\xi)}$ have the same shape as those
of~$\game_{\target,(C,\xi)}$ except that subformulae occur in
positions of $\sfgame_{\target,(C,\xi)}$ wherever elements
of~$\FLtarget$ occur in positions of $\game_{\target,(C,\xi)}$. The
ownership, priorities, and outgoing moves of positions are defined
in~$\sfgame_{\target,(C,\xi)}$ in the same way as
in~$\game_{\target,(C,\xi)}$; in particular, for positions of the form
$(x,\psi)$, these data are defined by case distinction on the
outermost connective of~$\psi$ like in~$\game_{\target,(C,\xi)}$,
except for the following provisos: Given a variable~$X$ such that
$\theta(X)=\eta X.\,\psi$, positions of the shape $(x,\eta X.\,\psi)$
or $(x,X)$ receive
priority~$\Omega(x,\theta^*(X))=\alpha(\theta^*(X))-1$ and belong
to~$\exists$, who has only one move, to $(x,\psi)$.
\end{defi}
\begin{rem}[Higher-arity modalities]\label{rem:arity}
  This is the one place in the technical development where a comment
  is in order on how precisely the treatment of higher-arity
  modalities works: A \emph{modal tracker} $(\psi,i)$ consists of a
  formula $\psi=\hearts(\psi_1,\dots,\psi_n)\in\FLtarget$, where
  $\hearts\in\Lambda$ is an $n$-ary modality, and an index
  $i\in\{1,\dots,n\}$ indicating which argument position will be
  tracked; selections are then sets~$\kappa$ of modal trackers. The
  transition relation~$\Delta$ of the tracking
  automaton~$\autom_\target$ is given on such~$\psi,\kappa$ by
  $\Delta(\psi,\kappa)=\{\psi_i\mid (\psi,i)\in\kappa\}$, i.e.\ the
  selection of arguments to be tracked introduces additional
  nondeterminism. In the model checking
  game~$\game_{\target,(C,\xi)}$, $\exists$ can move from $(x,\psi)$
  to any position of the form $((D_1,\psi_1),\dots,(D_n,\psi_n))$
  (again of priority~$0$) such that $D_1,\dots,D_n\subseteq C$ and
  $\xi(x)\in\sem{\hearts}_X(D_1,\dots,D_n)$; that is,~$\exists$ must
  provide a set of states for each argument of~$\hearts$ in~$\psi$.
  From $((D_1,\psi_1),\dots,(D_n,\psi_n))$,~$\forall$ can then move to
  $(y,\psi_i)$ if $y\in D_i$. In the correspondence between plays
  in~$\game_{\target,(C,\xi)}$ and runs of~$\autom_\target$, the two
  subsequent moves from $(x,\psi)$ to $(y,\psi_i)$ then contribute a
  letter~$\kappa$ such that $(\psi,i)\in\kappa$ to the induced
  word~$w$.
\end{rem}
\begin{exa}[Model checking game]
  We revisit the formula
  $\chi=\nu X.\,(a\land\mu Y.\,(X \lor\mondiamond{g}Y))$ from
  \autoref{ex:ntrack} (the translation of the game logic formula
  $\mondiamond{(g^*)^\times}a$), aiming to check its satisfaction over
  the neighbourhood model $(C,\xi)$ shown below (with~$g$ assumed to
  be the only atomic game). States are depicted by circles, and
  neighbourhoods (that is, sets of states) are depicted by rectangles.
\begin{center}
\begin{tiny}
  \begin{tikzpicture}[
		auto,
    node distance=1.5cm,
    semithick
    ]
     \node[state,label={left: $a$}] (e) {$x$};
     \node[state,rectangle] (f) [right of=e] {$\{x,y\}$};
     \node[state] (g) [below of=f] {$y$};
     \node[state,rectangle] (h) [below of=e] {$\{y\}$};
     \path[->] (e) edge [bend left=20] node [above] {$g$} (f);
     \path[->] (f) edge [bend left=20] node [above] {} (e);
     \path[->] (f) edge [bend right=20] node [left] {} (g);
     \path[->] (g) edge [bend right=20] node [right] {$g$} (f);
     \path[->] (e) edge node [left] {$g$} (h);
     \path[->] (h) edge node [left] {} (g);

  \end{tikzpicture}
\end{tiny}
\end{center}
For instance, at~$x$, Angel can force the game~$g$ into~$y$
(corresponding to~$\{y\}$ being a $g$-neighbourhood at~$x$), while
Angel cannot influence what happens at~$y$ (whose only
$g$-neighbourhood is the whole set~$\{x,y\}$).  Intuitively, the
formula $\chi$ is satisfied at $x$ since Angel can completely avoid
playing $g$ by stopping~$g^*$ immediately and satisfy $a$ at $x$
whenever, in $(g^*)^\times$, Demon decides to play~$g^*$ one more
time.  On the other hand, the formula is not satisfied at $y$ since
that state does not satisfy $a$ and Angel cannot force the game out
of~$y$.  The corresponding model checking game
$\game_{\target,(C,\xi)}$ is as follows, with rounded nodes belonging
to~$\exists$ and rectangles belonging to~$\forall$;
again,~$\phi$ abbreviates the formula
$\mu Y.\,(\chi \lor\mondiamond{g}Y)$.
\begin{center}
\begin{tiny}
  \begin{tikzpicture}[
		auto,
    node distance=1.5cm,
    semithick
    ]
     \node[state,rectangle, rounded corners,label={above: $2$}] (a) {$x,\chi$};
     \node[state,rectangle,label={above: $0$}] (b) [left of=a] {$x,a\land\phi$};
     \node[state,rectangle, rounded corners,label={left: $1$}] (c) [below of=b] {$x,\phi$};
     \node[state,rectangle, rounded corners,label={above: $0$}] (d) [left of=b] {$x,a$};
     \node[state,rectangle, label={left: $0$}] (da) [left of=c] {$()$};
     \node[state,rectangle, rounded corners,label={below: $0$}] (e)  [below of=a] {$x,\chi\lor\mondiamond{g}\phi$};
     \node[state,rectangle, rounded corners,label={above: $0$}] (h) [right of=e] {$x,\mondiamond{g}\phi$};
     \node[state,rectangle,label={above: $0$}] (g) [right of=h] {$\{x,y\},\phi$};
     \node[state,rectangle,label={above: $0$}] (f) [above of=g] {$\{y\},\phi$};
     \node (yo1) [right of=f] {};
     \node[state,rectangle,label={above: $0$}] (m) [right of=yo1] {$y,a\land\phi$};
     \node[state,rectangle, rounded corners,label={above: $0$}] (n) [left of=m] {$y,a$};
     \node[state,rectangle, rounded corners,label={above: $2$}] (j) [right of=m] {$y,\chi$};
     \node[state,rectangle, rounded corners,label={below: $1$}] (l) [below of=m] {$y,\phi$};

     \node[state,rectangle, rounded corners,label={below: $0$}] (k) [right of=l] {$y,\chi\lor\mondiamond{g}\phi$};
     \node[state,rectangle, rounded corners,label={above: $0$}] (i) [right of=k] {$y,\mondiamond{g}\phi$};

     \path[->] (a) edge node [above] {} (b);
     \path[->] (b) edge node [left] {} (c);
     \path[->] (b) edge node [left] {} (d);
     \path[->] (d) edge node [left] {} (da);
     \path[->] (c) edge node [left] {} (e);
     \path[->] (e) edge node [left] {} (a);
     \path[->] (e) edge node [left] {} (h);
     \path[->] (h) edge node [left] {} (f);
     \path[->] (h) edge node [left] {} (g);
     \path[->] (g) edge [bend left=40] node [left] {} (c);
     \path[->] (f) edge node [left] {} (l);
     \path[->] (g) edge node [left] {} (l);
     \path[->] (l) edge node [left] {} (k);
     \path[->] (k) edge node [left] {} (j);
     \path[->] (k) edge node [left] {} (i);
     \path[->] (j) edge node [left] {} (m);
     \path[->] (m) edge node [left] {} (n);
     \path[->] (m) edge node [left] {} (l);
     \path[->] (i) edge [bend left=30] node [left] {} (g);

  \end{tikzpicture}
\end{tiny}
\end{center}
The game essentially consists of two copies of the automaton
$\autom_\target$ from \autoref{ex:ntrack}; priorities in the game are
also inherited from $\autom_\target$.  The two copies of the automaton
are linked by the positions $(\{y\},\phi)$ and $(\{x,y\},\phi)$ that
encode the evaluation of the modality $\mondiamond{g}\phi$.  By
definition of the model, we have $\xi(x)(g)=\{\{x,y\},\{y\}\}$ and
$\xi(y)(g)=\{\{x,y\}\}$. Recalling the predicate lifting
$\sem{\mondiamond{g}}$ for the monotone diamond from
\autoref{ex:logics}.\ref{item:monotone-mu}, we thus have
$\xi(x)\in \sem{\mondiamond{g}} (D)$ for both $D=\{x,y\}$ and
$D=\{y\}$ but $\xi(y)\in \sem{\mondiamond{g}} (D)$ only for
$D=\{x,y\}$; the moves to the central positions $(\{x,y\},\phi)$ and
$(\{y\},\phi)$ are induced accordingly. The model checking game treats
the propositional atom~$a$ as a nullary modality as per
\autoref{rem:arity}: From positions of shape $(z,a)$, player
$\exists$ can move to~$()$ (a $0$-tuple of set/formula pairs),
provided that~$a$ is satisfied at~$z$. If this holds, then~$\exists$
wins because~$\forall$ has no moves at~$()$; otherwise,~$\exists$
loses, being stuck at $(z,a)$. In the example,~$\exists$
correspondingly wins~$(x,a)$ but loses $(y,a)$.

Player $\exists$ wins the left-most five positions in this game by the strategy that
always moves from position $(x,\chi\lor\mondiamond{g}\phi)$ to position
$(x,\chi)$; this enforces that plays of the game that start in 
one of these positions either get stuck in the position $(\emptyset,a)$ (which belongs to player $\forall$ but has no outgoing transitions and hence is won by player $\exists$),
or forever follow the cycle through position $(x,\chi)$ and thereby visit priority
$2$ infinitely often.
By the correctness of model checking games, this shows that $x$ satisfies all formulae from
$\FLtarget$ except $\mondiamond{g}\phi$.
All other positions in the game are won by player $\forall$ using the strategy
that always moves from $(\{x,y\},\phi)$ to $(y,\phi)$ and from
$(y,a\land\phi)$ to $(y,a)$; with this strategy, player $\forall$ can enforce
that plays either get stuck in position $(y,a)$ (lost by player $\exists$)
or eventually take the bottom-right cycle forever,
seeing priority $1$ infinitely often. This shows that none of the formulae from
$\FLtarget$ are satisfied at $y$.

\end{exa}

\noindent We note that the two versions of the model checking games
are bisimilar, and hence equivalent (\autoref{lem:parity-bisim}):
\begin{lem}\label{lem:sfgame-bounded-mor}
  The assignment $f(x,\psi)=(x,\theta^*(\psi))$,
  $f(D,\psi)=(D,\theta^*(\psi))$ defines a bounded morphism
  $f\colon\sfgame_{\target,(C,\xi)}\to\game_{\target,(C,\xi)}$.
\end{lem}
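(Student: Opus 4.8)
The plan is to check directly that the graph of~$f$ is a bisimulation between the parity games $\sfgame_{\target,(C,\xi)}$ and $\game_{\target,(C,\xi)}$, i.e.\ that~$f$ preserves ownership and priorities of nodes and satisfies the forth and back conditions on moves; the claim then follows, and via Lemma~\ref{lem:parity-bisim} so does the announced equivalence of the two games. The main tool is the standard behaviour of the Fischer--Ladner expansion $\theta^*\colon\sub(\target)\to\FLtarget$ (cf.~\cite{Kozen83}): it is total, it fixes closed formulae, it commutes with the Boolean and modal connectives (e.g.\ $\theta^*(\psi_1\land\psi_2)=\theta^*(\psi_1)\land\theta^*(\psi_2)$ and $\theta^*(\hearts\psi)=\hearts\,\theta^*(\psi)$), and on a variable~$X$ it satisfies $\theta^*(X)=\theta^*(\theta(X))$. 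Moreover, writing $\theta(X)=\eta X.\,\psi_X$, the formula $\theta^*(X)$ (which equals $\theta^*(\theta(X))$) is a fixpoint literal of the form $\eta X.\,\phi_X$ whose one-step unfolding satisfies
\begin{equation*}
  \phi_X[\theta^*(X)/X]=\theta^*(\psi_X).
\end{equation*}

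First I would observe that~$f$ is well-defined on node sets and preserves ownership: since $\theta^*$ sends disjunctions to disjunctions, conjunctions to conjunctions, modal literals to modal literals, $\bot,\top$ to themselves, and both variables and fixpoint literals to fixpoint literals, $f$ maps $\exists$-nodes to $\exists$-nodes, $\forall$-nodes to $\forall$-nodes, and nodes in $\Pow(C)\times(-)$ to nodes in $\Pow(C)\times(-)$. For priorities, all nodes other than those of the form $(x,\eta X.\,\psi)$ or $(x,X)$ have priority~$0$ on both sides (on the $\sfgame$ side by the proviso in Definition~\ref{def:mc-games} that the game is defined as $\game_{\target,(C,\xi)}$, on the $\game$ side because $\alpha(\psi)=1$ for non-fixpoint~$\psi$); and a node $(x,\eta X.\,\psi)$ or $(x,X)$ of $\sfgame_{\target,(C,\xi)}$ has priority $\alpha(\theta^*(X))-1$ by the proviso, while its image $(x,\theta^*(X))$ is a fixpoint literal with outermost fixpoint variable~$X$ and hence carries the same priority $\alpha(\theta^*(X))-1$ in $\game_{\target,(C,\xi)}$ (recall that $\alpha$ of a fixpoint literal depends only on its outermost fixpoint variable).

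For the forth and back conditions I would do a case distinction on the shape of the source node. If the source is $(x,\psi)$ with~$\psi$ a disjunction, conjunction, or modal literal, or if it is $(D,\psi)$, then the allowed moves in $\sfgame_{\target,(C,\xi)}$ and in $\game_{\target,(C,\xi)}$ are given by literally the same recipe -- the same case split and the same witnessing choice functions~$\tau$, selections~$\kappa$, or sets~$D$ with $\xi(x)\in\sem{\hearts}D$ -- and $\theta^*$ commutes with the connective involved; hence the $f$-image of every move is a move, and every move out of the image of the source arises this way, giving both conditions. The remaining case is a fixpoint-literal node $(x,\eta X.\,\psi)$ or a variable node $(x,X)$ with $\theta(X)=\eta X.\,\psi$: on the $\sfgame$ side~$\exists$ has the unique move to $(x,\psi)$, while~$f$ maps the source to the fixpoint literal $(x,\theta^*(X))$ (here using $\theta^*(X)=\theta^*(\theta(X))$ for the variable node), from which~$\exists$ has in $\game_{\target,(C,\xi)}$ the unique move to $(x,\phi_X[\theta^*(X)/X])$ where $\theta^*(X)=\eta X.\,\phi_X$. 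Since $f(x,\psi)=(x,\theta^*(\psi))$, both the forth and the back condition in this case reduce to the single identity $\theta^*(\psi)=\phi_X[\theta^*(X)/X]$ recorded above.

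That identity -- a substitution lemma for $\theta^*$ to the effect that fixpoint unfolding commutes with Fischer--Ladner expansion -- is the one place where genuine work is needed, and is where I expect the main obstacle to lie: it has to be established by induction, reasoning carefully about the order in which $\theta^*$ substitutes the (possibly nested) free fixpoint variables of~$\psi_X$, or else derived from the corresponding property of the Fischer--Ladner closure in~\cite{Kozen83}. Everything else is bookkeeping that simply mirrors the parallel definitions of the two model checking games. Once the identity is in place, the forth and back conditions for the fixpoint and variable nodes go through, and we conclude that~$f$ is a bounded morphism.
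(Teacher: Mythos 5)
Your proposal is correct and follows essentially the same route as the paper: the only nontrivial case is the fixpoint-literal/variable nodes, where everything reduces to the identity $\theta^*(\psi_X)=\phi_X[\theta^*(X)/X]$ for $\theta^*(X)=\eta X.\,\phi_X$. The paper dispatches that identity with the same observation you flag as the crux, namely that $\phi_X$ arises from $\psi_X$ by the innermost-first substitution of free variables skipping $X$ itself, so that $\theta^*(\psi_X)=\theta^*(\psi_X[\eta X.\,\psi_X/X])=\phi_X[\eta X.\,\phi_X/X]$.
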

\begin{proof}
  It is clear that~$f$ preserves priorities and position ownership. We
  check the conditions on moves, restricting attention to the only
  cases where the games differ appreciably, viz., fixpoint literals
  and fixpoint variables. At such positions, however, the outgoing moves
  are uniquely determined in both games, so we just have to show
  that~$f$ preserves these moves.

  So let~$X$ be a fixpoint variable, with $\theta(X)=\eta
  X.\,\psi$. The unique move from both $(x,X)$ and $(x,\eta X.\,\psi)$
  in $\sfgame_{\target,(C,\xi)}$ leads to $(x,\psi)$. Since
  $\theta^*(X)=\theta^*(\eta X.\psi)$, $(x,X)$ and $(x,\eta X.\,\psi)$
  are mapped to the same position under~$f$, and this position has the form
  $(x,\eta X.\psi')$ where~$\psi'$ is obtained from~$\psi$ by
  successively substituting free fixpoint variables~$Y$
  with~$\theta(Y)$ innermost first, but skipping the actual innermost
  variable~$X$ in~$\psi$; so
  $\theta^*(\psi)=\theta^*(\psi[\eta X.\psi/X])=\psi'[\eta
  X.\psi'/X])$.  From $(x,\eta X.\psi')$,~$\exists$'s unique move in
  $\game_{\target,(C,\xi)}$ thus leads to
  $(x,\psi'[\eta X.\psi'/X])=(x,\theta^*(\psi))=f(x,\psi)$, as
  required.
\end{proof}
\noindent The model checking game $\game_{\target,(C,\xi)}$ is very
similar to the one considered by C\^irstea et al.~\cite{CirsteaEA11},
one notable difference being that we do not assume guardedness. On the
other hand, the subformula model checking game
$\sfgame_{\target,(C,\xi)}$ resembles a game that was considered by
Venema~\cite{Venema06} but which, again, assumes guardedness and
moreover works with a version of the coalgebraic $\mu$-calculus based
on the coalgebraic cover modality~\cite{Moss99} instead of on
predicate liftings. Indeed, the proof of correctness given by
C\^irstea et al.\ is largely by reference to Venema's proof, an
argument that is formally justified by our
\autoref{lem:sfgame-bounded-mor}. Due to the mentioned guardedness
issue, we opt to present a full correctness proof of our game, which
largely follows the one given by Venema in that it makes do without
(ordinal) timeouts as frequently used in the literature on the
relational
$\mu$-calculus~\cite{StreettEmerson89,NiwinskiWalukiewicz96,BradfieldWalukiewicz18}.
\begin{thm}[Correctness of the model checking
  game]\label{thm:satisfaction-game}
  Given a coalgebra $(C,\xi)$, a state $x\in C$ and a formula
  $\psi\in\FLtarget$, we have $x\models\psi$ if and only if the
  existential player wins the position $(x,\psi)$ in
  $\game_{\target,(C,\xi)}$.
\end{thm}
\begin{proof}
  We note first that the positions reachable from $(x,\psi)$ in
  $\game_{\target,(C,\xi)}$ are the same as in $\game_{\psi,(C,\xi)}$,
  so we can assume w.l.o.g.\ that~$\psi$ is the target formula, that
  is, $\psi=\chi$. By \autoreflems{lem:parity-bisim}
  and~\ref{lem:sfgame-bounded-mor}, we can then replace
  $\game_{\target,(C,\xi)}$ with $\sfgame_{\target,(C,\xi)}$, since
  $\theta^*(\target)=\target$. We will use structural induction
  on~$\chi$, and hence need to drop, only for purposes of this proof,
  the assumption that~$\target$ is closed. Of course, the game is then
  played over a pair $((C,\xi),i)$ where~$i\colon\Var\pfun \Pow(C)$ is
  a valuation of the fixpoint variables such that $i(X)$ is defined
  for all $X\in\FV(\target)$; we correspondingly write
  $\sfgame_{\target,i}$ for the generalized game, eliding mention
  of~$(C,\xi)$ which remains unchanged throughout. We thus have new
  positions of the shape $(x,X)$ or $(x,\neg X)$, which receive
  priority~$0$ and no outgoing moves. The ownership of these positions
  is defined by letting $(x,X)$ be owned by~$\exists$ if
  $x\notin i(X)$ and by~$\forall$ otherwise, and correspondingly
  letting $(x,\neg X)$ be owned by~$\exists$ if $x\in i(X)$ and
  by~$\forall$ otherwise. Other than this, the game remains unchanged.

  Next, for purposes of economizing one direction of the proof, we
  modify $\sfgame_{\target,i}$ to ensure symmetry between the
  players; we write $\sfsymgame_{\target,i}$ for this
  symmetrized game. First, we reassign positions of the form
  $(x,\nu X.\,\psi)$ or $(x,X)$, with~$X$ a $\nu$-variable, to the
  universal player -- since these positions have precisely one
  outgoing move, this change is clearly immaterial to how the game is
  played. Second, out of every pair $\hearts,\overline\hearts$ of dual
  modalities, we arbitrarily assign one to~$\exists$ and the other
  to~$\forall$, and we write~$\hearts$ for modalities assigned
  to~$\exists$, and~$\overline\hearts$ for modalities assigned
  to~$\forall$. Moreover, we rename the previous positions of the form
  $(D,\psi)$ into $(D,\psi,\forall)$, and introduce additional
  positions $(D,\psi,\exists)$. Positions of the form
  $(x,\hearts\psi)$ still belong to~$\exists$, with moves like before,
  into the renamed positions of the form~$(D,\psi,\forall)$, which
  still belong to~$\forall$.  On the other hand, positions of the form
  $(x,\overline\hearts\psi)$ now belong to~$\forall$, and~$\forall$
  can move to $(D,\psi,\exists)$ if
   $\xi(x)\notin\Sem{\overline{\hearts}}_C(\overline{D})$,
  where~$\overline D$ denotes the complement of~$D$ in~$C$. The new
  positions $(D,\psi,\exists)$ belong to~$\exists$, who can move to
  $(y,\psi)$ such that $y\in D$. The new sequences of moves
  $(x,\overline\hearts\psi)\xrightarrow{\forall}
  (D,\psi,\exists)\xrightarrow{\exists} (y,\psi)$ are, for purposes
  of~$\exists$ winning the game, equivalent to the previous sequences
  $(x,\overline\hearts\psi)\xrightarrow{\exists}
  (D,\psi)\xrightarrow{\forall} (y,\psi)$: In either case, $\exists$
  can force the game into a set~$U$ of positions of the form
  $(y,\psi)$, necessarily of the form $U=D\times\{\psi\}$, iff
  $\xi(x)\in\Sem{\overline\hearts}_C(D)$. To see this for the new
  moves
  $(x,\overline\hearts\psi)\xrightarrow{\forall}
  (D,\psi,\exists)\xrightarrow{\exists} (y,\psi)$, we reason as
  follows: $\exists$ can force the game into~$D\times\{\psi\}$
  iff~$\forall$ cannot move to $(D',\psi,\exists)$ for any subset
  $D'\subseteq\overline D$ iff
  $\xi(x)\in\Sem{\overline\hearts}_C(\overline{D'})$ for all
  $D'\subseteq\overline D$ iff $\xi(x)\in\Sem{\overline\hearts}_C(D)$,
  where the last step uses monotonicity of~$\overline\hearts$. Thus,
  $\sfgame_{\target,i}$ and $\sfsymgame_{\target,i}$
  are equivalent in the sense that positions of the form $(x,\psi)$
  are won by the same player in either game.

  By now, we have reduced the claim of the lemma to showing that in
  $\sfsymgame_{\target,i}$,
  \begin{enumerate}
  \item\label{item:e-wins} if $x\models\target$, then~$\exists$ wins
    $(x,\target)$; and
  \item\label{item:a-wins} if $x\not\models\target$, then~$\forall$ wins
    $(x,\target)$.
  \end{enumerate}
  The symmetry of $\sfsymgame_{\target,i}$ allows us to
  conclude \eqref{item:a-wins} from \eqref{item:e-wins}, as follows:
  If $x\not\models\target$, then $x\models\neg\target$
  (with~$\neg\target$ defined by taking negation normal forms as
  indicated in
  \autoref{sec:prelims}). By~\eqref{item:e-wins},~$\exists$ wins
  the position $(x,\neg\target)$ in the model checking game for
  $\neg\target$. This game is now dual to the model checking game
  for~$\target$ in the sense that one is obtained from the other by
  swapping the positions of the players and dualizing the priorities
  (i.e.\ swapping priorities $2n$ and $2n-1$); of course, positions
  $(x,\neg\psi)$ in the game for~$\neg\target$ correspond to positions
  $(x,\psi)$ in the game for~$\target$. We thus immediately obtain
  that~$\forall$ wins $(x,\target)$.
  
  It remains to prove~\eqref{item:e-wins}.  We switch back to using
  the simpler game $\sfgame_{\target,i}$. As indicated above, we
  proceed by induction on~$\target$; we treat~$i$ as universally
  quantified in the inductive claim. The cases for free fixpoint
  variables and Boolean operators ($\land$, $\lor$, $\top$, $\bot$)
  are trivial; we illustrate this on the case where
  $\target=\target_1\land\target_2$: If
  $x\in\Sem{\target_1\land\target_2}_{i}$, then
  $x\in\Sem{\target_1}_{i}$ and $x\in\Sem{\target_2}_{i}$. By
  induction,~$\exists$ wins $(x,\target_j)$ in $\sfgame_{\target_j,i}$
  for $j=1,2$. In $\sfgame_{\target_1\land\target_2,i}$, $\forall$ can
  move from $(x,\target)$ to either $(x,\target_1)$ or
  $(x,\target_2)$, so~$\exists$ wins by playing like in
  $\sfgame_{\target_1,,i}$ or $\sfgame_{\target_2,i}$,
  respectively. The remaining cases are as follows.
  \begin{itemize}
  \item $\target=\hearts\target_1$: By induction, $\exists$ wins by
    playing $(D,\target_1)$ where $D=\Sem{\target_1}_i$.
  \item $\target=\mu X.\,\target_1$: It suffices to show that the set
    \begin{equation*}
      W_1:=\{x\in C\mid \text{$\exists$ wins $(x,\target)$ in $\sfgame_{\target,i}$}\}=\{x\in C\mid \text{$\exists$ wins $(x,\target_1)$ in $\sfgame_{\target,i}$}\}
    \end{equation*}
    (where the second equality is immediate from the game rules) is a
    prefixpoint of the function defining
    $\Sem{\target}_i=\Sem{\mu X.\,\target_1}_i$ as a least
    prefixpoint, i.e.\ that
    \begin{equation}\label{eq:prefixpoint}
      \Sem{\target_1}_{i'}\subseteq W_1\quad\text{where $i'=i[X\mapsto W_1]$}.
    \end{equation}
    So let $x\in\Sem{\target_1}_{i'}$. By induction,~$\exists$ has a
    strategy~$s'$ in $\sfgame_{\target_1,i'}$ that wins
    $(x,\target_1)$. We have to show that~$\exists$ wins
    $(x,\target_1)$ in $\sfgame_{\target,i}$, which differs
    from~$\sfgame_{\target_1,i'}$ only in that it treats the fixpoint
    variable~$X$ as bound. The winning strategy works as follows:
  \begin{itemize}
  \item From~$(x,\target_1)$, play according to~$s'$ until a position
    of the form $(y,X)$ is encountered (if ever). This is possible
    because up to that point, there is no difference
    between~$\sfgame_{\target,i}$ and~$\sfgame_{\target_1,i'}$. If no
    position $(y,X)$ is ever reached, then the play effectively takes
    place in~$\sfgame_{\target_1,i'}$, and as such follows~$s'$. It is
    thus won by~$\exists$, since~$s'$ is winning.
  \item If a position of the form $(y,X)$ is reached, then
    $y\in i'(X)=W_1$, since~$s'$ is winning
    in~$\sfgame_{\target_1,i'}$. The next position reached is
    $(y,\target_1)$, so~$\exists$ wins in~$\sfgame_{\target_1,i}$
    because~$y\in W_1$.
  \end{itemize}
\item $\target=\nu X.\,\target_1$: Let~$x\in\Sem{\target}_i$.  We
  construct a strategy~$s$ that wins $(x,\target)$ in
  $\sfgame_{\target,i}$ as follows. From $(x,\target)$, the game
  proceeds to $(x,\target_1)$. By fixpoint unfolding, we have
  \begin{equation}\label{eq:game-unfolding}
    \Sem{\target}_i=\sem{\target_1}_{i'}\quad\text{where $i'=i[x\mapsto\Sem{\target}_i]$}.
  \end{equation}
  By induction, $\exists$ thus has a strategy~$s'$ that wins
  $(x,\target_1)$ in~$\sfgame_{\target_1,i'}$. We let~$s$ follow~$s'$
  in~$\sfgame_{\target,i}$ until a position of the form $(y,X)$ is
  reached (exploiting like in the previous case that up to that point,
  the games~$\sfgame_{\target_1,i'}$ and~$\sfgame_{\target,i}$ do not
  differ). Since~$s'$ is winning in~$\sfgame_{\target_1,i'}$, we then
  have $y\in i'(X)=\Sem{\target}_i=\Sem{\target_1}_{i'}$, so we can
  continue in the same manner after the game~$\sfgame_{\target,i}$
  automatically proceeds to $(y,\target_1)$.  To see that~$s$ is
  winning, we distinguish cases on a play~$\pi$ that follows~$s$:
  \begin{itemize}
  \item If from some point on,~$\pi$ no longer reaches positions of
    the form $(y,X)$, then~$\pi$ has a suffix that is a winning play
    for~$\exists$ from a position of the form $(y,\target_1)$
    in~$\sfgame_{\target_1,i'}$, so~$\exists$ wins~$\pi$.
  \item Otherwise,~$\pi$ infinitely often visits positions of the form
    $(y,X)$. Thus,~$X$ is unfolded infinitely often. Intuitively
    speaking, since~$X$ is a $\nu$-variable and $\nu X.\target_1$ is
    the outermost fixpoint in the target formula (being the target
    formula itself),~$\exists$ should therefore win the play according
    to the intention of the game, as long as the mechanism that
    replaces the direct comparison of inner vs.\ outer fixpoints (as
    used in the winning condition of the game considered by
    Venema~\cite{Venema06}) with the comparison of alternation depth
    works. Formally, we proceed as follows. We have to show that
    $\ad(Z)<\ad(X)$ for every $\mu$-variable~$Z$ that is unfolded
    on~$\pi$ between two unfoldings of~$X$ (this implies
    $\alpha(\theta^*(Z))<\alpha(\theta^*(X))$, so positions where~$Z$
    is unfolded have lower priority than positions where~$X$ is
    unfolded). Let $Y_n,\dots,Y_1,Y_0=X$ be the sequence of variables
    unfolded between two unfoldings of~$X$, including the second (but
    not the first) unfolding of~$X$ itself. We show by induction on
    $j\in\{0,\dots,n\}$ that for every~$j$, there is a dependency
    chain, possibly of length~$0$, from~$Y_j$ to~$X$ (implying
    that~$\ad(X)\ge\ad(Y_j)$, and that $\ad(X)>\ad(Y_j)$ if~$Y_j$ is a
    $\mu$-variable). The induction base $j=0$ is trivial. In the
    induction step for $j>0$, we can assume that $Y_j\neq Y_k$ for
    $j>k$, since otherwise we are done by induction. The unfolding
    step from $Y_j$ leads to a position of the form $(y,\psi)$ where
    $\theta(Y_j)=\eta X.\psi$. Since a position $(z,Y_{j-1})$ is
    reached from $(y,\psi)$ without interceding unfolding
    steps,~$Y_{j-1}$ is a subformula of $\psi$. If
    $Y_{j-1}\in\FV(\psi)$, then $Y_{j-1}\in\FV(\theta(Y_j))$ since
    $Y_{j-1}\neq Y_j$; that is, $Y_j\depord Y_{j-1}$, and we are done
    by induction. Otherwise, $\theta(Y_{j-1})$ is a subformula
    of~$\psi$. By induction, there is a dependency chain
    $Y_{j-1}\depord Z\depord\dots\depord X$; in particular,
    $Z\in\FV(\theta(Y_{j-1}))$. If $Z=Y_j$, then we are
    done. Otherwise, $Z\in\FV(\theta(Y_j))$, i.e.\ $Y_j\depord Z$, and
    we are done.\qedhere
  \end{itemize}
\end{itemize}

\end{proof}

\begin{rem}[Fixpoint games]\label{rem:fp-games}
  By instantiation of the model checking game to a generalization of
  the monotone $\mu$-calculus
  (\autoref{ex:logics}.\ref{item:monotone-mu}), we obtain a
  general form of fixpoint games for monotone functions on powerset
  lattices, which in turn are an instance of fixpoint games over
  continuous lattices~\cite{BaldanKMP19}. Details are as follows.
  
  We need only the case without propositional atoms, whose mention we
  therefore elide in the following, and with only one atomic program
  that we keep implicit. On the other hand, we generalize to
  higher-arity neighbourhood frames and modalities: For $n\ge 0$, we
  define the \emph{$n$-ary monotone neighbourhood functor}~$\Mon_n$
  (e.g.~\cite{SchroderPattinson10,MartiVenema15}) as taking a set~$X$
  to the set of subsets of~$(\contrapow X)^n$ that are upwards closed
  under componentwise subset inclusion. We use an $n$-ary
  modality~$\Diamond$, which we interpret over~$\Mon_n$ by the
  predicate lifting given by
  \begin{equation*}
    \Sem{\Diamond}_X(A_1,\dots,A_n)=\{\alpha\in\Mon_n X\mid (A_1,\dots,A_n)\in\alpha\}.
  \end{equation*}
  By transposition of arguments, a coalgebra
  $C\to\Mon_n C\subseteq\contrapow((\contrapow C)^n)$ can
  alternatively be seen as a map
  $g\colon(\contrapow C)^n\to\contrapow C$ that is monotone w.r.t.\
  (componentwise) subset inclusion. Recall here that $\contrapow$ is
  the contravariant powerset functor; as we do not actually need the
  action on maps in the following, we will just write $\Pow(C)$ in
  place of $\contrapow C$. The semantics of a formula
  $\Diamond(\phi_1,\dots,\phi_n)$ in~$C$ under a valuation~$i$ is then
  equivalently given by
  $\Sem{\Diamond\phi}_i=g(\Sem{\phi_1}_i,\dots,\Sem{\phi_1}_i)$. That
  is, we can just see the monotone $\mu$-calculus as an expression
  language for nested fixpoints over (higher-arity) monotone functions
  on~$\Pow(C)$. In the corresponding instance of the model checking
  game on~$C$,~$\exists$ can move from a position
  $(x,\Diamond(\psi_1,\dots,\psi_n))$ to any tuple
  $((D_1,\psi_1),\dots,(D_n,\psi_n))$ such that
  $x\in g(D_1,\dots,D_n)$. We use these games in the fixpoint
  characterization of the satisfiability game
  (\autoref{lem:sat-game-fp}). 
\end{rem}

\section{One-Step Satisfiability and Tableaux}\label{sec:tableaux}

In this section, we identify an embodiment of a model for $\target$ in
the shape of a subautomaton of the co-determinized tracking
automaton~$\mathsf{B}_\target$ that satisfies certain additional
properties; we will use this concept as a stepping stone in the
reduction of satisfiability checking to game solving, and, as usual,
call such a witness for formula satisfaction a \emph{tableau}.
Specifically, such a subautomaton consists of those automaton nodes
$q$ for which there are states in the model that jointly satisfy all
formulae from $l(q)$, and the automaton transitions in a tableau are
required to witness satisfaction of those formulae; we formalize the
structural property required for the satisfaction of modalities using
the concept of \emph{one-step satisfiability}. Then we show that every
tableau carries a coalgebra structure that is \emph{coherent} with its
transitional structure and its labels; such coalgebras then satisfy a
truth lemma implying satisfaction of the target formula.  The proof of
the truth lemma relies on the model checking game, and exploits that
the latter relates closely to the nondeterministic tracking automaton
$\autom_\target$.

We begin with considerations on the above-mentioned problem of
\emph{one-step satisfiability checking}, a functor-specific problem
that in many instances can be solved in time singly exponential in
$\size(\target)$.

\begin{defi}[One-step satisfiability
  problem~\cite{Schroder07,SchroderPattinson08,MyersEA09}]\label{def:oss}
  Let $V$ be a finite set of \emph{propositional variables}.  A
  \emph{one-step pair} $(\osOne,\osZero)$ (over~$V$) consists of a set
  $\osZero\subseteq\Pow(V)$ (understood as a disjunctive normal form
  over~$V$, cf.\ \autoref{rem:os-logic}) and a set
  $\osOne\subseteq \Lambda(V)$, understood conjunctively, where we
  require that~$\osOne$ mentions every element of~$V$ precisely
  once. Correspondingly, we interpret $a\in V$ and~$\osOne$
  over~$\osZero$ by
  \begin{align*}
    \sem{a}^\osZero_0 & = \{u\in \osZero\mid a\in u\}\:  \subseteq\Theta\\
    \sem{\osOne}^\osZero_1&
                            =\textstyle\bigcap_{\hearts
                            a\in
                            \osOne}\sem{\hearts}_\osZero\sem{a}^\osZero_0  \subseteq F\Theta.
  \end{align*}
  We say that $(\osOne,\osZero)$ is \emph{satisfiable} (over the
  functor~$F$) if $\sem{\osOne}^\osZero_1\neq\emptyset$. The
  \emph{strict one-step satisfiability problem} is to decide whether a
  given one-step pair $(\osOne,\osZero)$ is satisfiable; here, the
  qualification `strict' refers to the measure of input size of the
  problem, which we take to be
  \begin{equation*}
    \size(\osOne):=\sum_{\hearts a\in \osOne}(1+\size(\hearts))
  \end{equation*}
  (in particular,~$|\osZero|$, which may be exponential
  in~$\size(\osOne)$, does not count towards the input size).
\end{defi}
  
\begin{rem}[One-step logic]\label{rem:os-logic}
  We keep the definition of the actual one-step logic as mentioned in
  the introduction somewhat implicit in the above definition of the
  one-step satisfiability problem. In a more explicitly syntactic
  view, one will regard~$V$ as a set of propositional variables. One
  then sees that a one-step pair $(\gamma,\Theta)$ as above contains
  two layers: a purely propositional layer embodied in~$\osZero$,
  which postulates which propositional formulae over~$V$ are
  satisfiable (that is, we see $\osZero\subseteq\Pow (V)$ as a
  disjunctive normal form
  $\Lor_{u\in \osZero}(\Land_{a\in u}a\land\Land_{a\in V\setminus
    u}\neg a)$); and a modal layer with nesting depth of modalities
  uniformly equal to~$1$, embodied in the set~$\osOne$ of modal
  literals, which specifies a constraint
  $\Land_{\hearts a\in \osOne}\hearts a$ on an element of
  $F\osZero$. Under this perspective (and otherwise), it is trivial to
  note that satisfiability of a one-step pair $(\gamma,\Theta)$ is
  preserved under enlarging~$\Theta$, as this corresponds to weakening
  the propositional formula represented by~$\Theta$.
\end{rem}

\begin{exa}[One-step satifiability]\label{ex:onestepsat}
  We consider the one-step satisfiability problem for the logics from 
  \autoref{ex:logics}, omitting details on the (trivial) treatment
  of propositional atoms.
  \begin{enumerate}[wide]
  \item\label{item:oss-rel} For the \emph{relational modal
      $\mu$-calculus} (\autoref{ex:logics}.1.), where
    $\Lambda=\{\Diamond,\Box\}$, the one-step satisfiability problem
    is to decide, for a given one-step pair $(\osOne,\osZero)$
    over~$V$, whether there is $A\in \sem{\osOne}^\osZero_1$, that is,
    a subset $A\in\Pow \osZero$ such that for each
    $\Diamond a\in \osOne$, there is $u\in A$ such that $a\in u$, and
    for each $\Box b\in \osOne$ and each $u\in A$, $b\in u$.
    Equivalently, one needs to check that for
    each~$\Diamond a\in \osOne$ there is $u\in \osZero$ such that
    $a\in u$ and moreover $b\in u$ for all $\Box b\in \osOne$.  To
    avoid quadratic complexity in~$\size(\gamma)$, implement this
    check in two passes: In the first pass, go through all
    $\Box b\in\gamma$ and remove from~$\Theta$ all~$u$ such that
    $b\notin u$; in the second pass, go through all
    $\Diamond a\in\gamma$ and check that there remains some~$u$
    in~$\Theta$ such that $a\in u$. Both passes can be done in time
    $\lO(\size(\osOne)\cdot|V|\cdot
    |\osZero|)=\lO(\size(\osOne)\cdot|V|\cdot 2^{|V|})$, showing that
    in this case the strict one-step satisfiability problem is in
    \ExpTime. We note that this is all the work that will be required
    to instantiate our generic complexity bound
    (Theorem~\ref{thm:exptime} below) to the relational
    $\mu$-calculus, obtaining the known upper bound \ExpTime for
    satisfiability checking~\cite{EmersonJutla99}.

  \item\label{item:oss-mon} For the \emph{monotone modal
      $\mu$-calculus} (\autoref{ex:logics}.2.) with set $\AtGames$ of
    atomic games, we have
    $\Lambda=\{\mondiamond{g},\monbox{g}\mid g\in \AtGames\}$, again
    eliding propositional atoms for the sake of readability.  It is an
    immediate property of the semantics of monotone modalities that in
    order to check that $\sem{\osOne}^\osZero_1\neq\emptyset$ for a
    given one-step pair $(\osOne,\osZero)$ over~$V$, it suffices to
    check that whenever $\mondiamond{g}a,\monbox{g}b\in\osOne$, then
    there is $u\in\osZero$ such that
    $a,b\in u$~\cite[Proposition~3.8]{Vardi89}. (Indeed, this
    criterion corresponds to the usual monotonicity rule -- cf.\
    \cite{SchroderPattinson09b,CirsteaEA11a} -- under the
    correspondence between modal tableau rules and one-step
    satisfiability checking discussed in \autoref{rem:rules} below.)
    This can clearly be done in time
    $\lO(\size(\osOne)^2\cdot|V|\cdot
    |\osZero|)=\lO(\size(\osOne)^2\cdot|V|\cdot 2^{|V|})$, showing
    that the strict one-step satisfiability problem for the monotone
    case is in \ExpTime. We note that again this is all the work that
    is required to instantiate our generic complexity bound and obtain
    the known upper \ExpTime bounds on satisfiability checking for
    game logic~\cite{PaulyParikh03} and the monotone
    $\mu$-calculus~\cite{CirsteaEA11a}; in fact, it appears that for
    the latter case, the result for the full unguarded logic is
    formally new (however, we note that it could alternatively be
    obtained by encoding the monotone $\mu$-calculus into the
    relational $\mu$-calculus in the same way as for game
    logic~\cite{Parikh83,PaulyParikh03}).

  \item For the \emph{graded $\mu$-calculus}
    (\autoref{ex:logics}.\ref{item:graded}.), the one-step
    satisfiability problem is to decide, for a one-step
    pair~$(\osOne,\osZero)$, whether there is a multiset
    $\beta\in\Bag \osZero$ such that
    $\sum_{u\in \osZero\mid a\in u}\beta(u)>m$ for each
    $\langle m\rangle a\in \osOne$ and
    $\sum_{u\in \osZero\mid a\notin u}\beta(u)\leq m$ for each
    $[m]a\in \osOne$. The easiest way to see that the strict one-step
    satisfiability problem is in \ExpTime is via a nondeterministic
    polynomial-space algorithm that goes through all $u\in \osZero$,
    guessing multiplicities $\beta(u)\in\{0,\dots,m+1\}$ where~$m$ is
    the greatest index of any diamond modality $\langle m\rangle$ that
    occurs in~$\osOne$. This multiplicity is used to update~$|V|$
    counters that keep track of the total
    measure~$\beta(\Sem{a}^\osZero_0)$ for $a\in V$; after updating
    the counters, the multiplicity is discarded, so that only
    polynomial space is used (for the counters). Once all
    multiplicities have been guessed, the algorithm verifies that
    $\beta\in\Sem{\osOne}^\osZero_1$, using only the final counter
    values~\cite[Lemma~1]{KupfermanEA02}. Essentially the same method
    works also for the \emph{graded $\mu$-calculus with polynomial
      inequalities} (\autoref{ex:logics}.\ref{item:graded-poly})
    (and similar ideas have been used in work on Presburger modal
    logic~\cite{DemriLugiez06}): In this setting, it still suffices to
    guess multiplicities up to~$b+1$ where $b$ is the largest index of
    any diamond modality~$\langle p\rangle$ occurring in~$\osOne$
    (indeed, if anything the bounds become smaller; e.g.~for
    $\langle X_1^2+X_2^2-b\rangle$, it suffices to explore
    multiplicities up to $\lceil \sqrt{b}\rceil+1$). Note that this
    argument does rely on the assumption that all coefficients of
    non-constant monomials are non-negative.
  \item\label{item:oss-prob} For the \emph{probabilistic
      $\mu$-calculus with polynomial inequalities}
    (\autoref{ex:logics}.\ref{item:prob-poly}), we first observe,
    following~\cite{GutierrezBasultoEA17,KupkeEA22}, that a small
    model property holds for the one-step logic: If a one-step pair
    $(\osOne,\osZero)$ over~$V$ is satisfiable, then
    $\Sem{\osOne}^\osZero_1$ contains an element
    $(d,Q)\in\Dist \osZero\times\Pow(\PropAts)$ such that $d(u)>0$ for only
    $|V|$-many $u\in \osZero$. This is seen as follows: Suppose that
    $(d_0,Q)\in\Sem{\osOne}^\osZero_1$. Then for any
    $d\in\Dist \osZero$, to have $(d,Q)\in\Sem{\osOne}^\osZero_1$ it
    suffices that
     $d(\Sem{a}^\osZero_0)=d_0(\Sem{a}^\osZero_0)$
    for each $a\in V$. Since
    $d(\Sem{a}^\osZero_0)=\sum_{u\in\Sem{a}^\osZero_0}d(u)$, this means that the
    numbers $y_u=d(u)$, for $u\in \osZero$, form a non-negative solution to
    the system of linear equations
    \begin{equation*}
      \sum_{u\in\Sem{a}^\osZero_0}y_u=d_0(\sem{a}^u_0) \qquad\text{for $a\in V$}.
    \end{equation*}
    Since this system has a solution induced by~$d_0$ itself, we
    obtain by the Carath\'eodory theorem (e.g.~\cite{schrijver86})
    that there is a solution with at most~$|V|$ non-zero
    components. This allows us to solve the strict one-step
    satisfiability problem in non-deterministic polynomial space, and
    hence in exponential time, as follows: Guess $|V|$ elements
    $u\in \osZero$ that receive positive weight~$d(u)$ in a
    solution~$d$, and then check for satisfiability of the constraint
    on these~$|V|$ real numbers that is embodied in~$\osOne$. This
    constraint is a polynomially-sized system of polynomial
    inequalities, whose satisfiability can, by results of
    Canny~\cite{Canny88}, be checked in polynomial space.
  \end{enumerate}
\end{exa}
\begin{rem}[One-step polysize model property]\label{rem:ospmp}
  We say that the logic has the \emph{one-step polysize model property
    (OSPMP)} if there is a polynomial~$p$ such that whenever a
  one-step pair $(\osOne,\osZero)$ over~$V$ is satisfiable, then
  $\sem{\osOne}^\osZero_1$ has an element of the form $Fi(t)$ where
  $i\colon \osZero_0\to\osZero$ is the inclusion of a
  subset~$\osZero_0\subseteq\osZero$ such that
  $|\osZero_0|\le p(|V|)$~\cite{KupkeEA22}. For instance, the
  arguments in \autorefexas{ex:onestepsat}.\ref{item:oss-rel}
  and~\ref{ex:onestepsat}.\ref{item:oss-prob} show that the relational
  $\mu$-calculus and the probabilistic $\mu$-calculus (even with
  polynomial inequalities) both have the OSPMP. Similar arguments as
  for the probabilistic case show that the graded $\mu$-calculus (even
  with polynomial inequalities) also has the OSPMP, using the integer
  Carath\'eodory theorem~\cite{EisenbrandShmonin06};
  cf.~\cite{KupkeEA22} for details. Our model construction below 
  will establish that the OSPMP implies a polynomially branching model property
  (\autoref{rem:poly-branching}).
\end{rem}

Next, we present our notion of tableaux, which are partial subautomata
of the co-determinized tracking automaton
$\mathsf{B}_\target=(\detcarrier, \Sigma,\delta,\initstate,\detprio)$.
We first fix some notation: To $\Xi\subseteq\mathsf{selections}$ and a
node $q\in\detcarrier$, we associate a one-step pair
$(\gamma_q,\Theta_q^\Xi)$ over a set~$V_q$ of propositional variables,
given by
\begin{align}
  V_q & =\{a_{\hearts\psi}\mid\hearts\psi\in l^A(q)\}\nonumber\\
  \label{eq:node-one-step-pair}
  \gamma_q&=\{\hearts a_{\hearts\psi}\mid \hearts\psi\in l^A(q)\}\\
u_q^\kappa&=\{a_{\hearts\psi}\mid \hearts\psi\in \kappa\cap l^A(q)\}\quad\text{for $\kappa\in\mathsf{selections}$}\nonumber\\
\Theta_q^\Xi&=\{u_q^\kappa\mid \kappa\in \Xi\} \nonumber
\end{align}
Thus, $\gamma_q$ abstracts modalized formulae~$\hearts\psi$ found in
the label of~$q$ into~$\hearts a_{\hearts\psi}$, and $\Theta_q^\Xi$
contains, for each~$\kappa\in\Xi$, the set~$u^\kappa_q$ of variables
$a_{\hearts\psi}$ such that $\hearts\psi\in l^A(q)$ and
$\psi\in\Delta(\hearts\psi,\kappa)$, so that the non-deterministic
tracking automaton~$\autom_\target$ tracks $\hearts\psi$ to $\psi$
under~$\kappa$. In the reading of~$\Theta_q^\Xi$ suggested in
\autoref{rem:os-logic},~$\Theta_q^\Xi$ is understood as the
disjunction of the~$u_q^\kappa$ over all $\kappa\in\Xi$,
with~$u_q^\kappa$ read as the conjunctive clause
$\Land_{\hearts\psi\in\kappa\cap
  l^A(q)}a_{\hearts\psi}\land\Land_{\hearts\psi\in
  l^A(q)\setminus\kappa}\neg a_{\hearts\psi}$. We can similarly
understand~$\Xi$ as a disjunctive normal form over atoms of the form
$\hearts\psi\in l^A(q)$, and~$\Theta^\Xi_q$ arises from~$\Xi$ by
simply renaming these atoms into $a_{\hearts\psi}$. Notice also that
the interpretation of $a_{\hearts\psi}\in V_q$ over~$\Theta^\Xi_q$ as
per Definition~\ref{def:oss} is
$\sem{a}^{\Theta^\Xi_q}_0=\{u^\kappa_q\mid \kappa\in\Xi,
\hearts\psi\in\kappa\}$, a set which depends monotonically
on~$\Xi$. There is thus a balance to strike in selecting the
set~$\Xi$, which needs to be large enough to ensure satisfiability of
the one-step pair $(\gamma_q,\Theta_q^\Xi)$, but on the other hand
enlarging~$\Xi$ implies having to track more formulae.

Furthermore, given $\hearts\psi\in\FLtarget$ and
$\Xi\subseteq\mathsf{selections}$, we write
\begin{equation}\label{eq:Xi-hearts}
  \Xi/{\hearts\psi}=\{\kappa\in\Xi \mid \hearts\psi\in\kappa\}.
\end{equation}

\begin{defi}[Pre-tableaux and tableaux]\label{defn:pretab}
  A \emph{pre-tableau} $(W,\Sigma,\delta',\initstate,\detprio')$, or
  just $(W,\delta')$, \emph{for $\target$} consists of a set
  $W\subseteq\detcarrier$ of nodes, a partial transition map
  $\delta'\colon W\times \Sigma\pfun W$, and a priority map
  $\detprio'\colon W\to\mathbb N$ such that the following conditions
  hold:
  \begin{enumerate}
  \item\label{item:L-delta} $(W,\Sigma,\delta',\initstate,\detprio')$ is a \emph{partial
      subautomaton} of~$\mathsf{B}_\target$. That is, the initial
    node~$\initstate$ of $\mathsf{B}_\target$ is in~$W$;
    $\delta'(q,\sigma)=\delta(q,\sigma)$ whenever $\delta'(q,\sigma)$
    is defined for $q\in W,\sigma\in\Sigma$; and
    $\detprio'(q)=\detprio(q)$ for all $q\in W$. (Note that
    $\delta'(q,\sigma)$ may be undefined even when
    $\delta(q,\sigma)\in W$.)
  \item\label{item:tableau-tau} For all $q\in W$, we have
    $\bot\notin l^A(q)$, and there is a unique
    $\tau\in\mathsf{choices}$ such that $\delta'(q,\tau)$ is defined
    (it then equals $\delta(q,\tau)$ by (\ref{item:L-delta})).
  \item\label{item:tableau-Xi} For all $q\in W$, the one-step pair
    $(\gamma_q,\Theta^{\Xi(q)}_q)$ (notation as
    per~\eqref{eq:node-one-step-pair}) is one-step satisfiable, where
    \begin{equation}\label{eq:Xi-q}
      \Xi(q)=\{\kappa\in\mathsf{selections}\mid \kappa\subseteq l^A(q)\text{ and
        $\delta'(q,\kappa)$ is defined}\}
  \end{equation}
  (i.e.\ the modal label of $q$ is one-step satisfiable over the
  labels of the modal $\delta'$-successors of $q$).
  \end{enumerate}
  We refer to transitions in $(W,\delta')$ under letters in
  $\mathsf{choices}$ as \emph{local transitions}, and under letters in
  $\mathsf{selections}$ as \emph{modal transitions}. By
  (\ref{item:tableau-tau}), there is, from every~$q\in W$, a unique
  \emph{local} run, i.e.\ one consisting of only local transitions,
  which we denote by $\rho(q)$.  A \emph{tableau} is a pre-tableau in
  which every (infinite) run starting at $q_0=\initstate$ is
  accepting.
\end{defi}
\noindent Thus, a pre-tableau $(W,\delta')$ is obtained from
$\mathsf{B}_\target$ by keeping just a single outgoing local
transition at each node, and by removing some of the modal transitions
in such a way that the remaining modal transitions still suffice for
satisfaction of the modal literals in the label. In order for
$(W,\delta')$ to be a tableau, we additionally require that all runs
of this automaton are accepting (however, there may be infinite words
over~$\Sigma$ on which $(W,\delta')$ does not have a run, and which
are thus not accepted). Since the definition of~$u^\kappa_q$ as
per~\eqref{eq:node-one-step-pair} only deems a modal argument~$\psi$
to be satisfied in a~$\kappa$-successor of~$q$ if~$\psi$ is tracked
under~$\kappa$, there is a balance to be struck in choosing the modal
transitions to keep -- as indicated, these need to suffice to satisfy
all modal literals in the label, but every modal transition that is
kept induces more tracking that may expose infinite deferral.

\begin{exa}[Tableaux]\label{ex:tab}
  Recall the co-determinized tracking automaton $\mathsf{B}_\target$
  for the monotone $\mu$-calculus formula
  $\chi=\nu X.\,(a\land\mu Y.\,(X \lor\mondiamond{g}Y))$ from
  \autoref{ex:dtrack}. To avoid triviality, we slightly tweak the
  semantics to work with \emph{serial} monotone neighbourhood frames,
  i.e.\ coalgebras for the functor
  $\Mon_s^\AtGames\times\Pow(\PropAts)$ where~$\Mon_s$ is the
  \emph{serial monotone neighbourhood functor}~$\Mon_s$ given by
  $\Mon_s(X)=\{N\in\Mon X\mid \emptyset\notin N\neq\emptyset\}$. (In a
  serial monotone neighbourhood frame, we thus cannot satisfy a
  formula $\gldiamond{g}\phi$ at a state~$c$ by just making the empty
  set a neighbourhood of~$c$.) Below we show a partial automaton
  (obtained from $\mathsf{B}_\target$ by removing various transitions
  and nodes) that is a tableau for $\target$; for better comparison
  with the original automaton $\mathsf{B}_\target$, we use dotted
  transitions and nodes to depict the parts of~$\mathsf{B}_\target$
  that have been removed (and are not considered to belong to the
  tableau). Recall that
  $\kappa_{\mondiamond{g}\phi}=\{\mondiamond{g}\phi\}$.
\begin{center}
\begin{tiny}
  \begin{tikzpicture}[
		auto,
    node distance=1.7cm,
    semithick
    ]
     \node[state,rectangle,rounded corners,initial left,label={below: $2$}] (e) {$\chi$};
     \node[state,rectangle,rounded corners,label={above: $0$}] (f) [right of=e] {$a\wedge\phi$};
     \node[state,rectangle,rounded corners,label={right: $1$}] (g) [below of=f] {{$a,\phi$}};
     \node[state,rectangle,rounded corners,label={below: $0$}] (h) [below of=g] {$a,\chi\vee\mondiamond{g}\phi$};
     \node[state,rectangle,rounded corners,label={right: $0$}] (i) [right of=h] {$a,\mondiamond{g}\phi$};
     \node[state,rectangle,rounded corners,label={right: $1$}] (j) [above of=i] {$\phi$};
     \node[state,rectangle,rounded corners,label={above: $0$}] (k) [above of=j] {$\chi\vee\mondiamond{g}\phi$};
     \node[state,dotted,rectangle,rounded corners,label={right: $0$}] (l) [right of=k] {$\mondiamond{g}\phi$};

     \node[state,dotted,rectangle,rounded corners,label={left: $2$}] (a) [left of=h] {$a,\chi$};
     \node[state,dotted,rectangle,rounded corners,label={left: $0$}] (b) [above of=a] {$a,a\land\phi$};
     \path[->] (e) edge node [above] {$\tau_l$} (f);
     \path[->] (f) edge node [right] {$\tau_r$} (g);
     \path[->] (g) edge node [right] {$\tau_l$} (h);
     \path[->] (h) edge node [below] {$\tau_r$} (i);
     \path[->] (i) edge node [right] {$\kappa_{\mondiamond{g}\phi}$} (j);
     \path[->] (i) edge [loop below] node [right] {$\tau_r$} (i);
     \path[->] (j) edge node [left] {$\tau_l$} (k);
     \path[->] (k) edge [bend right=40] node [above] {$\tau_l$} (e);

     \path[->] (l) edge [dotted, loop above] node [right] {$\tau$} (l);
     \path[->] (h) edge [dotted] node [above] {$\tau_l$} (a);
     \path[->] (a) edge [dotted] node [left] {$\tau$} (b);
     \path[->] (b) edge [dotted] node [above] {$\tau$} (g);
     \path[->] (k) edge [dotted] node [above] {$\tau_r$} (l);
     \path[->] (l) edge [dotted, bend left=20] node [right] {$\kappa_{\mondiamond{g}\phi}$} (j);

  \end{tikzpicture}
\end{tiny}
\end{center}
One easily verifies that this structure is indeed a tableau: First, it
is clearly a subautomaton of $\mathsf{B}_\target$.  Moreover, every
(reachable) node has exactly one outgoing local transition, and no
node contains $\bot$ in its label. Again we skip the treatment of
propositional atoms as modalities, and concentrate instead on one-step
satisfiability in the bottom-right node, which for purposes of the
subsequent discussion we denote as~$q$. In~$q$, the notation
introduced in~\eqref{eq:node-one-step-pair} and~\eqref{eq:Xi-q}
instantiates as follows. We have~$V_q=\{a_{\mondiamond{g}\phi}\}$,
$\gamma_q=\{\mondiamond{g}a_{\mondiamond{g}\phi}\}$,
$\Xi(q)=\{\kappa_{\mondiamond{g}\phi}\}$,
and~$\Theta^{\Xi(q)}_q=\{u^{\kappa_{\mondiamond{g}\phi}}_q\}=\{\{a_{\mondiamond{g}\phi}\}\}$.
One-step satisfiability of the one-step pair
$(\gamma_q,\Theta^{\Xi(q)}_q)=
(\{\mondiamond{g}a_{\mondiamond{g}\phi}\},\{\{a_{\mondiamond{g}\phi}\}\})$
is obvious. Finally, every infinite run
of this subautomaton either loops at the bottom-right node forever, or
visits the initial node infinitely often; in either case, the maximal
priority that is visited infinitely often is even, so the run is
accepting.
\end{exa}

We will see that there is a tableau for~$\target$ if and only
if~$\target$ is satisfiable. We go on to show one direction of this
statement (`only if') now; the other direction is a consequence of the
results of \autoref{sec:satgames} below. That is, we will
construct a coalgebraic model of the target formula~$\target$ on a
tableau for $\target$. As indicated above, the key property of such a
coalgebra is \emph{coherence} w.r.t.\ the tableau. We will build the
coalgebraic model using only so-called \emph{state nodes} of the
tableau, defined next.

\begin{defi}[Local runs, pre-tableau states]
  Let $(W,\delta')$ be a pre-tableau.
  A node $q\in W$ is a \emph{state node} if the local run $\rho(q)$
  that starts at $q$ is a cycle.
  We denote the set of state nodes of $(W,\delta')$
  by $\mathsf{states}(W,\delta')\subseteq W$.  For $q\in W$, we
  let~$\lceil q\rceil$ denote the first state node (for definiteness)
  on $\rho(q)$ (possibly $\lceil q\rceil =q$), and extend this
  notation to sets of nodes, putting
  $\lceil V\rceil=\{\lceil q\rceil\mid q\in V\}\subseteq
  \mathsf{states}(W,\delta')$ for $V\subseteq W$.
\end{defi}

\begin{exa}[Local runs, pre-tableau states]
  The tableau from \autoref{ex:tab} provides just a single way to
  construct local runs (namely, by eventually staying forever in the
  bottom right node) and, consequently, contains just a single state
  node. Exploiting the fact that labels happen to be unique in the
  example, we refer to states by their labels. Then we have
  $\lceil W\rceil=\mathsf{states}(W,\delta')=\{\{a, \langle
  g\rangle\phi\}\}$. We can however, modify the tableau in this
  example and obtain an alternative tableau by taking the left
  ($\tau_l$-)transition at $\{a,\chi\vee\mondiamond{g}\phi\}$, instead
  of the right ($\tau_r$-)transition. In this alternative tableau, we
  then have state nodes $\{a,\phi\}$,
  $\{a,\chi\vee\mondiamond{g}\phi\}$, $\{a,\chi\}$ and
  $\{a,a\land \phi\}$, all of which are part of the local run that
  loops through the bottom left cycle of the automaton. Then we have,
  e.g., $\lceil\{a\land\phi\}\rceil=\{a,\phi\}$ but
  $\lceil\{a,\chi\}\rceil=\{a,\chi\}$.
\end{exa}

\begin{rem}[Local runs, pre-tableau states]
  Observe that the labels of nodes along the local run $\rho(q)$
  become semantically stronger through the choice of disjuncts in
  disjunctions. In particular, the set of modal literals contained in
  the label grows monotonically along $\rho(q)$. At the same time,
  formulae may be syntactically lost from the label along steps of the
  local run; e.g.\ a disjunction may be replaced with a disjunct, a
  conjunction with both its conjuncts, and fixpoint literals may be
  unfolded. All (state) nodes on the local run of a state node are
  thus semantically equivalent, and contain the same modal
  literals, but otherwise may differ syntactically. We use the
  mechanism of local runs to avoid introducing a notion of non-modal
  entailment that combines propositional entailment and fixpoint
  unfolding.
\end{rem}

\begin{defi}[Coherence]\label{defn:strongcoherence}
  Let $(W,\delta')$ be a pre-tableau.  A coalgebra structure~$\xi$ on
  $\mathsf{states}(W,\delta')$ is \emph{coherent} (over $(W,\delta')$) if for all
  $q\in \mathsf{states}(W,\delta')$ and all $\hearts\psi\in \FLtarget$,
\begin{align*}
\hearts\psi\in l^A(q)\text{ implies } 
\xi(q)\in\sem{\hearts}\lceil \delta'(q,\mathsf{selections}/{\hearts\psi})\rceil.
\end{align*}
\end{defi}
\noindent Note that~$\psi\in l^A(\delta'(q,\kappa))$ for every
$\kappa\in\mathsf{selections}/{\hearts\psi}$, so~$\psi$ is
semantically entailed by $l^A(\lceil\delta'(q,\kappa)\rceil)$. The
converse, however, does not hold, i.e.\ even if $l^A(q')$
entails~$\psi$, it need not be the case that~$q'$ is on the local run
of some node in
$\delta'(q,\mathsf{selections}/{\hearts\psi})$. Requiring that
$\xi(q)\in\sem{\hearts}(\lceil
\delta'(q,\mathsf{selections}/{\hearts\psi})\rceil)$ in the above
definition thus means that we insist that~$\hearts\psi$ is satisfied
considering only those successors of~$q$ to which~$\psi$ is tracked.

Due to property ($3$) of pre-tableaux, coherent coalgebra structures
always exist:

\begin{lem}[Existence lemma]\label{lem:existence}
  Let $(W,\delta')$ be a pre-tableau. Then there is a coherent coalgebra
  structure on $\mathsf{states}(W,\delta')$.
\end{lem}
\begin{proof}
  Let $q\in\mathsf{states}(W,\delta')$ be a state node. Since
  $(W,\delta')$ is a pre-tableau, we can pick
  \begin{equation*}
    t\in\sem{\gamma_q}_1^{\Theta_q^{\Xi(q)}},
  \end{equation*}
  in notation as per~\eqref{eq:node-one-step-pair}
  and~\eqref{eq:Xi-q}; in particular,
  $\Theta_q^{\Xi(q)}=\{u^\kappa_q\mid\kappa\subseteq l^A(q) \text{
    and}$ $\delta'(q,\kappa)$ $\text{is defined}\}$ where
  $u^\kappa_q=\{a_{\hearts\psi}\mid\hearts\psi\in\kappa\}$; so
  $\Theta_q^{\Xi(q)}$ abstracts the labels of the successors of~$q$ in
  the pretableau $(W,\delta')$.
  Let~$h\colon \Theta_q^{\Xi(q)}\to\Xi(q)$ be a section of the
  surjective map~$\Xi(q)\to\Theta_q^{\Xi(q)}$,
  $\kappa\mapsto u^\kappa_q$ (so $u=u^{h(u)}_q$ for
  $u\in\Theta_q^{\Xi(q)}$), and put $\xi(q)=(Fg)(t)$ where
  $g\colon \Theta_q^{\Xi(q)}\to\mathsf{states}(W,\delta')$ is defined
  by $g(u)=\lceil \delta'(q,h(u))\rceil$.  We show that thus
  defined,~$\xi$ is coherent: Let $\hearts\psi\in l^A(q)$. Then
  $\hearts a_{\hearts\psi}\in\gamma_q$, so
  $t\in\sem{\hearts} \sem{a_{\hearts\psi}}^{\Theta_q^{\Xi(q)}}_0$. By
  naturality and monotonicity of $\sem{\hearts}$,
  $\xi(q)\in\sem{\hearts}\lceil
  \delta'(q,\mathsf{selections}/{\hearts\psi})\rceil$ follows once we
  show that
  \begin{equation*}
    \sem{a_{\hearts\psi}}^{\Theta_q^{\Xi(q)}}_0\subseteq g^{-1}[\lceil \delta'(q,\mathsf{selections}/{\hearts\psi})\rceil].
  \end{equation*}
  So let $u\in\sem{a_{\hearts\psi}}^{\Theta_q^{\Xi(q)}}_0$, that is,
  $a_{\hearts\psi}\in u$. By~\eqref{eq:Xi-q}
  and~\eqref{eq:node-one-step-pair}, $\delta'(q,h(u))$ is defined and
  $\hearts\psi\in h(u)$, so
  $\delta'(q,h(u))\in\delta'(q,\mathsf{selections}/{\hearts\psi})$
  and hence
  $g(u)\in \lceil
  \delta'(q,\mathsf{selections}/{\hearts\psi})\rceil$ as required.
\end{proof}

\noindent We finally show that a coherent coalgebra structure is
indeed a model of~$\target$:

\begin{lem}[Truth lemma]\label{lem:truth}
  Let $(W,\delta')$ be a tableau, and let~$\xi$ be a coherent
  coalgebra structure on $V:=\mathsf{states}(W,\delta')$. Then
  $\lceil\initstate\rceil\in\sem{\target}$ in $(V,\xi)$.
\end{lem}
\begin{proof}
  By \autoref{thm:satisfaction-game}, it suffices to show
  that~$\exists$ wins the position $(\lceil \initstate\rceil,\target)$
  in the model checking game $\game_{{\target,(V,\xi)}}$. We define a
  history-dependent $\exists$-strategy~$s$ in
  $\game_{{\target,(V,\xi)}}$, maintaining the invariant that if
  $(q_n,\psi_n)$ is the $n$-th position of the shape $(q',\psi')$
  visited in the play, then
  \begin{quote}
    there is $u_n\in W$ such that $\psi_n\in l^A(u_n)$ and $q_n$ lies
    on the local run~$\rho(u_n)$, and for $n>0$ there is a word~$w_n$
    such that $u_n=\delta'(u_{n-1},w_n)$ and
    $\psi_n\in\Delta(\psi_{n-1},w_n)$. Moreover,
    \begin{itemize}
    \item If $\psi_{n-1}$ has the form $\psi_{n-1}=\hearts\phi$, then
      $w_{n-1}$ has the form $w_{n-1}=\tau_1,\dots,\tau_m,\kappa$
      where $m\ge 0$, $\tau_1,\dots,\tau_m\in\mathsf{choices}$ and
      $\kappa\in\mathsf{selections}/{\hearts\phi}$. (Notice that this
      description of~$w_{n-1}$ already
      implies~$\Delta(\psi_{n-1},w_{n-1})=\Delta(\hearts\phi,w_{n-1})=\{\phi\}=\{\psi_n\}$.)
    \item Otherwise, $w_{n-1}$ has the form $w_n=\tau$ where
      $\tau\in\mathsf{choices}$.
    \end{itemize}
  \end{quote}
  The history-dependence of~$s$ is caused by keeping the tableau node
  $u_{n}$ in memory. The invariant holds initially, i.e.\ at
  $(q_0,\psi_0)=(\lceil \initstate\rceil,\target)$, for
  $u_0=\initstate$. We show next that~$\exists$ can enforce the
  invariant at $(q_{n+1},\psi_{n+1})$ if it holds at
  $(q_n,\psi_n)$. Since $(W,\delta')$ is a pre-tableau, we have a
  unique $\tau\in\mathsf{choices}$ such that $\delta'(u_i,\tau)$ is
  defined. We distinguish cases on~$\psi_n$:
  \begin{enumerate}
  \item $\psi_n=\bot$: By the invariant and the definition of
    pre-tableaux, this case does not occur.
  \item $\psi_n=\top$: $\exists$ wins immediately.
  \item $\psi_n=\phi_1\land\phi_2$: Then $(q_n,\psi_n)$ belongs
    to~$\forall$, who moves to $(q_{n+1},\psi_{n+1})$ where
    $q_{n+1}=q_n$ and $\psi_{n+1}\in\{\phi_1,\phi_2\}$. The invariant
    is preserved by taking $u_{n+1}=\delta'(u_i,\tau)$.
  \pagebreak
  \item $\psi_n=\phi_1\vee\phi_2$: We define~$s$ by letting~$\exists$
    move to $(q_{n+1},\psi_{n+1})=(q_n,\tau(\psi_n))$. Again, the
    invariant is preserved by taking $u_{n+1}=\delta'(u_i,\tau)$.
  \item $\psi_n=\eta x.\phi$: We define~$s$ by letting~$\exists$ play
    the only available move, to
    $(q_{n+1},\psi_{n+1})=(q_n,\phi[\eta x.\phi/x])$; again, the
    invariant is preserved by taking $u_{n+1}=\delta'(u_i,\tau)$.
  \item $\psi_{n}= \hearts\phi$: It follows from the invariant that
    $\hearts\phi\in l^A(q_n)$, since $\hearts\phi$ is never processed
    along $\rho(u_n)$. Since $\xi$ is coherent, we thus have
    $\xi(q_n)\in\sem{\hearts}(D)$ for
    $D= \lceil \delta'(q_n,\mathsf{selections}/{\hearts\phi})\rceil$,
    so we can define~$s$ by letting~$\exists$ move to $(D,\phi)$. If
    $D=\emptyset$, then~$\exists$ wins
    immediately. Otherwise,~$\forall$ moves to some position
    $(q_{n+1},\phi)$ (so $\psi_{n+1}=\phi$) such that $q_{n+1}\in D$,
    i.e.\ there is $\kappa\in\mathsf{selections}/{\hearts\phi}$ such
    that $q_{n+1}=\lceil\delta(q_n,\kappa)\rceil$. Since by the
    invariant,~$q_n$ lies on the local path~ $\rho(u_n)$, we have
    $\delta(q_n,\kappa)=\delta(u_n,w_n)$ for $w_n\in\Sigma^*$ of the
    required form $w_n=\tau_1,\dots,\tau_m,\kappa$ where
    $\tau_1,\dots,\tau_m\in\mathsf{choices}$, so the invariant is
    preserved by taking $u_{n+1}=\delta(q_{n},\kappa)$. In particular,
    $\psi_{n+1}=\phi$ is in $l^A(u_{n+1})$ and in $\Delta(\psi_n,w_n)$
    because $\hearts\phi\in\kappa$, and $q_{n+1}$ is on the local
    path~$\rho(u_{n+1})$.\medskip
  \end{enumerate} 
  We have to show that $s$ is a winning strategy; since, as we have
  noted, the game never reaches a position of the form
  $(q,\bot)$,~$\exists$ wins all finite plays, so it remains only to
  show that~$\exists$ wins every infinite play~$\pi$ that
  follows~$s$. By the invariant,~$\pi$ induces a word
  $w=w_0w_1\ldots\in\Sigma^\omega$, a run $\bar\pi=u_0,u_1,\ldots$
  (with $u_0=\initstate$) of the tableau $(W,\delta')$ on $w$, and a
  run~$\rho$ of the non-deterministic tracking
  automaton~$\mathsf{A}_\target$ on~$w$. Since $(W,\delta')$ is a
  tableau and~$w$ has an infinite run,~$w$ is accepted
  by~$(W,\delta')$ and hence also by~$\mathsf{B}_\target$. Thus,~$w$
  is rejected by~$\autom_\target$; in particular,~$\rho$ is a
  non-accepting run of~$\autom_\target$. Now~$\rho$ differs from the
  sequence of formulae occurring in~$\pi$ only by possible finite
  repetition of formulae of the form~$\psi_n=\hearts\phi$, caused
  by~$\mathsf{A}_\target$ looping on the choice functions occurring
  in~$w_n$. As the winning objective in $\game_{{\target,(V,\xi)}}$ is
  dual to acceptance in $\mathsf{A}_\target$, which is unaffected by
  finite repetition of letters (`stuttering'),~$\exists$ thus wins the
  play~$\pi$.
\end{proof}

\section{Satisfiability Games}\label{sec:satgames}

We now introduce a generic game characterization of satisfiability
in the coalgebraic $\mu$-calculus (Definition~\ref{defn:satgames}),
rooted in classical algorithmic treatments of the relational
$\mu$-calculus as well as in previous work on the coalgebraic
$\mu$-calculus~\cite{FontaineEA10,CirsteaEA11a} (see
\autoref{sec:intro} and \autoref{rem:flv} for a detailed
discussion). In the game, the existential player effectively attempts
to establish existence of a tableau (\autoref{sec:tableaux}) for
the target formula~$\target$. Like tableaux, the game thus involves
the notion of one-step satisfiability (Definition~\ref{def:oss}). (We
note that a similar condition appears in a previous notion of
satisfiability game for the coalgebraic
$\mu$-calculus~\cite{FontaineEA10}, which however is otherwise
markedly different from ours, cf.\ \autoref{rem:flv}. The notion of
tableau game used in the algorithm based on complete sets of modal
tableau rules~\cite{CirsteaEA11a} has a more similar shape to ours but
appears slightly larger in that automata nodes that go into the model
construction are additionally annotated with tableau sequents.) We
prove correctness of the game by showing on the one hand that a
tableau may indeed be extracted from a winning strategy of the
existential player (\emph{completeness}), and on the other hand that a
winning strategy of the existential player can be extracted from a
given model of the target formula~$\target$ (\emph{soundness}), and
indeed may be obtained from a winning strategy of the existential
player in the corresponding model checking game.
\pagebreak

We first present the definition of the satisfiability game, which like
the model checking game (\autoref{sec:tracking}) takes the shape of
a standard parity game. Recall that
$\mathsf{B}_\target=(\detcarrier,\Sigma,\delta,\initstate,\Omega)$ is
the co-determinized tracking automaton for the target
formula~$\target$, and comes with the labelling function
$l^A:\detcarrier\to\Pow(\FLtarget)$.

\begin{defi}[Satisfiability game]\label{defn:satgames}

The \emph{satisfiability game} $\game_\target=(V_\forall,V_\exists,E,v_0,\Omega')$ for $\target$ is a parity game
with sets
\begin{align*}
V_\exists&=\detcarrier\times\{0,1\} &
V_\forall&=\detcarrier\cup(\detcarrier\times\Pow(\mathsf{selections}))
\end{align*}
of positions and $v_0=q_\mathsf{init}$. 
The moves and the priorities in $\game_\target$ are
defined by the following table, where $q\in\detcarrier$ and
$\Xi\in \Pow(\mathsf{selections})$, and we write
\begin{align*}
\mathsf{selections}(q)=\{\kappa\in\mathsf{selections}\mid \kappa\subseteq l^A(q)\}.
\end{align*}\medskip
\vspace*{-0.5cm}
\begin{center}
\begin{tabular}{|c|c|c|c|}
\hline
position & owner & set of allowed moves & priority\\
\hline
$q$ & $\forall$ & $\{(q,0),(q,1)\}$ & $\Omega(q)$\\
$(q,0)$ & $\exists$ & $\{\delta(q,\tau)\in\detcarrier\mid\tau\in\mathsf{choices},\bot\notin l^A(q) \}$ & $0$\\
$(q,1)$ & $\exists$ & $\{(q,\Xi)\mid\Xi\in\Pow(\mathsf{selections}(q)),
(\gamma_q,\Theta_q^\Xi) \text{ is satisfiable}\}$ & $0$\\
$(q,\Xi)$ & $\forall$ & $\{\delta(q,\kappa)\mid \kappa\in\Xi\}$ & $0$\\
\hline
\end{tabular}
\end{center}
\end{defi}\medskip
\noindent Recall here that the components of the one-step pair
$(\gamma_q,\Theta_q^\Xi)$ are
\begin{align*}
  \gamma_q&=\{\hearts a_{\hearts\psi}\mid \hearts\psi\in l^A(q)\} \hspace{10em}\text{and}\\
  \Theta_q^\Xi&=\{u^\kappa_q=\{a_{\hearts\psi}\mid\hearts\psi\in
                l^A(q)\cap\kappa\}\mid
                \kappa\in \Xi\}.
\end{align*}
Thus, $\game_\target$ is a parity game with $\prios$ priorities
(inherited from~$\mathsf{B}_\target$, cf.\ \autoref{sec:tracking})
and $|\detcarrier|(3+2^{2^{n_0}})$ positions.  As indicated above, the
game is aimed at determining whether there exists a tableau
for~$\target$, and thus closely follows
Definition~\ref{defn:pretab}. Specifically, when the play reaches a
node~$q\in\detcarrier$, this indicates that the node needs to be
included in the tableau, so~$\forall$ may challenge either the
propositional clause~\eqref{item:tableau-tau} or the modal
clause~\eqref{item:tableau-Xi} of the definition of pre-tableaux by
moving to $(q,0)$ or to $(q,1)$, respectively. The admissibility
conditions for the respective subsequent $\exists$-moves match the
conditions given in the relevant clauses of
Definition~\ref{defn:pretab}; in particular,~$\exists$ loses $(q,0)$
if $\bot\in l^A(q)$. The winning condition of~$\game_\target$ ensures
the tableau property from Definition~\ref{defn:pretab}; that
is,~$\exists$ wins a play~$\pi$ iff the sequence of positions of the
form $q\in\detcarrier$ encountered on~$\pi$ is an accepting run
of~$\mathsf{B}_\target$. We formally establish in \autoref{lem:games}
that~$\exists$ winning the game really does guarantee existence of a
tableau for~$\target$.

\begin{exa}[Satisfiability game]\label{ex:satgame}
  Recall the co-determinized tracking automaton $\mathsf{B}_\target$
  for the monotone $\mu$-calculus formula
  $\chi=\nu X.\,(a\land\mu Y.\,(X \lor\mondiamond{g}Y))$ from
  \autoref{ex:dtrack} (with $\phi$ abbreviating
  $\mu Y.\,(X \lor\mondiamond{g}Y)$). Like already in
  \autoref{ex:tab}, we restrict the semantics to serial monotone
  neighbourhood frames. Below we show the satisfiability game
  $\mathsf{G}_\target$ for $\target$, constructed over
  $\mathsf{B}_\target$. Again, rounded boxes indicate
  $\exists$-positions.

\begin{center}
\begin{tiny}
  \begin{tikzpicture}[
		auto,
    node distance=1.7cm,
    semithick
    ]
     \node[state,rectangle,initial left,label={above: $2$}] (e) {$\chi$};
     \node[state,rectangle,rounded corners,label={above: $0$}] (ea) [right of=e] {$\chi,0$};
     \node[state,rectangle,label={above: $0$}] (f) [right of=ea] {$a\wedge\phi$};
     \node[state,rectangle,rounded corners,label={right: $0$}] (fa) [below of=f] {$a\wedge\phi,0$};
     \node[state,rectangle,label={right: $1$}] (g) [below of=fa] {{$a,\phi$}};
     \node[state,rectangle,rounded corners,label={right: $0$}] (ga) [below of=g] {{$a,\phi,0$}};
     \node[state,rectangle,label={right: $0$}] (h) [below of=ga] {$a,\chi\vee\mondiamond{g}\phi$};
     \node[state,rectangle,rounded corners,label={below: $0$}] (ha) [below of=h] {$a,\chi\vee\mondiamond{g}\phi,0$};
     \node (wa) [right of=ha] {};
     \node[state,rectangle,label={below: $0$}] (i) [right of=wa] {$a,\mondiamond{g}\phi$};
     \node (wu) [right of=i] {};
     \node[state,rectangle,rounded corners,label={right: $0$}] (ya) [right of=wu] {$a,\mondiamond{g}\phi,0$};
     \node[state,rectangle,rounded corners,label={above: $0$}] (ia) [above of=i] {$a,\mondiamond{g}\phi,1$};
     \node (yo) [right of=ia] {};
     \node[state,rectangle,label={right: $0$}] (za) [right of=yo] {$a,\mondiamond{g}\phi,\{\{\mondiamond{g}\phi\}\}$};
     \node (zu) [above of=ia] {};
     \node[state,rectangle,label={right: $1$}] (j) [right of=zu] {$\phi$};
     \node[state,rectangle,rounded corners,label={right: $0$}] (ja) [above of=zu] {$\phi,0$};
     \node[state,rectangle,label={right: $0$}] (k) [above of=ja] {$\chi\vee\mondiamond{g}\phi$};
     \node[state,rectangle,rounded corners,label={above: $0$}] (ka) [above of=k] {$\chi\vee\mondiamond{g}\phi,0$};
     \node (yu) [right of=ka] {};
     \node[state,rectangle,label={above: $0$}] (l) [right of=yu] {$\mondiamond{g}\phi$};
     \node[state,rectangle,rounded corners,label={right: $0$}] (t) [right of=l] {$\mondiamond{g}\phi,0$};
     \node[state,rectangle,rounded corners,label={right: $0$}] (la) [below of=l] {$\mondiamond{g}\phi,1$};

     \node (zo) [above of=za] {};
     \node[state,rectangle,label={right: $0$}] (xa) [above of=zo] {$\mondiamond{g}\phi,\{\{\mondiamond{g}\phi\}\}$};
     \node (wo) [left of=ha] {};
     \node[state,rectangle,label={left: $2$}] (a) [left of=wo] {$a,\chi$};
     \node[state,rectangle,rounded corners,label={left: $0$}] (aa) [above of=a] {$a,\chi,0$};
     \node[state,rectangle,label={left: $0$}] (b) [above of=aa] {$a,a\land\phi$};
     \node[state,rectangle,rounded corners,label={left: $0$}] (ba) [above of=b] {$a,a\land\phi,0$};
     \path[->] (e) edge node [above] {} (ea);
     \path[->] (ea) edge node [above] {} (f);
     \path[->] (f) edge node [right] {} (fa);
     \path[->] (fa) edge node [right] {} (g);
     \path[->] (g) edge node [right] {} (ga);
     \path[->] (ga) edge node [right] {} (h);
     \path[->] (h) edge node [below] {} (ha);
     \path[->] (ha) edge [very thick] node [below] {} (i);
     \path[->] (i) edge node [right] {} (ia);
     \path[->] (ia) edge node [right] {} (za);
     \path[->] (za) edge node [right] {} (j);
     \path[->] (i) edge [bend left=20] node [right] {} (ya);
     \path[->] (ya) edge [bend left=20] node [right] {} (i);
     \path[->] (j) edge node [left] {} (ja);
     \path[->] (ja) edge node [left] {} (k);
     \path[->] (k)  edge node [left] {} (ka);
     \path[->] (ka) edge [very thick, bend right=30] node [above] {} (e);
     \path[->] (ka) edge node [above] {} (l);
     \path[->] (ha) edge node [above] {} (a);
     \path[->] (a) edge node [left] {} (aa);
     \path[->] (aa) edge node [left] {} (b);
     \path[->] (b) edge node [above] {} (ba);
     \path[->] (ba) edge node [above] {} (g);
     \path[->] (l)  edge node [left] {} (la);
     \path[->] (la) edge node [right] {} (xa);
     \path[->] (xa) edge node [right] {} (j);
     \path[->] (l) edge [bend left=20] node [right] {} (t);
     \path[->] (t) edge [bend left=20] node [right] {} (l);

  \end{tikzpicture}
\end{tiny}
\end{center}
(In the figure, we have omitted positions of the form $(q,1)$ where
$l^A(q)$ does not contain any modal literal, such as $(\chi,1)$ in the
present example, which~$\exists$ wins immediately by moving to
$(q,\emptyset)$.) The bold arrows indicate an $\exists$-strategy~$s$
that wins the initial position $\chi$ in the game. As indicated above,
such strategies induce tableaux; in this case,~$s$ induces
(essentially) the tableau considered in \autoref{ex:tab} above.

\end{exa}

\noindent As indicated above, we prove completeness of the game by
showing that winning strategies for the existential player in the
satisfiability game induce tableaux:

\begin{lem}\label{lem:games}
  If the existential player wins the satisfiability game $\game_\target$,
  then there is a tableau for~$\target$.
\end{lem}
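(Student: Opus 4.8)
The plan is to read a tableau directly off a history-free winning strategy for~$\exists$ in~$\game_\target$. By history-free determinacy of parity games (Lemma~\ref{lem:pargame-det}), fix such a strategy~$s$; by hypothesis $\exists$ wins the initial node~$\initstate$. Let $W\subseteq\detcarrier$ be the set of automaton states that occur in some play of~$\game_\target$ that starts at~$\initstate$ and is compatible with~$s$; since~$\initstate$ and hence all of its $s$-successors lie in~$\mathsf{win}^\exists$, every $q\in W$ is won by~$\exists$. For each $q\in W$, the node~$q$ is owned by~$\forall$, so both of its successors $(q,0),(q,1)$ are won by~$\exists$ as well. At~$(q,0)$ the strategy~$s$ prescribes a legal move $\delta(q,\tau_q)\in\detcarrier$ for some $\tau_q\in\mathsf{choices}$ (legality forcing $\bot\notin l^A(q)$); at~$(q,1)$ it prescribes a legal move $(q,\Xi_q)$ for some $\Xi_q\in\Pow(\mathsf{selections})$ (legality giving $\sem{l^A(q)\cap\Lambda(\FLtarget)}^{l^A[\{\delta(q,\kappa)\mid\kappa\in\Xi_q\}]}_1\neq\emptyset$), and then~$(q,\Xi_q)$ is won, so all of its $\forall$-successors $\delta(q,\kappa)$, $\kappa\in\Xi_q$, are won too. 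I define~$L$ by $L(q,\tau_q)=\{\delta(q,\tau_q)\}$, $L(q,\kappa)=\{\delta(q,\kappa)\}$ for $\kappa\in\Xi_q$, and $L(q,\sigma)=\emptyset$ otherwise; all targets are $s$-reachable from~$\initstate$, hence in~$W$, so $(W,L)$ is well-formed.

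Next I would verify the four conditions of Definition~\ref{defn:pretab}. Condition~(1) holds since $\initstate\in W$. Condition~(2) is immediate from the definition of~$L$. Condition~(3) is exactly the legality condition for~$\exists$'s move to $(q,\Xi_q)$ at~$(q,1)$, using that $L(q,\mathsf{selections})=\{\delta(q,\kappa)\mid\kappa\in\Xi_q\}$. For condition~(4), $\bot\notin l^A(q)$ was noted above (otherwise~$\exists$ would have no move from~$(q,0)$, contradicting $q\in\mathsf{win}^\exists$), and~$\tau_q$ is by construction the unique choice function with $L(q,\tau_q)\neq\emptyset$. Thus $(W,L)$ is a pre-tableau.

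For the tableau property, let $(q_0,\sigma_0),(q_1,\sigma_1),\dots$ be a path through $(W,L)$ with $q_0=\initstate$; since~$L$ restricts~$\delta$ and~$\mathsf{B}_\target$ is deterministic, $q_0,q_1,q_2,\dots$ is the unique run of~$\mathsf{B}_\target$ on $w=\sigma_0\sigma_1\dots$, so $w\in L(\mathsf{B}_\target)$ iff $\max(\mathsf{Inf}(\Omega(q_0),\Omega(q_1),\dots))$ is even. I turn the path into a play~$\pi$ of~$\game_\target$ by inserting intermediate nodes: between~$q_i$ and~$q_{i+1}$, if $\sigma_i\in\mathsf{choices}$ then $q_i\to(q_i,0)\to q_{i+1}$ (here necessarily $\sigma_i=\tau_{q_i}$ by condition~(4), and $q_{i+1}=\delta(q_i,\tau_{q_i})$ is exactly the move~$s$ prescribes at~$(q_i,0)$), while if $\sigma_i\in\mathsf{selections}$ then $q_i\to(q_i,1)\to(q_i,\Xi_{q_i})\to q_{i+1}$ (here $\sigma_i\in\Xi_{q_i}$ since $L(q_i,\sigma_i)\neq\emptyset$, so $\delta(q_i,\sigma_i)$ is a legal $\forall$-move from~$(q_i,\Xi_{q_i})$, and~$(q_i,\Xi_{q_i})$ is the move~$s$ prescribes at~$(q_i,1)$). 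At every $\exists$-node,~$\pi$ follows~$s$, so~$\pi$ is won by~$\exists$. Since the inserted nodes $(q_i,0),(q_i,1),(q_i,\Xi_{q_i})$ all have priority~$0$ whereas~$q_i$ has priority~$\Omega(q_i)$, the highest priority occurring infinitely often in~$\pi$ equals $\max(\mathsf{Inf}(\Omega(q_0),\Omega(q_1),\dots))$, which is therefore even. Hence $w\in L(\mathsf{B}_\target)$, i.e.\ the path is accepted, and $(W,L)$ is a tableau.

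I expect the only real bookkeeping to be in the last paragraph: checking that every path through~$(W,L)$ indeed lifts to an $s$-compatible play (so that the local letter is forced to be~$\tau_{q_i}$, and the branching of paths is precisely mirrored by~$\forall$'s branching at~$q_i$ and at~$(q_i,\Xi_{q_i})$), and that the priorities line up so that ``$\exists$ wins~$\pi$'' translates exactly into ``$\mathsf{B}_\target$ accepts~$w$''. Everything else is routine unfolding of the definitions.
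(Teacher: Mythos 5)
Your proposal is correct and follows essentially the same route as the paper's proof: extract $W$ and $L$ from a history-free winning strategy, observe that the pre-tableau conditions are exactly the legality conditions of $\exists$'s moves, and lift each path through $(W,L)$ to an $s$-compatible (hence winning) play whose priorities coincide with those of the corresponding run of $\mathsf{B}_\target$. The only cosmetic difference is that the paper takes $W=\mathsf{win}^\exists\cap\detcarrier$ rather than just the $s$-reachable states, which changes nothing.
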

\begin{proof}
  Let~$s\colon V_\exists\to V$ be a history-free $\exists$-strategy
  that wins $\initstate$ in~$\game_\target$, and let~$W$ be the set of
  positions of the form $q\in\detcarrier$ that are reachable in plays
  that follow~$s$.
  We then define $\delta'\colon W\times\Sigma\pfun W$ as follows. Let
  $q\in W$. By the definition of the satisfiability game, there is
  $\tau\in\mathsf{choices}$ such that $s(q,0)=\delta(q,\tau)\in W$ (in
  particular, $\bot\notin l^A(q)$); we put
  $\delta'(q,\tau)=\delta(q,\tau)$, and let $\delta'(q,\tau')$ be
  undefined for all other $\tau'\in\mathsf{choices}$.  Similarly,
  $s(q,1)$ has the form $s(q,1)=(q,\Xi)$ where
  $\Xi\in\Pow(\mathsf{selections}(q))$, and we put
  $\delta'(q,\kappa)=\delta(q,\kappa)$ if $\kappa\in \Xi$, and let
  $\delta'(q,\kappa)$ be undefined otherwise, noting that
  $\delta(q,\kappa)\in W$ for $\kappa\in \Xi$ because~$\forall$ can
  move to $\delta(q,\kappa)$ from $(q,\Xi)=s(q,1)$. Then $(W,\delta')$
  is a pre-tableau by construction. To show that $(W,\delta')$ is a
  tableau, let $\initstate=q_0,q_1,\dots$ be a run of $(W,\delta')$ on
  a word $w=\sigma_0,\sigma_1.\dots$ By construction of $(W,\delta')$,
  this run induces a play~$\pi$ in $\game_\target$ that follows~$s$
  (explicitly, if $\sigma_i\in\mathsf{choices}$, then the play has one
  intermediate position~$(q_i,0)$ between~$q_i$
  and~$q_{i+1}=s(q_i,0)$, and if $\sigma_i\in\mathsf{selections}$,
  then the play has two intermediate positions $(q_i,1)$ and
  $s(q_i,1)$ between~$q_i$ and~$q_{i+1}$).  Since $s$ is a winning
  strategy,~$\pi$ is won by~$\exists$, which by the comments after
  Definition~\ref{defn:satgames} means that $q_0,q_1,\dots$ is an
  accepting run of $\mathsf{B}_\target$, showing that $(W,\delta')$ is
  a tableau.
\end{proof}

\noindent It remains to prove soundness. As indicated above, we
proceed by transforming a winning strategy in the model checking game
into one in the satisfiability game, exploiting that the former is
based on non-acceptance in the tracking automaton~$\mathsf{A}_\target$
and the latter on acceptance in the co-determinized tracking
automaton~$\mathsf{B}_\target$.

\begin{lem}[Soundness]\label{lem:modeltogame} Let $\target$ be satisfiable. Then
the existential player wins~$\game_\target$.
\end{lem}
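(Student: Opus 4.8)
The plan is to run the model construction in reverse: rather than extracting a model from a winning strategy, I would extract a winning strategy for~$\exists$ in~$\game_\target$ from a model of~$\target$. Fix a coalgebra~$(C,\xi)$ and a state $x_0\in C$ with $x_0\models\target$. By Lemma~\ref{lem:satisfaction-game} and history-free determinacy of parity games (Lemma~\ref{lem:pargame-det}), $\exists$ has a single history-free strategy~$t$ winning every node of her winning region in the model checking game $\game_{\target,(C,\xi)}$; by Lemma~\ref{lem:satisfaction-game} this region consists of the nodes $(x,\psi)$ with $x\models\psi$ and the nodes $(D,\psi)$ with $D\subseteq\sem{\psi}$. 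I would convert~$t$ into a strategy~$s$ for~$\exists$ in~$\game_\target$ equipped with a finite memory holding a state of~$C$, maintaining the invariant that whenever play is at an automaton node $q\in\detcarrier$ the remembered state~$x$ satisfies every formula in the label $l^A(q)$ (so, by Lemma~\ref{lem:satisfaction-game}, all $(x,\psi)$ with $\psi\in l^A(q)$ are $\exists$-won in $\game_{\target,(C,\xi)}$). Initially $q=\initstate$, where $l^A(\initstate)=\{\target\}$, and $x=x_0$.

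At a node $(q,0)$ the invariant gives $\bot\notin l^A(q)$, so $\exists$ may move; she picks $\tau\in\mathsf{choices}$ that on every disjunction $\psi_1\vee\psi_2\in l^A(q)$ returns the disjunct selected by~$t$ at $(x,\psi_1\vee\psi_2)$, moves to $\delta(q,\tau)$, and keeps the memory. Inspecting $\Delta(\psi,\tau)$ for each $\psi\in l^A(q)$ --- both conjuncts of a conjunction, the chosen disjunct of a disjunction, the unfolding $\psi_1[\eta X.\,\psi_1/X]$ of a fixpoint literal (using invariance of the extension under unfolding), modal literals unchanged --- shows that~$x$ satisfies all of $l^A(\delta(q,\tau))=\bigcup_{\psi\in l^A(q)}\Delta(\psi,\tau)$, so the invariant survives. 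At a node $(q,1)$, write the modal literals in $l^A(q)$ as $\hearts_1\psi_1,\dots,\hearts_r\psi_r$ and $t(x,\hearts_j\psi_j)=(D_j,\psi_j)$, so that $\xi(x)\in\sem{\hearts_j}_C D_j$ and $D_j\subseteq\sem{\psi_j}$; put $\kappa_y=\{\hearts_j\psi_j\mid y\in D_j\}\in\mathsf{selections}$ for $y\in C$, set $\Xi=\{\kappa_y\mid y\in C\}$, and move to $(q,\Xi)$. This move is legal: letting~$g$ be the map $y\mapsto l^A(\delta(q,\kappa_y))=\{\psi_j\mid y\in D_j\}$ from~$C$ onto $\osZero:=l^A[\{\delta(q,\kappa)\mid\kappa\in\Xi\}]$, one has $D_j\subseteq g^{-1}[\sem{\psi_j}^{\osZero}_0]$ for each~$j$, whence $Fg(\xi(x))\in\sem{l^A(q)\cap\Lambda(\FLtarget)}^{\osZero}_1$ by monotonicity of the~$\hearts_j$ and naturality of the predicate liftings (this replaces the Carath\'eodory-style arguments of Example~\ref{ex:onestepsat} by a purely naturality-based one, in the spirit of Lemma~\ref{lem:oss-stable}). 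When~$\forall$ then moves from $(q,\Xi)$ to $\delta(q,\kappa)$ for some $\kappa=\kappa_y\in\Xi$, the memory is updated to a witnessing state~$y$; since $l^A(\delta(q,\kappa_y))=\{\psi_j\mid y\in D_j\}$ and $D_j\subseteq\sem{\psi_j}$, the invariant survives this step too.

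Finally I would show~$s$ is winning. An $s$-play~$\Pi$ never gets stuck (at $(q,0)$ since $\bot\notin l^A(q)$, at $(q,1)$ by the legality just shown, at $(q,\Xi)$ since $\Xi\neq\emptyset$), so~$\Pi$ is infinite and induces the unique run $q_0=\initstate,q_1,\dots$ of $\mathsf{B}_\target$ on a word $w\in\Sigma^\omega$ together with memory states $x_0,x_1,\dots$ with $x_i\models\bigwedge l^A(q_i)$; $\exists$ wins~$\Pi$ iff this run is accepting, i.e.\ iff $w\in L(\mathsf{B}_\target)=\mathsf{GoodBranch}_\target$, i.e.\ iff $\autom_\target$ has no accepting run on~$w$. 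Suppose toward a contradiction that $\rho=\psi_0,\psi_1,\dots$ (with $\psi_0=\target$) is an accepting run of $\autom_\target$ on~$w$; then the highest $\autom_\target$-priority visited infinitely often along~$\rho$ is $\alpha(\mu X.\,\psi)=2\,\ad(X)$ for some $\mu$-variable~$X$ with $\theta(X)=\mu X.\,\psi$, and in particular $\rho$ visits $\mu X.\,\psi$ infinitely often. Since the labels evolve by the subset construction~\eqref{eq:det-label}, $\psi_i\in l^A(q_i)$ and hence $x_i\models\psi_i$ for all~$i$. From~$\rho$ and the~$x_i$ I would build an infinite play~$P$ of $\game_{\target,(C,\xi)}$ from $(x_0,\target)$ along which~$\exists$ follows~$t$: at a step from a disjunction~$\exists$ follows~$t$ (whose chosen disjunct equals $\psi_{i+1}$ by construction of~$\tau$); at a step from a conjunction or a fixpoint literal,~$P$ proceeds to $(x_i,\psi_{i+1})$ by an opponent move resp.\ the unique move; at a step from a modal literal $\psi_i=\hearts\psi'$ read with $\kappa=\kappa_{x_{i+1}}$, we have $\hearts\psi'\in\kappa_{x_{i+1}}$, i.e.\ $x_{i+1}\in D_j$ for the~$j$ with $\hearts_j\psi_j=\hearts\psi'$, so~$\exists$ follows~$t$ from $(x_i,\hearts\psi')$ to $(D_j,\psi')$ and~$\forall$ then moves to $(x_{i+1},\psi')$; and the remaining steps of~$\rho$, which are propositional self-loops $(\hearts\psi',\tau,\hearts\psi')$ at modal literals, are skipped (only finitely many can occur consecutively, since otherwise~$\rho$ would eventually stay at a modal literal with maximal infinitely-often priority~$1$, contradicting acceptance). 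As~$\rho$ visits $\mu X.\,\psi$ infinitely often,~$P$ is infinite, starts in~$\exists$'s winning region, and follows~$t$ at every $\exists$-node, so~$\exists$ wins~$P$. But the fixpoint-literal nodes visited infinitely often in~$P$ are exactly those visited infinitely often in~$\rho$; in~$P$ the node $(x_i,\mu X.\,\psi)$ carries priority $\alpha(\mu X.\,\psi)-1=2\,\ad(X)-1$, every other fixpoint-literal node visited infinitely often carries priority at most $2\,\ad(X)-2$ (a larger value would make $\alpha(\cdot)>2\,\ad(X)$ an $\autom_\target$-priority visited infinitely often along~$\rho$), and all other nodes carry priority~$0$; hence the highest priority visited infinitely often in~$P$ is the odd number $2\,\ad(X)-1$, so~$\forall$ wins~$P$ --- a contradiction. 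Therefore $\autom_\target$ has no accepting run on~$w$, the run of $\mathsf{B}_\target$ on~$w$ is accepting, and~$\exists$ wins~$\Pi$.

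The step I expect to be the crux is the choice of~$\exists$'s modal move at $(q,1)$ together with the analysis of the induced run: taking the full extensions $\sem{\psi_j}$ in place of the sets~$D_j$ supplied by the model checking strategy~$t$ would still preserve the label invariant, but would fail to be winning, because unguarded least fixpoints could loop back along modal transitions; tying~$\Xi$ to~$t$ and then transporting a hypothetical accepting $\autom_\target$-run on the produced word into a $\forall$-won play of $\game_{\target,(C,\xi)}$ that follows~$t$ --- in particular getting the priority correspondence between $\autom_\target$ and $\game_{\target,(C,\xi)}$ right --- is where the real work lies. The remaining pieces (the subset-construction evolution of labels, the naturality/monotonicity verification of the one-step side condition, and the bookkeeping that skips the propositional self-loops at modal literals) are routine.
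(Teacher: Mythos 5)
Your proposal is correct and follows essentially the same route as the paper's proof: a memory-augmented strategy for~$\exists$ in~$\game_\target$ that maintains a realizer state satisfying the current label, with the propositional choice function and the modal selection set~$\Xi$ both read off from a fixed winning strategy in the model checking game~$\game_{\target,(C,\xi)}$, legality of the modal move established by naturality/monotonicity (the content of Lemma~\ref{lem:oss-stable}), and winningness obtained by transporting any run of~$\autom_\target$ on the produced word into a $t$-compatible play of the model checking game and invoking the duality of priorities. The only difference is presentational: your final acceptance argument (handling the self-loops at modal literals and the exact priority shift $\alpha(\psi)\mapsto\alpha(\psi)-1$) spells out in detail what the paper states rather tersely.
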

\begin{proof}
  Fix a coalgebra $(C,\xi)$ and a state $x_0\in C$ such that
  $x_0\models\target$.  By
  \autoref{thm:satisfaction-game},~$\exists$ has a history-free
  strategy~$s'$ that wins~$(x_0,\target)$ in
  $\game_{\target,(C,\xi)}$. We construct a history-dependent winning
  strategy~$s$ for~$\exists$ in the satisfiability game
  $\game_\target$ that maintains the invariant that in positions
  $(q,b)\in V_\exists=\detcarrier\times\{0,1\}$, there is $x\in C$
  such that
  \begin{equation*}
    \text{for all }\psi\in l^A(q), \text{ $s'$ wins }(x,\psi) \text{ in } \game_{{\target,(C,\xi)}};
  \end{equation*}
  we call such an~$x$ a \emph{realizer} of~$q$. More precisely, we
  keep the realizer in memory, and in each step, we construct the new
  realizer from the previous one; this is why the strategy we
  construct is not history-free.  We write $s((q,b),x)$ for the move
  recommended by~$s$ when in a position $(q,b)\in V_\exists$ with
  realizer~$x$.

  We can pick~$x_0$ as the realizer of~$\initstate$, ensuring that the
  invariant holds initially since $l^A(\initstate)=\{\target\}$ and
  $s'$ wins $(x_0,\target)$. To see that the existential player can
  maintain the invariant, let
  $(q,b)\in V_\exists=\detcarrier\times\{0,1\}$ and let $x\in C$ be a
  realizer of $q$. We distinguish cases on~$b$.

  Case $b=0$: We define $\tau\in\mathsf{choices}$ as follows.  For each
  disjunction $\psi=\psi_1\vee\psi_2\in l^A(q)$,~$s'$ wins $(x,\psi)$
  by the invariant, and we have $s'(x,\psi)=(x,\psi_i)$ for some
  $i\in\{1,2\}$; of course,~$s'$ then wins $(x,\psi_i)$. We put
  $\tau(\psi)=\psi_i$.  For disjunctions $\psi\in\FLtarget$ not
  contained in $l^A(q)$, define $\tau(\psi)$ arbitrarily. We put
  $s((q,0),x)=\delta(q,\tau)$: since the invariant implies in
  particular that $\bot\notin l^A(q))$, this is a valid move.

  To establish the invariant at the new position $\delta(q,\tau)$, we
  pick the original realizer~$x$ of~$q$ as the new realizer of
  $\delta(q,\tau)$. We have to show that for all
  $\psi\in l^A(\delta(q,\tau))$,~$s'$ wins $(x,\psi)$ in in
  $\game_{{\target,(C,\xi)}}$.  By the definition of the tracking
  automaton (Definition~\ref{def:tracking}), such a~$\psi$ arises in
  one of the following ways:
  \begin{itemize}
  \item $\psi$ is a disjunct of disjunction in $l^A(q)$. It was shown
    above that~$s'$ wins $(x,\psi)$ in this case.
  \item $\psi$ is a conjunct (w.l.o.g., the left one) of a formula
    $\psi\land\phi\in l^A(q)$. Then the universal player can move from
    $(x,\psi\land\phi)$ to $(x,\psi)$ in
    $\game_{{\target,(C,\xi)}}$. Since~$s'$ wins $(x,\psi\land\phi)$
    by the invariant,~$s'$ also wins $(x,\psi)$.
  \item $\psi=\psi_1[\eta X.\,\psi_1/X]$ for some
    $\eta X.\,\psi_1\in l^A(q)$. By the invariant,~$s'$ wins
    $(x,\eta X.\,\psi_1)$ in $\game_{{\target,(C,\xi)}}$, so~$s'$
    also wins the unique next position $(x,\psi)$.
  \item $\psi=\hearts\psi_1\in l^A(q)$, Then~$s'$ wins $(x,\psi)$ by
    the invariant.
  \end{itemize}

  Case $b=1$: For each $\hearts\psi\in l^A(q)$,~$s'$ wins
  $(x,\hearts\psi)$ in $\game_{{\target,(C,\xi)}}$ by the invariant.
  Then $s'(x,\hearts\psi)$ has the form
  \begin{equation}\label{eq:U-psi}
    s'(x,\hearts\psi)=(D_{\hearts\psi},\psi)\quad\text{where}\quad \xi(x)\in\sem{\hearts}_C(D_{\hearts\psi}).
  \end{equation}
  Put
  \begin{equation*}\textstyle
    \Xi=\{\kappa\in\mathsf{selections}\mid
     \kappa\subseteq l^A(q)\text{ and }
    \bigcap_{\hearts\psi\in\kappa} D_{\hearts\psi}\neq\emptyset\}.
  \end{equation*}
  For $\kappa\in\Xi$, fix an element
  $y_\kappa\in\bigcap_{\hearts\psi\in\kappa} D_{\hearts\psi}$.  We put
  $s((q,1),x)=(q,\Xi)$. The ensuing moves of the universal player then
  reach a position in~$V_\exists$ of the form $(\delta(q,\kappa),b')$
  where $\kappa\in\Xi$. We pick~$y_\kappa$ as the realizer of
  $\delta(q,\kappa)$. To show that this ensures the invariant, let
  $\psi\in l^A(\delta(q,\kappa))$, that is, $\hearts\psi\in\kappa$ for
  some $\hearts\in\Lambda$; we have to show that~$s'$ wins
  $(y_\kappa,\psi)$. But this follows from the fact that~$s'$ is a
  winning strategy and the universal player can move from
  $s'(x,\hearts\psi)=(D_{\hearts\psi},\psi)$ to
  $(y_\kappa,\psi)$, since $y_\kappa\in D_{\hearts\psi}$.

  It remains to show that
  $((q,1),(q,\Xi))$ is a valid move in $\game_\target$. Recalling that
  \begin{align*}
V_q&=\{a_{\hearts\psi}\mid \hearts\psi\in l^A(q)\}\\
\gamma_q&=\{\hearts a_{\hearts\psi}\mid \hearts\psi\in l^A(q)\}\\
u_q^\kappa&=\{a_{\hearts\psi}\mid \hearts\psi\in \kappa\cap l^A(q)\}\\
\Theta_q^\Xi&=\{u_q^\kappa\mid \kappa\in \Xi\},
  \end{align*} we  have to show that, as discussed
  after~\eqref{eq:node-one-step-pair},~$\Xi$ is large enough to ensure that
    \begin{equation}\label{eq:oss-automaton}
      \sem{\gamma_q}^{H}_1
      \neq\emptyset
    \end{equation}
    where $H=\Theta_q^\Xi$; recall here that by definition,
    $\sem{\gamma_q}^{H}_1=\bigcap_{\hearts
      a_{\hearts\psi}\in\gamma_q}\sem{\hearts}_H\sem{a_{\hearts\psi}}^H_0$.
    We define a labelling $l\colon C\to H$ by
    \begin{equation}\label{eq:lx}
      l(y)=\{a_{\hearts\psi}\in V_q\mid y\in D_{\hearts\psi}\}.
    \end{equation}
    Here, we have to show that for $y\in C$, we indeed have
    $l(y)\in H$, i.e.\ we have to find $\kappa\in\Xi$ such that
    $l(y)=u_q^\kappa$.  This will hold by definition for
    $\kappa:=\{\hearts\psi\mid a_{\hearts\psi}\in l(y)\}$,
    once we show that $\kappa\in\Xi$. The latter means that
    $D:=\bigcap_{a_{\hearts\psi}\in
      l(y)}D_{\hearts\psi}\neq\emptyset$; but $y\in D$ by definition
    of $l(y)$.  We now establish~\eqref{eq:oss-automaton} by showing
    that
    \begin{equation*}
      Fl(\xi(x))\in\sem{\gamma_q}^H_1.
    \end{equation*}
    So let $\hearts a_{\hearts\psi}\in\gamma_q$, that is,
    $\hearts\psi\in l^A(q)$; we have to show
    $Fl(\xi(x))\in\sem{\hearts}_H\sem{a_{\hearts\psi}}^H_0$,
    which by naturality is equivalent to
    $\xi(x)\in\sem{\hearts}_{C}(l^{-1}[\sem{a_{\hearts\psi}}^H_0])=\sem{\hearts}_{C}(D_{\hearts\psi})$;
    but this holds by~\eqref{eq:U-psi}.

    This concludes the proof that~$s$ is a strategy, that is, yields
    legal moves in~$\game_\target$. It remains to show that~$s$ is
    winning. In showing that~$\exists$ can maintain the invariant, we
    have in particular shown that~$\exists$ never gets stuck, and
    hence wins all finite plays. So let~$\pi$ be an infinite play that
    starts at~$v_0$ and follows~$s$; we have to show that~$\exists$
    wins~$\pi$.  As noted after Definition~\ref{defn:satgames},~$\pi$
    gives rise to a run $r$ of~$\mathsf{B}_\target$ on a word
    $w\in\Sigma^\omega$. In more detail, the play~$\pi$ consists of
    concatenated subplays~$\pi_i$, either of the shape
    $q_i,(q_i,0),q_{i+1}$ where $q_{i+1}=\delta(q_i,\sigma_i)$ for
    some $\sigma_i\in\mathsf{choices}$, or of the shape
    $q_i,(q_i,1),(q_i,\Xi),q_{i+1}$ where
    $q_{i+1}=\delta(q_i,\sigma_i)$ for some $\sigma_i\in\Xi$. In this
    notation, $r=q_0,q_1,\ldots$, where $q_0=\initstate$, is the run
    of~$\mathsf{B}_\chi$ on~$w=\sigma_0,\sigma_1,\dots$. Again as
    noted after Definition~\ref{defn:satgames}, the winning objective
    of the existential player in~$\game_\target$ is
    $w\in L(\mathsf{B}_\chi)$. Since~$\mathsf{B}_\target$
    complements~$\mathsf{A}_\target$, we show equivalently that every
    run~$\rho=\psi_0,\psi_1,\dots$ of~$\mathsf{A}_\chi$ on the
    word~$w$, where $\psi_0=\target$, is non-accepting.  From~$\rho$,
    we obtain a play~$\pi'$ of the model checking
    game~$\game_{{\target,(C,\xi)}}$ that starts at $(x_0,\target)$
    and follows~$s'$, hence is won by~$\exists$, and moreover
    induces~$w$ in the sense discussed after
    Definition~\ref{def:mc-games}. Specifically, the positions of the
    form $(x,\psi)$ visited by~$\pi'$ are precisely
    $(x_0,\psi_0),(x_1,\psi_1),\dots$ where~$x_i$ is the realizer
    of~$q_i$ according to the invariant; interceding moves to
    positions of the form $(D,\psi)$ with $D\subseteq C$ are determined
    by~$s'$. Since as noted after Definition~\ref{def:mc-games}, the
    winning objective of~$\exists$ in the model checking game is
    non-acceptance of the associated run of~$\mathsf{A}_\chi$, this
    implies that $\rho$ is non-accepting. 
  \end{proof}
\noindent The results so far are tied up
as follows:
\begin{thm}[Soundness and completeness]\label{thm:soundcomp}
  The following are equivalent:
  \begin{enumerate}
  \item\label{item:win} The existential player wins the position $\initstate$ in $\game_\target$.
  \item\label{item:tableau} There is a tableau for $\chi$.
  \item \label{item:sat} The formula $\target$ is satisfiable.
\end{enumerate}
\end{thm}
\begin{proof}
  The
  implication~\eqref{item:win}$\implies$\eqref{item:tableau} is
  \autoref{lem:games}. We have shown the implication
  \eqref{item:sat}$\implies$\eqref{item:win} in
  \autoref{lem:modeltogame}. We prove
  \eqref{item:tableau}$\implies$\eqref{item:sat}: Let $(W,\delta')$ be a
  tableau for $\target$.
By the existence 
lemma (\autoref{lem:existence}), there is a coherent coalgebra
built over  $(W,\delta')$, which by the truth lemma (\autoref{lem:truth}) is a model for $\target$.
\end{proof}
\noindent
\noindent Our model construction in the proof of
\autoref{lem:existence} moreover yields the same bound on minimum
model size as in earlier work on the coalgebraic
$\mu$-calculus~\cite{CirsteaEA11a,FontaineEA10}:
\begin{cor}[Small-model property]\label{cor:smp}
  Let $\target$ be a satisfiable coalgebraic $\mu$-calculus formula,
  with parameters $n_0,k$ and $k'=\lfloor (k+1)/2\rfloor+1$ as in the
  running notation. Then~$\target$ has a model of size
  $\mathcal{O}(((nk')!)^2)\in 2^{\mathcal{O}(nk\log n)}$.
\end{cor}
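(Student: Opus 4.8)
The plan is to read the bound off our model construction; no new work is needed. By Corollary~\ref{cor:soundcomp}, since~$\target$ is satisfiable there is a tableau $(W,L)$ for~$\target$. By the Existence Lemma (Lemma~\ref{lem:existence}) there is a coherent coalgebra structure~$\xi$ on $\mathsf{states}(W,L)$, and by the Truth Lemma (Lemma~\ref{lem:truth}) we have $\psem{\psi}\subseteq\sem{\psi}$ in $(\mathsf{states}(W,L),\xi)$ for every $\psi\in\FLtarget$. Since $\initstate\in W$ and $l^A(\initstate)=\{\target\}$, the state node $\lceil\initstate\rceil$ lies in $\psem{\target}$, hence in $\sem{\target}$; so $(\mathsf{states}(W,L),\xi)$ is a model of~$\target$, exactly as already observed in the proof of the implication \eqref{item:tableau}$\Rightarrow$\eqref{item:sat} of Corollary~\ref{cor:soundcomp}.

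It then remains only to bound the size of this model. By Definition~\ref{defn:pretab}, the node set~$W$ of a tableau is a subset of the state set~$\detcarrier$ of the co-determinized tracking automaton~$\mathsf{B}_\target$ from~\eqref{eq:det-automaton}, and $\mathsf{states}(W,L)\subseteq W$; hence
\begin{equation*}
  |\mathsf{states}(W,L)|\le|W|\le|\detcarrier|=\mathcal{O}(((n_0k)!)^2),
\end{equation*}
the last equality being the size estimate established for~$\mathsf{B}_\target$ in Section~\ref{sec:tracking}, where $n_0=|\target|$ (this is the~$n$ of the statement). Finally, by Stirling's approximation $\log(m!)=\mathcal{O}(m\log m)$ we have $((n_0k)!)^2\in 2^{\mathcal{O}(n_0k\log(n_0k))}$, and since $k=\ad(\target)\le|\target|=n_0$ (the alternation depth being bounded by the number of fixpoint subformulae) we get $\log(n_0k)\le 2\log n_0=\mathcal{O}(\log n_0)$, so the bound simplifies to $2^{\mathcal{O}(n_0k\log n_0)}$, as claimed.

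The argument is essentially bookkeeping: all the real work has gone into the determinization construction for~$\mathsf{B}_\target$ and into the soundness and completeness proof of Section~\ref{sec:correctness}. The only point that needs a moment's care is that the carrier of the constructed model is carved out of~$\detcarrier$ — so that its size is inherited from the singly exponential size of~$\mathsf{B}_\target$ rather than depending on the size of whatever ambient coalgebra originally witnessed satisfiability — together with the elementary Stirling estimate, using $k\le n_0$, to put the factorial bound into the stated exponential form.
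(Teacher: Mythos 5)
Your proof is correct and follows essentially the same route as the paper, which simply reads the bound off the model constructed in the completeness argument (tableau on a subset of $\detcarrier$, Existence Lemma, Truth Lemma) and inherits the size estimate $\mathcal{O}(((n_0k)!)^2)$ for the co-determinized tracking automaton. The added bookkeeping via Stirling and $k\le n_0$ to obtain the form $2^{\mathcal{O}(nk\log n)}$ is also as intended.
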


\begin{rem}[Polynomially branching models]\label{rem:poly-branching}
  In addition to having an exponentially bounded number of states
  (\autoref{cor:smp}), the models $(V,\xi)$ constructed in the
  above completeness proof are also polynomially branching, provided
  that the logic has the one-step polysize model property, which holds
  in all our running examples (\autoref{rem:ospmp}). By this we
  mean that there is a polynomial~$p$ such that for every $q\in V$,
  there is a subset $V_0\subseteq V$ such that $|V_0|\le p(n)$ and
  $\xi(q)$ has the form $\xi(q)=Fi(t)$ where $i\colon V_0\to V$ is the
  subset inclusion. This property is immediate from the construction
  of coherent coalgebras in the proof of the existence lemma
  (\autoref{lem:existence}), in which $\xi(q)$ is obtained from a
  model of a one-step pair over~$\FLtarget$. With the exception of the
  standard $\mu$-calculus, this bound appears to be new in all our
  example logics. Of course, for graded and Presburger $\mu$-calculi,
  polynomial branching holds only in their coalgebraic semantics,
  i.e.\ over multigraph models but not over Kripke models.
\end{rem}

\section{Lazy Game Solving for the
Coalgebraic \texorpdfstring{$\mu$}{mu}-Calculus}\label{section:alg}

We proceed to show that the satisfiability game introduced in the
previous section can be solved in singly exponential time (under mild
assumptions on the complexity of the underlying one-step
satisfiability problem).  To this end, we introduce a
satisfiability checking algorithm that solves the game
\emph{on-the-fly} (that is, in a lazy fashion), and analyse the
runtime of the algorithm. As mentioned above, the obstacle to be
overcome here is that the game is doubly exponentially large,
specifically has singly exponentially many positions owned by the
existential player but doubly exponentially many owned by the
universal player. We deal with this issue by a characterization of the
existential player's winning region as a nested fixpoint that lives on
an exponential-sized subset of the game positions (those corresponding
directly to states in the co-determinized tracking
automaton~$\mathsf{B}_\target$), with interceding moves absorbed into
the definition of the function whose fixpoint is computed
(\autoref{lem:sat-game-fp}). The satisfiability checking algorithm
may then be understood as computing this fixpoint, respectively
determining whether the root position of the game belongs to the
fixpoint.

We recall that $\initstate\in\detcarrier$ is the initial node of the
co-determinized tracking automaton~$\mathsf{B}_\target$.  The
algorithm expands $\mathsf{B}_\target$ step by step starting from
$\initstate$; the expansion step adds nodes according to all possible
choice functions and all selections of modalities in an unexpanded
node $q$. The order of expansion can be chosen freely, e.g. by
heuristic methods. Optional intermediate game solving steps can be
used judiciously to realize on-the-fly solving.

\begin{alg}[Satisfiability checking]\label{alg:global}
  To decide satisfiability of the input formula~$\target$, initialize
  the sets of \emph{unexpanded} and \emph{expanded} nodes, $U=\{\initstate\}$
  and $Q=\emptyset$, respectively.
\begin{enumerate}
\item Expansion: Choose some unexpanded node $q\in U$, remove~$q$ from
  $U$, and add~$q$ to~$Q$.  Add all nodes in the sets
  $\{\delta(q,\tau)\in\detcarrier\mid\tau\in\mathsf{choices}\}\setminus Q$ and
  $\{\delta(q,\kappa)\in\detcarrier\mid \kappa\in\mathsf{selections},\kappa\subseteq
  l^A(q)\}\setminus Q$ to $U$.
\item Optional solving: Compute $\mathsf{win}^\exists_Q$ and/or $\mathsf{win}^\forall_Q$. If
$\initstate\in\mathsf{win}^\exists_Q$, then return `satisfiable`, if $\initstate\in\mathsf{win}^\forall_Q$,
then return `unsatisfiable`.
\item If $U\neq\emptyset$, then continue with Step~1.
\item Final game solving: Compute $\mathsf{win}^\exists$. If
$\initstate\in\mathsf{win}^\exists$, then return `satisfiable`, otherwise return
`unsatisfiable`.
\end{enumerate}
\end{alg}

\noindent Before analysing the run time behaviour of the algorithm, we
first show how to compute the sets $\mathsf{win}^\exists_Q$ and
$\mathsf{win}^\forall_Q$ in singly exponential time. Put
\begin{equation*}
  N=\prios
\end{equation*}
with~$k'$ as per~\eqref{eq:k-prime}, the number of priorities
in~$\mathsf{B}_\target$ (cf.\ \autoref{sec:tracking}).  We define
$N$-ary set functions $f_Q$ and $g_Q$ that compute one-step
(tn)satisfiability w.r.t.\ their argument sets.  These functions
essentially encode short sequences of moves in~$\game_\target$ leading
from one node in~$\detcarrier$ to the next.

\begin{defi}[Small-step game solving functions]\label{def:os-propagation} For sets $Q\subseteq \detcarrier$
  and $X_1,\ldots, X_{N} \subseteq Q$, we put 
\begin{align*}
  f_Q(X_1,\ldots, X_{N})=&\{q\in Q\mid 
(\gamma_q,\Theta_q^{\Xi(X_{\Omega(q)})} ) \text{ is satisfiable}, \bot\notin l^A(q) \text{ and }\\
  &\qquad \qquad \exists \tau\in\mathsf{choices}.\,
                  \delta(q,\tau)\in X_{\detprio(q)}\}\\
  g_Q(X_1,\ldots, X_{N})=&\{q\in Q\mid 
(\gamma_q,\Theta_q^{\Xi(\overline{X_{\Omega(q)}})})  \text{ is not satisfiable}, \bot\in l^A(q) \text{ or}\\
  &\qquad \qquad \forall \tau\in\mathsf{choices}.\,
                  \delta(q,\tau)\in X_{\detprio(q)}\},
\end{align*}
where
$\Xi(X)=\{\kappa\in\mathsf{selections}\mid \kappa\subseteq l^A(q)\text{ and }\delta(q,\kappa)\in X\}$
and $\overline{X}=\detcarrier\setminus X$ for $X\subseteq\detcarrier$.
\end{defi}
\noindent
Note how $f_Q$ propagates winning positions for~$\exists$
in~$\game_\target$, checking whether~$\exists$ has a response to both
immediate next $\forall$-moves from~$q\in\detcarrier$ (to $(q,0)$ or
$(q,1)$), while~$g_Q$ propagates winning positions for~$\forall$,
checking that $\forall$ wins by moving to either $(q,0)$ or $(q,1)$.

The time required for small-step game solving steps thus depends on
the time complexity of the one-step satisfiability problem. In
\autoref{lem:complexity}, we correspondingly give an estimate of the
overall time complexity of the satisfiability checking algorithm under
the assumption that the strict one-step satisfiability problem is in
\ExpTime.

Next we characterize the winning regions $\mathsf{win}^\exists_Q$ and 
$\mathsf{win}^\forall_Q$ by fixpoint expressions over $\Pow(\detcarrier)$, using
the small-step game solving functions $f_Q$ and $g_Q$, respectively.

\begin{defi}[Fixpoint descriptions of winning regions]\label{def:propagation} Given a set
  $Q\subseteq\detcarrier$, we put
\begin{align*}
\mathbf{E}_Q&=\eta_{N} X_{N}.\,\ldots \eta_1 X_1. f_Q(\mathbf{X}) &
\mathbf{A}_Q&=\overline{\eta_{N}} X_{N}\,\ldots \overline{\eta_1} X_1. g_Q(\mathbf{X}),
\end{align*}
where $\mathbf{X}=X_1,\ldots,X_{N}$ is a vector of variables~$X_i$
ranging over subsets of~$Q$, where $\eta_i= \mu$ for odd $i$,
$\eta_i=\nu$ for even~$i$, and where $\overline{\nu}=\mu$ and
$\overline{\mu}=\nu$.
\end{defi}

\noindent We will show that this fixpoint characterization is indeed
correct, that is, that
\begin{equation*}
  \mathbf{E}_Q=\mathsf{win}^\exists_Q\quad\text{and}\quad
  \mathbf{A}_Q=\mathsf{win}^\forall_Q
\end{equation*}
for $Q\subseteq\detcarrier$.  As
the sets $\mathbf{E}_Q$ and $\mathbf{A}_Q$ grow monotonically
with~$Q$, and since clearly $\mathbf{A}_{\detcarrier}$ is the
complement of $\mathbf{E}_{\detcarrier}$, it suffices to prove that
the winning region $\mathsf{win}^\exists$ in $\game_\target$ coincides
with the set $\mathbf{E}:=\mathbf{E}_{\detcarrier}$.

\begin{lem}\label{lem:sat-game-fp}
For all $q\in\detcarrier$, we have
$q\in\mathbf{E}$ if and only if the existential player wins the position $q$ in
the satisfiability game $\game_\target$.
\end{lem}
\begin{proof}
 The fixpoint
  \begin{equation*}
    \mathbf{E}=\eta_{N} X_{N}.\,\ldots \eta_1 X_1. f_{\detcarrier}(X_1,\ldots,X_{N})
  \end{equation*}
  on~$\detcarrier$ may, as discussed in \autoref{rem:fp-games}, be
  seen as described by a formula $\eta_{N} X_{N}.\,\ldots \eta_1 X_1.$
  $\Diamond(X_1,\ldots,X_{N})$ in a generalized form of the monotone
  $\mu$-calculus, and thus is characterized by the corresponding
  instance of the subformula model checking game. The simple structure
  of the fixpoint allows for further simplification of the
  game. First, all positions of the form $(q,\psi)$ where~$\psi$ is
  not a fixpoint literal or a fixpoint variable (so the next move
  might not be uniquely determined) have
  $\psi=\Diamond(X_1,\dots,X_{N})$; in particular, all such positions
  belong to~$\exists$. Second, positions $(q',X_l)$ reached from such
  a position after $\exists$'s move and a subsequent $\forall$-move
  automatically proceed to $(q',\Diamond(X_1,\dots,X_{N}))$, with~$l$
  being the maximal priority visited on the way. We thus eliminate the
  intermediate positions, and rename positions
  $(q,\Diamond(X_1,\dots,X_{N}))$ into just~$q$; similarly, we omit
  formula annotations on subsets of~$\detcarrier$ played by~$\exists$
  in modal moves. We write $\game_\mathbf{E}$ for the simplified form
  of the game, which is summarized by the following table:
  \begin{center}
    \begin{tabular}{|c|c|c|c|}
      \hline
      position & owner & set of allowed moves & priority\\
      \hline
      $q$ & $\exists$ & $\{(A_1,\dots,A_{N})\in\Pow(\detcarrier)^N\mid q\in f_{\detcarrier}(A_1,\dots,A_{N})\}$ & $0$\\
      $(A_1,\dots,A_{N})$ & $\forall$ & $\{(q,X_l)\mid l\in\{1,\dots,N\},q\in A_l\}$ & $0$\\
      $(q,X_l)$ & $\exists$ & $\{q\}$ & $l$\\ 
      \hline
    \end{tabular}
  \end{center} 
  
  \noindent By correctness of the model checking game
  (\autoref{thm:satisfaction-game}), the claim is thus reduced to
  showing that~$\exists$ wins~$q$ in~$\game_{\mathbf{E}}$
  iff~$\exists$ wins~$q$ in $\game_\target$. We say that~$\exists$ can
  \emph{force} a set $U\subseteq\{0,\dots,N\}\times\detcarrier$ in
  position~$q$ in one of the games if~$\exists$ has a strategy
  ensuring that~$\exists$ does not lose by getting stuck and that the
  pair $(j,q')$ consisting of the next position~$q'\in\detcarrier$
  reached in the play (if any;~$\forall$ might still get stuck) and
  the maximal priority~$j$ encountered on the way to~$q'$, including
  the priority of~$q$ but excluding the priority of~$q'$, lies
  in~$U$. Since every infinite play in~$\game_{\mathbf{E}}$
  or~$\game_{\target}$ infinitely often visits positions
  in~$\detcarrier$, it suffices to show that at
  every~$q\in\detcarrier$,~$\exists$ can force the same sets~$U$ in
  either of the games.

  For one direction, suppose that~$\exists$ can force
  $U\subseteq\{0,\dots,N\}\times\detcarrier$ at~$q$ in
  $\game_\mathbf{E}$ by moving to $(A_1,\dots,A_N)$; in particular,
  $q\in f_\detcarrier(A_1,\dots,A_N)$. In~$\game_\target$,~$\exists$
  then enforces~$U$ at~$q$ as follows.
  \begin{itemize}
  \item First, suppose that~$\forall$ moves from~$q$ to $(q,0)$. Since
    $q\in f_\detcarrier(A_1,\dots,A_N)$, we have $\bot\notin l^A(q)$,
    and there is $\tau\in\mathsf{choices}$ such that
    $\delta(q,\tau)\in A_{\detprio(q)}$. Thus,~$\exists$ can move from
    $(q,0)$ to $\delta(q,\tau)$ in~$\game_\target$, the highest
    priority encountered on the way from~$q$ to $\delta(q,\tau)$ being
    $\Omega(q)$. The pair $(\Omega(q),\delta(q,\tau))$ is in~$U$ as
    required, since in~$\game_{\mathbf{E}}$,~$\forall$ can move from
    $(A_1,\dots,A_N)$ to~$(\delta(q,\tau),X_{\Omega(q)})$, which has
    priority~$\Omega(q)$, and the game then automatically proceeds
    to~$\delta(q,\tau)$.
  \item Second, suppose that~$\forall$ moves to $(q,1)$. Since
    $q\in f_\detcarrier(A_1,\dots,A_N)$, we have
    that the one-step pair $(\gamma_q,\Theta_q^{\Xi(A_{\Omega(q)})})$    
    is satisfiable, recalling that $\Xi(A_{\Omega(q)})=\{\kappa\in\mathsf{selections}\mid \kappa\subseteq l^A(q) \text{ and }
\delta(q,\kappa)\in A_{\Omega(q)}\}$; so~$\exists$ can move to $(q,\Xi(A_{\Omega(q)}))$. After the next
    move by~$\forall$, we thus end up in $\delta(q,\kappa)$ for some
    $\kappa\in\Xi(A_{\Omega(q)})$, with the highest priority encountered on the way
    being~$\Omega(q)$. Since $\kappa\in\Xi(A_{\Omega(q)})$, we have
    $\delta(q,\kappa)\in A_{\detprio(q)}$, so by the same analysis as
    in the previous case, the pair $(\Omega(q),\delta(q,\kappa))$ is
    in~$U$ as required.\medskip
  \end{itemize}

  For the converse direction, suppose that~$\exists$ can force
  $U\subseteq\{0,\dots,N\}\times\detcarrier$ at~$q$ in~$\game_\target$
  by moving to $\delta(q,\tau)$ in case~$\forall$ moves to~$(q,0)$ and
  to $(q,\Xi)$ in case~$\forall$ moves to~$(q,1)$, where
  $\tau\in\mathsf{choices}$ and $\Xi\in\Pow(\mathsf{selections}(q))$. In
  particular, we then have $\bot\notin l^A(q)$ and
  the one-step pair $(\gamma_q,\Theta_q^{\Xi})$ is satisfiable.
  We claim that in~$\game_{\mathbf{E}}$,~$\exists$ then forces~$U$ by
  moving to
  $(\emptyset,\ldots,\emptyset,A_{\Omega(q)},\emptyset,
  \ldots,\emptyset)$ (with $A_{\Omega(q)}$ in position $\Omega(q)$)
  where
  \begin{equation*}\label{eq:y-omega}
    A_{\Omega(q)}=\{\delta(q,\tau)\}\cup
    \delta(q,\Xi).
  \end{equation*}
  We note that this is a legal move, as
  $q\in
  f_{\detcarrier}(\emptyset,\ldots,\emptyset,A_{\Omega(q)},\emptyset,
  \ldots,\emptyset)$ by construction of~$A_{\Omega(q)}$ (and
  \autoref{rem:os-logic}). In~$\game_{\mathbf{E}}$,~$\forall$ then
  necessarily moves to a position $(q',X_{\detprio(q)})$ where
  $q'\in A_{\detprio(q)}$, and the game then automatically proceeds
  to~$q'$, with the maximal priority encountered on the way
  being~$\detprio(q)$. We distinguish cases on~$q'$:
  \begin{itemize}
  \item If $q'=\delta(q,\tau)$, then $(\detprio(q),q')\in U$ as
    required, since~$\delta(q,\tau)$ is~$\exists$'s reply to $(q,0)$
    forcing~$U$ in $\game_\target$.
  \item Otherwise, $q'=\delta(q,\kappa)$ for
    some~$\kappa\in\Xi$. Since~$(q,\Xi)$ is~$\exists$'s reply to
    $(q,0)$ forcing~$U$ in $\game_\target$, and~$\forall$ can move
    from $(q,\Xi)$ to $\delta(q,\kappa)$ in~$\game_\target$, we again
    have $(\detprio(q),q')\in U$ as required. \qedhere
  \end{itemize}
\end{proof}

Having shown how the sets $\mathsf{win}^\exists_Q$ and 
$\mathsf{win}^\forall_Q$ can be computed
by evaluating fixpoint expressions over $\Pow(\detcarrier)$, we 
next analyse the run time behaviour of the introduced algorithm.

\begin{lem}[Time analysis]\label{lem:complexity}
  If the strict one-step satisfiability problem is decidable in time
  $t(n)$, then the above satisfiability checking algorithm runs in
  time $\mathcal{O}(((\prios)!)^{2c}\cdot t(n))$ for some constant~$c$
  if no optional game solving steps are performed, with~$n,k'$ being
  the parameters of the target formula as per
  \autoref{sec:tracking}.
\end{lem}
\noindent (The run time with optional game solving steps is still
singly exponential; in view of the fact that exponential run time of
some fixed strategy on intermediate game solving suffices to obtain the \ExpTime bound
on satisfiability checking, we restrict to the case without optional
game solving for the sake of simplicity.)

\begin{proof}
  The loop of the algorithm expands the co-determinized tracking
  automaton node by node and hence is executed at most
  $|\detcarrier|\in\mathcal{O}(\detsize)$ times.  A single expansion
  step can be implemented in time $\mathcal{O}(2^{n_0})$ since in both
  propositional and modal expansion steps, at most $2^{n_0}$ new nodes
  are added, corresponding to the maximal possible number of choice
  functions and matching selections, respectively.  By
  \autoref{lem:sat-game-fp}, the final solving step can be performed
  by computing a fixpoint of nesting depth $N$ of the
  function~$f$ 
  over $\Pow(\detcarrier)^{N}$.  A single computation of
  $f(\mathbf{X})$ 
  for a tuple $\mathbf{X}\in \Pow(\detcarrier)^{N}$ can be implemented
  in time
  $\lO(|\detcarrier|\cdot(t(n)+2^{n_0}))=\lO(\detsize\cdot(t(n_0)+2^{n_0}))$
  by going through all elements~$q$ of~$\detcarrier$, calling the
  one-step satisfiability checker on $l^A(q)\cap\Lambda(\FLtarget)$,
  and verifying the existence of a suitable choice letter.  Since we
  have $N =\lO( \log|\detcarrier|)$, it follows from recent work on
  the computation of nested fixpoints~\cite{HausmannSchroder21} that
  these fixpoints can be computed in time
  $\mathcal{O}((n_0k')!^{2c}\cdot t(n_0))$, where $c=5$. (Classical
  methods for computing nested fixpoints~\cite{LongEA94,Seidl96} are
  exponential in the nesting depth, which however still leads to a
  singly exponential overall time bound, computed explicitly in the
  conference version of the paper~\cite{HausmannSchroder19}.) Thus the
  complexity of the whole algorithm is dominated by the complexity of
  the final game solving step, and adheres to the claimed asymptotic
  bound.
\end{proof}

\noindent Relying on 
 the correctness of satisfiability games as shown in 
 \autoref{sec:satgames} above,
we obtain the following results.
\begin{thm}[Exponential-time upper bound]\label{thm:exptime}
  If the strict one-step satisfiability problem of a coalgebraic logic
  is in {\ExpTime}, then the satisfiability problem of the corresponding
  coalgebraic $\mu$-calculus is in \ExpTime.
\end{thm}

\noindent Since as discussed in \autoref{rem:rules} below, the existence
of a tractable set of tableau rules implies the required time bound on
one-step satisfiability, the above result subsumes earlier bounds
obtained by tableau-based approaches
in~\cite{CirsteaEA11a,HausmannEA16,HausmannEA18}; however, it covers
additional example logics for which no suitable tableau rules are
known. In particular, by \autoref{ex:onestepsat}, we have:
\begin{prop}\label{prop:expls}
The satisfiability problems of the following logics are in \ExpTime:
\begin{enumerate}
\item the standard $\mu$-calculus,
\item the monotone $\mu$-calculus (including its fragment game logic), 
\item the graded $\mu$-calculus, 
\item the (two-valued) probabilistic $\mu$-calculus,
\item the graded $\mu$-calculus with polynomial inequalities,
\item the (two-valued) probabilistic $\mu$-calculus with polynomial inequalities.
\end{enumerate}
\end{prop}
\begin{rem}[Modal tableau rules]\label{rem:rules}
  As indicated in the introduction, previous generic algorithms for
  the coalgebraic $\mu$-calculus~\cite{CirsteaEA11} employ tractable
  sets of modal tableau rules~\cite{SchroderPattinson09b} in place of
  one-step satisfiability checking. This method roughly works as
  follows. A \emph{(monotone) modal tableau rule} over a finite
  set~$W$ of propositional variables (local to the rule) has the form
  $\phi/\psi$ where $\psi$ is a propositional formula over~$W$, given
  in disjunctive normal form as a subset of~$\Pow(W)$, and $\phi$ is a
  finite subset of $\Lambda(W)$, representing a finite conjunction,
  that mentions every variable in~$W$ exactly once. Given a set~$V$, a
  \emph{rule match} to a finite subset~$\osOne$ of $\Lambda(V)$ is a
  pair $(\phi/\psi,\iota)$ consisting of a rule $\phi/\psi$ and a
  substitution~$\iota\colon W\to V$ that acts injectively on~$\phi$
  (i.e.\ for $\hearts a,\hearts b\in\phi$, $\iota(a)=\iota(b)$ implies
  $a=b$) such that $\phi\iota\subseteq \osOne$, where we write
  application of the substitution~$\iota$ in postfix notation. In the
  notation of the present paper, a set~$\Rules$ of modal tableau rules
  is \emph{one-step tableau sound and complete} if the following
  condition holds for each one-step pair $(\osOne,\osZero)$ over~$V$:
  The pair $(\osOne,\osZero)$ is satisfiable iff
  $\psi\iota\cap \osZero\neq\emptyset$ for each rule
  match~$(\phi/\psi,\iota)$ as above to~$\osOne$; note that applying
  the substitution~$\iota\colon W\to V$ to~$\psi\subseteq\Pow(W)$
  yields a propositional formula~$\psi\iota$ over~$V$ that is
  represented as a subset of~$\Pow(V)$. 

  A rule set~$\Rules$ is \emph{exponentially tractable} if rule
  matches can be encoded as strings in such a way that every rule
  match to a given finite set $\osOne\subseteq\Lambda(V)$ has a code
  of polynomial size in~$\size(\osOne)$, and moreover~(i) it can be
  decided in exponential time whether a given code actually encodes a
  rule match to~$\osOne$ and~(ii) the conclusion $\psi\iota$ of a
  rule match $(\phi/\psi,\iota)$ can be computed from its code in
  exponential time. Using an exponentially tractable rule set, we can
  decide the strict one-step satisfiability problem in exponential
  time: Given a one-step pair $(\osOne,\osZero)$, go through all codes
  of possible rule matches, filtering for actual matches
  $(\phi/\psi,\iota)$, and check for each such match that
  $\psi\iota\cap \osZero\neq\emptyset$. Thus, the present approach
  applies more generally than the approach via modal tableau
  rules. See also~\cite{KupkeEA22} for a more detailed discussion of
  the relationship.

  \label{rem:graded-rules}
  Tractable sets of tableau rules for the graded $\mu$-calculus and
  the Presburger $\mu$-calculus have been claimed in previous
  work~\cite{SchroderPattinson09b,KupkePattinson10}. However, these
  rule sets have since turned out to be incomplete. Indeed the rule
  sets are very similar to rule sets for real-valued systems, and
  remain sound over an evident real-valued relaxation of the
  semantics, an observation from which concrete examples showing
  incompleteness are obtained rather immediately both for the
  Presburger~\cite[Remark 3.8]{KupkeEA22} and for the graded
  case~\cite[Appendix of extended version]{GorlitzEA23}.
\end{rem}

\begin{rem}[$\Lambda$-automata]\label{rem:flv}
  As indicated in \autoref{sec:intro}, the satisfiability game
  considered by Fontaine et al.~\cite{FontaineEA10} actually checks
  emptiness of so-called $\Lambda$-automata, into which formulae of
  the coalgebraic $\mu$-calculus can be translated with only
  polynomial blow-up. Non-emptiness of a
  $\Lambda$-automaton~$\mathbb{A}$ with set~$A$ of states is checked
  using a game $\Sat(\mathbb{A})$ that has pairs of automata states
  from~$A$ as $\exists$-positions, and sets of binary relations on~$A$
  as $\forall$-positions. The winning condition of $\Sat(\mathbb{A})$
  is regular but not parity, so winning strategies in
  $\Sat(\mathbb{A})$ need to depend on memory states $m\in M$ from an
  exponential-sized set~$M$. The model construction then has states
  being pairs $(v,m)$ consisting of an $\exists$-position~$v$ and a
  memory state $m\in M$ (while our model construction uses states from
  the co-determinized tracking automaton). As we indicate in the
  introduction, it does not seem likely that our approach to solving a
  doubly-exponential-sized satisfiability game in singly exponential
  time will in general transfer to $\Sat(\mathbb{A})$ (for the case
  where a tractable set of tableau rules is known, $\Sat(\mathbb{A})$
  has been reformulated to be solvable in exponential
  time~\cite[Section~5]{FontaineEA10}), as none of the two types of
  positions in $\Sat(\mathbb{A})$ has the right size (there are doubly
  exponentially many $\forall$-positions and polynomially many
  $\exists$-positions). Also, our fixpoint description relies on the
  fact that our game has a parity winning condition, and it is not
  clear how it would transfer to a regular game.
\end{rem}

\begin{rem}[One-step satisfiability and fusion]\label{rem:fusion-oss}
  The criterion of \autoref{thm:exptime} is stable under fusion
  of logics; that is: Suppose that the strict one-step satisfiability
  problems for logics with disjoint modal similarity types~$\Lambda_i$
  interpreted over functors~$F_i$, for $i=1,2$, are both in
  \ExpTime. Then the strict one step satisfiability problem of the
  fusion (\autoref{rem:fusion}), with modal similarity type
  $\Lambda=\Lambda_1\cup\Lambda_2$ interpreted over $F=F_1\times F_2$,
  is in \ExpTime as well. To see this, just note that a one-step pair
  $(\osOne,\osZero)$ over~$V$ in the fusion is satisfiable over~$F=F_1\times F_2$
  iff for $i=1,2$. the one-step pair $(\osOne\cap\Lambda_i(V),\osZero)$ over~$V$
  is satisfiable over~$F_i$.

  Thus, we obtain by \autoref{thm:exptime} that the
  satisfiability problem of any combination of the logics mentioned in
  \autoref{prop:expls} is in \ExpTime; for instance, this
  holds for the logic of Markov decision processes described in
  \autoref{ex:logics}.\ref{item:comp}.
\end{rem}

\section{Conclusion}\label{section:conclusion}

We have shown that the satisfiability problem of the coalgebraic
$\mu$-calculus is in \ExpTime, subject to establishing a suitable time
bound on the much simpler one-step satisfiability problem. Our method
does not require guardedness of fixpoint variables. Prominent examples
include the graded $\mu$-calculus, the monotone $\mu$-calculus and its
fragment game logic, the (two-valued) probabilistic $\mu$-calculus,
and extensions of the probabilistic and the graded $\mu$-calculus,
respectively, with (monotone) polynomial inequalities; the \ExpTime
bound appears to be new for the last two logics.  We have also
presented a generic satisfiability algorithm that realizes the time
bound and supports on-the-fly solving in the spirit of global caching
algorithms. Moreover, we have obtained a polynomial bound on minimum
branching width in models for all example logics mentioned above.

\bibliographystyle{alphaurl}
\bibliography{coalgml}
\end{document}